\theoremstyle{theorem}
\newtheorem{proposition}{Proposition}
\theoremstyle{definition}
\newtheorem{definition}{Definition}
\newtheorem{problem}{Problem}
\theoremstyle{remark}
\begin{document}

\title{Hyperparameter selection for Discrete Mumford-Shah \thanks{This work is supported by ANR-19-CE48-0009s MULTISC-IN.}}

\author{Charles-G\'erard Lucas, Barbara Pascal, Nelly Pustelnik, Patrice Abry
\thanks{C.-G. Lucas, N. Pustelnik and P. Abry are with ENSL, CNRS, Laboratoire de physique, F-69342 Lyon, France (e-mail: firstname.surname@ens-lyon.fr).
B. Pascal is with Nantes Université, École Centrale Nantes, CNRS, LS2N, UMR 6004, F-44000 Nantes, France (e-mail: barbara.pascal@cnrs.fr).}
}
\maketitle

\begin{abstract}
This work focuses on a parameter-free joint piecewise smooth image denoising and contour detection. Formulated as the minimization of a discrete Mumford-Shah functional and estimated \textit{via} a theoretically grounded alternating minimization scheme,  the bottleneck of such a variational approach lies in the need to fine-tune their hyperparameters, while not having access to ground truth data.
To that aim, a Stein-like strategy providing optimal hyperparameters is designed, based on the minimization of an unbiased estimate of the quadratic risk.
Efficient and automated minimization of the estimate of the risk crucially relies on an unbiased estimate of the gradient of the risk with respect to hyperparameters. Its practical implementation is performed using a \textit{forward} differentiation of the alternating scheme minimizing the Mumford-Shah functional, requiring exact differentiation of the proximity operators involved. 
Intensive numerical experiments are performed on synthetic images with different geometry and noise levels, assessing the accuracy and the robustness of the proposed procedure.
The resulting \textit{parameter-free piecewise-smooth estimation} and contour detection procedure, not requiring prior image processing expertise nor annotated data, can then be applied to real-world images.
\end{abstract}

\section{Introduction}

\noindent \textbf{Context} -- Image processing is characterized by several key tasks such as image recovery (e.g., debluring and/or denoising), feature extraction, segmentation, and contour detection, to name a few. 
To provide the user with the requested information, it is standard to perform successively a certain number of these tasks.  
A first major drawback of cascading tasks, is that important information might be thrown away at each stage.
A second key issue is that each task might introduce estimation variance and/or regularization bias, which may accumulate and lead to subsequent errors on the target quantity.
Finally, the selection of hyperparameters, e.g., regularization parameters, needs to be performed for each task independently, which might turn sub-optimal overall in minimizing the final error on the output estimate. 

The benefit of performing jointly several steps has been illustrated in the context of texture segmentation~\cite{Pascal_B_2021_j-acha}, providing a comparison between a two-step procedure (extract relevant local texture features  followed  by segmentation) against an original single-step procedure intertwining the estimation of relevant features and the segmentation procedure. Both strategies lead to strongly convex optimization schemes and fair comparisons can be provided by having recourse to an automatic hyperparameters selection procedure relying on Stein Unbiased Risk Estimator \cite{Stein_C_1981_j-annals-statistics_estimation_mmnd}.

Along this line, recent contributions in the image processing literature were dedicated to joint image denoising/restoration and contour detection \cite{Storath_M_2014_ieee-tsp_jssrupf, zach_discretized_2017, Kiefer_L_2020_ipol} via biconvex proximal optimization schemes.  However, the automatic tuning of hyperparameters in this context has not been dealt with yet: this is the object of the present contribution. 

\noindent \textbf{D-MS} -- This work focuses on a bi-convex  formulation, refered as Discrete Mumford-Shah (D-MS) functional, that can trace back to the Mumford-Shah~\cite{mumford1989optimal} or Geman and Geman functionals~\cite{geman1984stochastic}, aiming to perform joint image denoising and contour detection, which may be written in the discrete variational formulation setting as:
\begin{equation} 
\label{eq:dms_univariate}
\underset{\boldsymbol{u} \in \mathbb{R}^{\lvert \Omega \rvert}, \boldsymbol{e}\in \mathbb{R}^{\vert \mathcal{E}\vert}}{\textrm{min}}   \frac{1}{2} \Vert \boldsymbol{u} - \boldsymbol{z} \rVert_2^2 + \beta \lVert (1-\boldsymbol{e}) \odot \boldsymbol{\mathrm{D}} \boldsymbol{u} \rVert_2^2 +  \lambda h(\boldsymbol{e}),
\end{equation}
where $\boldsymbol{z} = \overline{\boldsymbol{u}} + \sigma \boldsymbol{\zeta} \in \mathbb{R}^{\lvert \Omega \rvert}$ with $\boldsymbol{\zeta}\sim \mathcal{N}(\boldsymbol{0}_{\lvert \Omega \rvert}, \textbf{I}_{\lvert \Omega \rvert})$ denotes the observed degraded image, defined on a grid of pixels $\Omega$ such that $\lvert \Omega \rvert=N$, and $\sigma>0$ is the \textit{known} standard-deviation of the noise. The variable $\boldsymbol{e}$ is a discrete field defined on the lattice $\mathcal{E}$, encapsulating the contour information, whose values are 1 when a contour is detected and 0 otherwise, 
$\boldsymbol{\mathrm{D}}\colon \mathbb{R}^{\lvert \Omega \rvert} \to \mathbb{R}^{\vert \mathcal{E}\vert}$ is a discrete difference operator such that $\boldsymbol{\mathrm{D}}\boldsymbol{u}$ lives on a lattice of contours $\mathcal{E}$,  
$\odot$ denotes the component-wise product, $h$ denotes a convex separable function enforcing sparsity having its minimum in $0$ and such that, for every $\boldsymbol{e} = (e_i)_{1\leq i \leq \mathcal{E}}$, $h(\boldsymbol{e}) = \sum_i h_i(e_i)$. $\beta>0$ and $\lambda>0$ are regularization parameters. The minimization is performed with SL-PAM, a nonconvex alternated minimization scheme,  with descent parameters genuinely chosen to ensure fast convergence \cite{foare2019semi} and recalled in Algorithm~\ref{alg:dms0} in the Appendix~\ref{alg:dms0}. An exhaustive state-of-the-art regarding variations of the minimization problem~\eqref{eq:dms_univariate} is provided in Appendix~\ref{ssec:SL-PAM}.

\noindent \textbf{Hyperparameter selection }-- The aforementioned procedure for image denoising and contour detection involves \textit{hyperparameters}, e.g. $\beta$ and $\lambda$ in~\eqref{eq:dms_univariate}.
To reach satisfactory performance, the fine-tuning of these parameters is crucial.
Although central in signal and image processing, this difficult task is still an ongoing challenge, particularly for {variational} methods.

The difficult problem of the selection of the regularization parameters of the D-MS functional is addressed considering a Stein Unbiased Risk Estimate (SURE), combined with a Finite Difference Monte Carlo (FDMC) strategy making its practical computation tractable. For a detailed state-of-the-art dedicated to hyperparameter selection, the reader could refer to Appendix~\ref{sec:soahyper}.  
The optimal regularization parameters obtained by minimizing FDMC SURE \textit{via} exhaustive grid search are shown to lead to denoised estimates with high \textit{signal-to-noise ratio} and relevant contours.

Then, to provide a fast procedure selecting the regularization parameters, a Stein Unbiased GrAdient Risk estimate (SUGAR) adapted to D-MS functional~\eqref{eq:dms_univariate} is designed, involving the Jacobian of the parametric estimator obtained from~\eqref{eq:dms_univariate}.
Practical implementation of SUGAR requires iterative differentiation of the SL-PAM minimization scheme, for which closed-form formulas are provided. 
An averaging Monte Carlo strategy is discussed, providing a robust FDMC SUGAR estimator.
The resulting procedure compares favorably against exhaustive grid search in terms of  \textit{signal-to-noise ratio}, while requiring a significantly smaller computational cost.
To the best of our knowledge, the proposed automated D-MS bi-level scheme constitutes a first automated, prior-free and fast discrete Mumford-Shah-like formalism with automated selection of regularization parameters.

\noindent \textbf{Outline} --   
The proposed automated and fast procedure is described in Section~\ref{sec:hyperparameters}. Numerical experiments are provided in Section~\ref{sec:num_exp_BP}.

\section{Hyperparameter selection for D-MS}
\label{sec:hyperparameters}

\subsection{Stein estimators for D-MS }\label{sec:specifDMS}

The minimization of the D-MS functional of Eq.~\eqref{eq:dms_univariate} provides both a piecewise smooth image reconstruction, denoted $\widehat{\boldsymbol{u}}(\boldsymbol{z}; \beta, \lambda)$ and a set of detected contours, encapsulated into $\widehat{\boldsymbol{e}}(\boldsymbol{z}; \beta, \lambda)$, depending on the choice of hyperparameters $\Theta = (\beta, \lambda) \in \mathbb{R}_+ \times \mathbb{R}_+$. 
In such a context, we should ideally minimize a \textit{global} error criterion 
\begin{align}
\label{eq:def_quad_riskue}
\widehat{\Theta} \in \underset{\Theta}{\mathrm{Argmin}} \; \mathrm{d}( \widehat{\boldsymbol{x}}(\boldsymbol{z};\Theta), \overline{\boldsymbol{x}}),
\end{align}
measuring the ability of the piecewise-smooth image and contour estimates \\
$\widehat{\boldsymbol{x}}(\boldsymbol{z};\Theta) = \left(\widehat{\boldsymbol{u}}(\boldsymbol{z}; \beta, \lambda),\widehat{\boldsymbol{e}}(\boldsymbol{z}; \beta, \lambda)\right)$ to approximate the original data $\overline{\boldsymbol{x}} = \left(\overline{\boldsymbol{u}},\overline{\boldsymbol{e}}\right)$, using a measure of similarity $\mathrm{d}$.
If $\mathrm{d}$ is chosen to be a quadratic risk, it reads:
\begin{align}
\label{eq:def_quad_riskue}
\mathrm{d}( \widehat{\boldsymbol{x}}(\boldsymbol{z};\Theta), \overline{x}) =  \mathbb{E} [\Vert \widehat{\boldsymbol{u}}(\boldsymbol{z};\Theta) - \overline{\boldsymbol{u}} \Vert_2^2 ] + \zeta \mathbb{E}[\Vert \widehat{\boldsymbol{e}}(\boldsymbol{z};\Theta) - \overline{\boldsymbol{e}} \Vert_2^2],
\end{align}
where $\zeta\geq 0$. 
However, the degradation model only describes how the observed image $\boldsymbol{z}$ relates to the ground truth image $\overline{\boldsymbol{u}}$,  no prior knowledge about how the ground truth contours $\overline{\boldsymbol{e}}$ are affected by the observation noise being assumed. 
Further, measuring the accuracy of the contours is a tedious task, involving complicated criteria, such as the Jaccard index~\cite{jaccard1901distribution}. 
For these reasons, in the present work, the quadratic error on which the choice of hyperparameter relies is chosen to be the \textit{quadratic estimation error on the reconstructed image} (i.e. $\zeta=0$).

The present work focuses on a strategy combining Finite Difference approximated differentiation and Monte Carlo averaging, which was first described by~\cite{ramani2008monte}. 
Making use of a Finite Difference step  $\epsilon > 0$ and a Monte Carlo vector $ \boldsymbol{\delta} \in \mathbb{R}^N$  drawn from $ \mathcal{N}(\boldsymbol{0}_N,\textbf{I}_N)$,  Finite Difference Monte Carlo (FDMC) SURE  is defined as:
    \begin{align}
    \label{eq:fdmcsure}
        \mathrm{SURE}_{\epsilon, \boldsymbol{\boldsymbol{\delta}}}(\boldsymbol{z};\Theta \lvert \sigma^2) := \| ( \widehat{\boldsymbol{u}}(\boldsymbol{z};\Theta) - \boldsymbol{z}) \|_2^2 +  \frac{2}{\epsilon} \langle \widehat{\boldsymbol{u}}(\boldsymbol{z} + \epsilon \boldsymbol{\delta};\Theta) - \widehat{\boldsymbol{u}}(\boldsymbol{z};\Theta), \sigma^2 \boldsymbol{\delta} \rangle - \sigma^2 N.
    \end{align}
    It involves the denoised image $\widehat{\boldsymbol{u}}(\boldsymbol{z};\beta, \lambda)$, obtained from the minimization of \eqref{eq:dms_univariate}, and the standard deviation of the noise $\sigma$. 
Under technical assumptions detailed in Appendix~\ref{ssec:SURE}, the \textit{true} inaccessible quadratic risk estimator~\eqref{eq:def_quad_riskue} when $\zeta=0$ satisfies the following asymptotic unbiasedness property:
    \begin{equation}
    \label{eq:sure_thm}
        \underset{\epsilon \longrightarrow 0}{\lim} \mathbb{E} [\mathrm{SURE}_{\epsilon, \boldsymbol{\delta}}(\boldsymbol{z};\Theta \lvert \sigma^2)] = \mathbb{E}[\Vert \widehat{\boldsymbol{u}}(\boldsymbol{z};\Theta) - \overline{\boldsymbol{u}} \Vert_2^2].
    \end{equation}
Then, the design of a fast \textit{gradient}-based hyperparameter selection strategy providing optimal hyperparameter from the minimization of~\eqref{eq:fdmcsure}  requires an unbiased estimate $\partial_{\Theta} \mathrm{SURE}_{\epsilon,\boldsymbol{\boldsymbol{\delta}}}(\boldsymbol{z};\Theta \lvert \sigma^2)$ of the \textit{gradient} of the quadratic risk with respect to hyperparameters $\Theta$.
Such a general procedure is sketched in Algorithm~\ref{alg:mininize_sure} and relies on the FDMC Stein Unbiased GrAdient Risk (SUGAR) estimate  defined as:
    \begin{align}
    \label{eq:fdmcsugar2}
           \!\!\!\! \!\!\mathrm{SUGAR}_{\epsilon, \boldsymbol{\delta}}(\boldsymbol{z};\Theta \lvert \sigma^2 ) = 2\partial_{\Theta}\widehat{\boldsymbol{u}}(\boldsymbol{z};\Theta)^*(\widehat{\boldsymbol{u}}(\boldsymbol{z};\Theta)-\boldsymbol{z}) +  \frac{2}{\epsilon} \left(\partial_{\Theta}\widehat{\boldsymbol{u}}(\boldsymbol{z} + \epsilon \boldsymbol{\delta};\Theta) -\partial_{\Theta}\widehat{\boldsymbol{u}}(\boldsymbol{z};\Theta) \right)^* \sigma^2 \boldsymbol{\delta},
        \end{align}
where $\partial_{\Theta}\widehat{\boldsymbol{u}}(\boldsymbol{z};\Theta)$ denotes the Jacobian of the parametric estimator $\widehat{\boldsymbol{u}}(\boldsymbol{z};\Theta)$ with respect to the hyperparameters $\Theta$. The first proposal of such Stein Unbiased GrAdient Risk (SUGAR) estimate  was formulated by~\cite{deledalle2014stein} for i.i.d. Gaussian noise, and then extended in~\cite{pascal2020automated} for correlated noise. The main difficulty when it comes to practical implementation  is to evaluate the Jacobian matrices. 

{\color{black}
      \begin{algorithm}[htbp]
            \caption{Automated selection of hyperparameters.\label{alg:mininize_sure}}
            \begin{algorithmic}
            \State \textbf{Input:} Data $\boldsymbol{z}$, $\epsilon$, $\boldsymbol{\delta}$, and true or estimated $\sigma^2$.  
            \State \textbf{Initialization:} Set $\Theta^{[0]} \in \mathbb{R}^L $. 
            \State \textbf{For } $t= 0$ \textbf{to} $T_{\max}-1$ \textbf{do}
            
            $\left \lfloor \begin{array}{l} 
             \mbox{Compute $ \mathrm{SURE}_{\epsilon, {\boldsymbol{\delta}}}(\boldsymbol{z};\Theta^{[t]} \lvert \sigma^2)$} \\
             \mbox{Compute $\mathrm{SUGAR}_{\epsilon, \boldsymbol{\delta}}(\boldsymbol{z};\Theta^{[t]} \lvert \sigma^2 )$} \\
             \mbox{Update $\Theta^{[t]} $ to $\Theta^{[t+1]}$ \textit{via} a gradient} \\
             \mbox{descent step}
              \end{array} \right.$ \\
                   \State \textbf{Output:}  $\Theta^* =  \Theta^{[T_{\rm{max}}]}$
        	\end{algorithmic}
        \end{algorithm}

\subsection{Differentiated SL-PAM}
 
Practical evaluation of the risk and gradient of the risk estimates from Eq.~\eqref{eq:fdmcsure}~and~\eqref{eq:fdmcsugar2}, requires to compute the Jacobian of the D-MS estimator.
No closed-form expression being available for $\widehat{\boldsymbol{u}}(\boldsymbol{z}; \beta, \lambda)$, the derivatives are obtained from the iterative differentiation of the recursive scheme of DMS-SLPAM, Algorithm~\ref{alg:dms0} in Appendix~\ref{ssec:SL-PAM}. 
This strategy raises several technical issues.
Indeed, following~\cite{deledalle2014stein}~and~\cite{pascal2020automated}, the Jacobian matrices $\partial_{\Theta} \widehat{\boldsymbol{u}}(\boldsymbol{z}; \beta, \lambda)$ and $\partial_{\Theta} \widehat{\boldsymbol{u}}(\boldsymbol{z}+ \epsilon \boldsymbol{\delta}; \beta, \lambda)$  are  computed iteratively from a \textit{differentiated} recursive scheme.
Particularized to the case of D-MS estimates,  the chain differentiation of the SL-PAM scheme (cf. Algorithm~\ref{alg:dms0} in Appendix~\ref{ssec:SL-PAM}) is derived in Algorithm~\ref{dslpam}.
For ease of computation, a specific choice of the step-size $d_k = \eta \beta \Vert  \boldsymbol{\mathrm{D}} \Vert^2$ involved in the update of the variable $\boldsymbol{e}$ is considered, without inducing any loss of generality.

                \begin{algorithm}[htbp]
            \caption{Iterative differentiation of SL-PAM \label{dslpam}}
            \begin{algorithmic}
            \State  \textbf{Input:} Data $\widetilde{\boldsymbol{z}}=\{\boldsymbol{z},\boldsymbol{z}+ \epsilon \boldsymbol{\delta}\}$. Set $\Theta = (\beta, \lambda) \in \mathbb{R}_+ \times \mathbb{R}_+$.
             \State  \textbf{Initialization}: $\boldsymbol{u}^{[0]}=\widetilde{\boldsymbol{z}}$, $\boldsymbol{e}^{[0]}= \boldsymbol{1}_{\lvert \mathcal{E}\rvert}$, \\
             $\partial_\Theta  \widetilde{\boldsymbol{u}}^{[0]} = \partial_\Theta \boldsymbol{u}^{[0]} =  \boldsymbol{0}_N$, $ \partial_\Theta  \widetilde{\boldsymbol{e}}^{[0]} = \partial_\Theta  \boldsymbol{e}^{[0]} =\boldsymbol{0}_{\lvert \mathcal{E}\rvert}$.
             \State  Set $\gamma > 1$ and $\eta > 0$. 
            \State \textbf{While} $\vert \Psi(\boldsymbol{u}^{[k+1]},\boldsymbol{e}^{[k+1]})-\Psi(\boldsymbol{u}^{[k]},\boldsymbol{e}^{[k]}) \vert > \xi$ \\
            	$\left \lfloor \begin{array}{l}
            	
            \mbox{Set $c_k = \gamma  \beta \Vert  \boldsymbol{\mathrm{D}} \Vert^2$ and $d_k = \eta \beta \Vert  \boldsymbol{\mathrm{D}} \Vert^2$}\; \\
            \widetilde{\boldsymbol{u}}^{[k+1]} = \boldsymbol{u}^{(k)}-\frac{1}{c_k}\nabla_{\boldsymbol{u}} g(\boldsymbol{u}^{[k]},\boldsymbol{e}^{[k]})\; \\
            \boldsymbol{u}^{[k+1]} = \mathrm{prox}_{\frac{1}{c_k} f(\cdot; \widetilde{\boldsymbol{z}})} \left(\widetilde{\boldsymbol{u}}^{[k+1]}\right)\;  \\
\mbox{Compute $\partial_\Theta  \widetilde{\boldsymbol{u}}^{[k+1]}$ from Eq. \eqref{eq:u_tilde_diff}}\; \\
            \mbox{Compute $\partial_\Theta \boldsymbol{u}^{[k+1]}$ from Eq. \eqref{eq:u_diff}}\; \\
                       \mbox{For all $  i \in \{1,\ldots ,\vert \mathcal{E} \vert\} $} \\
          \left \lfloor \begin{array}{l} 
             \widetilde{e}_i^{[k]} =\frac{\beta \displaystyle (\boldsymbol{\mathrm{D}}_i \boldsymbol{u}^{[k+1]})^2 + \frac{d_k e_i^{[k]} }{2}}{\beta \displaystyle (\boldsymbol{\mathrm{D}}_i \boldsymbol{u}^{[k+1]})^2 + \frac{d_k}{2}} \\
             e_i^{[k+1]} = \mathrm{prox}_{ \frac{\lambda }{2 \beta (\boldsymbol{\mathrm{D}}_i \boldsymbol{u}^{[k+1]})^2 + d_k} {h_i}} (\widetilde{e}_i^{[k]}) \\
            \mbox{Compute $\partial_\Theta  \widetilde{e}_i^{[k+1]}$ from Eq. \eqref{eq:e_tilde_diff_distinct}}\; \\
            \mbox{Compute $\partial_\Theta  e_i^{[k+1]}$  from Eq.  \eqref{eq:e_diff_distinct}}
                                       \end{array} \right. \\
            \end{array} \right.$
            \end{algorithmic}
        \end{algorithm}

\noindent \textbf{General procedure --}          
The purpose is to differentiate the mapping $\Theta \mapsto \left(\widehat{\boldsymbol{u}}(\boldsymbol{z};\Theta),\widehat{\boldsymbol{e}}(\boldsymbol{z};\Theta)\right)$, where the estimates $\left(\widehat{\boldsymbol{u}}(\boldsymbol{z};\Theta),\widehat{\boldsymbol{e}}(\boldsymbol{z};\Theta)\right)$ are obtained solving~ \eqref{eq:dms_univariate} for \textit{fixed} $\boldsymbol{z}$.

The recursive \textit{chain differentiation} consists in differentiating step by step DMS-SLPAM each update of which can  
be written as $\mathbf{v}(\boldsymbol{z};\Theta) = \Gamma(\textbf{u}(\boldsymbol{z};\Theta),\textbf{e}(\boldsymbol{z};\Theta),\tau(\Theta))$, where  $\textbf{u}: \mathbb{R}^N \times\mathbb{R}^L \rightarrow \mathbb{R}^N$, $\textbf{e}: \mathbb{R}^N \times \mathbb{R}^L \rightarrow \mathbb{R}^{\vert  \mathcal{E} \vert}$ and $\tau: \mathbb{R}^L \rightarrow \mathbb{R}$ are functions of the observed noisy image $\boldsymbol{z}$ and of the hyperparameters $\Theta$, with respect to which the differentiation is to be performed and $\mathbf{v}(\boldsymbol{z};\Theta) \in \mathcal{K}$, where $\mathcal{K} = \mathbb{R}^N$ when updating $\boldsymbol{u}$ or $\widetilde{\boldsymbol{u}}$, and  $\mathcal{K} = \mathbb{R}^{\lvert \mathcal{E} \rvert}$ when updating $\boldsymbol{e}$ or $\widetilde{\boldsymbol{e}}$.

Then, applying the chain rule differentiation principle yields the following partial derivative expression, for every component $\theta$ of the hyperparameter vector $\Theta$ and for every index $j \in\{1,\ldots, \mathrm{dim}(\mathcal{K})\}$:
        \begin{align}
        \label{eq:chain_rule}
             \partial_{\theta} \mathrm{v}_{j}   = \sum_{\ell = 1}^{N} (\partial_{ \mathrm{u}_\ell } \Gamma_j) ( \partial_{\theta} \mathrm{u}_\ell) +  \sum_{m = 1}^{\vert  \mathcal{E} \vert} (\partial_{ \mathrm{e}_m} \Gamma_j) (\partial_{\theta} \mathrm{e}_m) + (\partial_{\tau} \Gamma_j )(\partial_{\theta} \tau) 
        \end{align}
        leading to the following closed form expression for $\partial_\Theta  \widetilde{\boldsymbol{u}}^{[k+1]}$, $\partial_\Theta \boldsymbol{u}^{[k+1]}$, $\partial_\Theta  \widetilde{e}_i^{[k+1]}$ and $\partial_\Theta  e_i^{[k+1]}$, for $  i \in \{1,\ldots ,\vert \mathcal{E} \vert\} $.

        \noindent \textbf{Iterative differentiation of DMS-SLPAM --}
        \label{sec:computation}
Applying Formula~\eqref{eq:chain_rule} to each step of DMS-SLPAM leads to Algorithm~\ref{dslpam}.
Proposition~\ref{prop:derpar} provides closed-form expressions of the updates of the Jacobian matrices of the iterates involved in the minimization of a D-MS functional with $h_i= \vert \cdot \vert$, 
thus allowing an easy and direct implementation of FDMC SURE and FDMC SUGAR estimates of Eq.~\eqref{eq:fdmcsure}~and~\eqref{eq:fdmcsugar2}.

\begin{proposition} 
\label{prop:derpar}
        Considering the D-MS functional~\eqref{eq:dms_univariate} when $h_i= \vert \cdot \vert$ and its minimization via SL-PAM  Algorithm~\ref{alg:SLPAM} in Appendix~\ref{ssec:SL-PAM} with  
$d_k = \beta \overline{d}  $, $\overline{d} = \eta \Vert \boldsymbol{\mathrm{D}} \Vert_2^2$, for every $\theta \in \{\beta,\lambda\}$:
        \begin{align}
            \label{eq:u_tilde_diff}
            \partial_{\theta}\widetilde{\boldsymbol{u}}^{[k]}  = 
             \partial_{\theta}\boldsymbol{u}^{[k]} - \frac{2\beta}{c_k}  \displaystyle \sum_{i=1}^{\vert \mathcal{E} \vert}  (1-e^{[k]}_i)^2 \ \boldsymbol{\mathrm{D}}_i^*\boldsymbol{\mathrm{D}}_i \partial_{\theta} \boldsymbol{u}^{[k]} 
                + \frac{4 \beta}{c_k} \displaystyle \sum_{i=1}^{\vert \mathcal{E} \vert}  (1-e^{[k]}_i) \partial_{\theta} e_i^{[k]}\ \boldsymbol{\mathrm{D}}_i^*\boldsymbol{\mathrm{D}}_i \boldsymbol{u}^{[k]},
            \end{align}

        \begin{equation}\label{eq:u_diff}
            \partial_{\theta} \boldsymbol{u}^{[k+1]}  =  \frac{c_k}{c_k+1}\partial_{\theta} \widetilde{\boldsymbol{u}}^{[k]} + \frac{\widetilde{\boldsymbol{u}}^{[k]}  -\boldsymbol{z}}{(\beta \overline{c}+1)^2} \partial_\theta c_k ,
            \end{equation}
 where $\partial_{\beta} c_k = \gamma \|D\|^2$ and $\partial_{\lambda} c_k = 0$, and for every $i \in\{1,\ldots, \vert \mathcal{E} \vert\}$:

    \begin{align}\label{eq:e_tilde_diff_distinct}
          \partial_{\theta} \widetilde{e}_i^{[k]}  
            =  \frac{2 \boldsymbol{\mathrm{D}}_i \boldsymbol{u}^{[k+1]} \boldsymbol{\mathrm{D}}_i \partial_{\theta}\boldsymbol{u}^{[k+1]}  \frac{ \overline{d}}{2}(1- e_i^{[k]} )}{ \left[ \left(\boldsymbol{\mathrm{D}}_i \boldsymbol{u}^{[k+1]} \right)^2 + \frac{ \overline{d}}{2} \right]^2} 
            +  \frac{ \frac{\overline{d}}{2}\partial_{\theta} e_i^{[k]}}{ \left(\boldsymbol{\mathrm{D}}_i \boldsymbol{u}^{[k+1]} \right)^2 + \frac{\overline{d}}{2} },
            \end{align}
            
            \begin{align}           
        \partial_{\theta}e_i^{[k+1]}   =   - \partial_{\boldsymbol{u}} \phi_i^{[k+1]} \partial_{\theta} \boldsymbol{u}^{[k+1]} \frac{\widetilde{e}_i^{[k]}  }{\vert \widetilde{e}_i^{[k]} \vert} \mathcal{I}_{\vert \widetilde{e}_i^{[k]}  \vert > \phi_i^{[k+1]} } 
         + \partial_{\theta} \widetilde{e}_i^{[k]} \mathcal{I}_{\vert \widetilde{e}_i^{[k]} \vert > \phi_i^{[k+1]} } 
         - \frac{\partial_{\tau} \phi_i^{[k+1]} \partial_{\theta} \tau}{\vert \widetilde{e}_i^{[k]} \vert} \widetilde{e}_i^{[k]}  \mathcal{I}_{\vert \widetilde{e}_i^{[k]} \vert > \phi_i^{[k+1]} }\label{eq:e_diff_distinct} ,
          \end{align}
          where 
        \begin{equation}
        \begin{cases}
        \partial_{\boldsymbol{u}} \phi_i^{[k+1]} \partial_{\theta} \boldsymbol{u}^{[k+1]} = -\frac{4 \tau  \boldsymbol{\mathrm{D}}_i \boldsymbol{u}^{[k+1]}   \boldsymbol{\mathrm{D}}_i \partial_{\theta} \boldsymbol{u}^{[k+1]} }{ \left[ 2  \left( \boldsymbol{\mathrm{D}}_i \boldsymbol{u}^{[k+1]}  \right)^2 + \overline{d} \right]^2},   \\
        \partial_\tau \phi_i^{[k+1]} = \frac{1}{ 2 \left(\boldsymbol{\mathrm{D}}_i \boldsymbol{u}^{[k+1]} \right)^2 + \overline{d} }, \\
        \partial_\beta \tau = - \frac{\lambda}{\beta^2}, \qquad \partial_\lambda \tau =  \frac{1}{\beta}.
        \end{cases}
          \end{equation}
        \end{proposition}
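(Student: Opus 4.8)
The plan is to apply the chain-rule identity~\eqref{eq:chain_rule} to each of the four updates composing one iteration of Algorithm~\ref{alg:dms0}, regarding every iterate as a function of $\Theta=(\beta,\lambda)$ and differentiating with respect to a generic component $\theta\in\{\beta,\lambda\}$. At each step I would carefully separate the \emph{explicit} dependence on $\theta$ (carried by $\beta$, by the stepsizes $c_k,d_k$, and by the threshold $\phi_i^{[k+1]}$) from the \emph{implicit} dependence propagated by the previous iterates $\boldsymbol{u}^{[k]},\boldsymbol{e}^{[k]}$ and by the current $\boldsymbol{u}^{[k+1]}$. The decisive simplification, and the very reason the hypotheses $c_k=\gamma\beta\Vert\boldsymbol{\mathrm{D}}\Vert^2$ and $d_k=\beta\overline{d}$ are imposed, is that both $\beta/c_k=1/(\gamma\Vert\boldsymbol{\mathrm{D}}\Vert^2)$ and the common factor $\beta$ in the rational update of $\widetilde{e}_i^{[k]}$ are \emph{independent} of $\theta$; this is what prevents a proliferation of extra terms and yields the compact closed forms claimed.

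First, inserting the explicit gradient $\nabla_{\boldsymbol{u}}g(\boldsymbol{u},\boldsymbol{e})=2\beta\sum_i(1-e_i)^2\boldsymbol{\mathrm{D}}_i^*\boldsymbol{\mathrm{D}}_i\boldsymbol{u}$ read off from~\eqref{eq:DMSfunctions} turns the first update into $\widetilde{\boldsymbol{u}}^{[k]}=\boldsymbol{u}^{[k]}-\tfrac{2\beta}{c_k}\sum_i(1-e_i^{[k]})^2\boldsymbol{\mathrm{D}}_i^*\boldsymbol{\mathrm{D}}_i\boldsymbol{u}^{[k]}$. Since the coefficient $2\beta/c_k$ is constant in $\theta$, the product rule applied to $(1-e_i^{[k]})^2\boldsymbol{\mathrm{D}}_i^*\boldsymbol{\mathrm{D}}_i\boldsymbol{u}^{[k]}$ splits into one term proportional to $\partial_\theta\boldsymbol{u}^{[k]}$ and one proportional to $\partial_\theta e_i^{[k]}$, reproducing the two sums of~\eqref{eq:u_tilde_diff}. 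Next, because $f$ is quadratic, $\mathrm{prox}_{\frac{1}{c_k}f}$ is the affine map $\boldsymbol{v}\mapsto(c_k\boldsymbol{v}+\boldsymbol{z})/(c_k+1)$; writing $\boldsymbol{u}^{[k+1]}$ in this closed form and differentiating by the quotient rule, with $\partial_\beta c_k=\gamma\Vert\boldsymbol{\mathrm{D}}\Vert^2$, $\partial_\lambda c_k=0$ and $c_k+1=\beta\overline{c}+1$ for $\overline{c}=\gamma\Vert\boldsymbol{\mathrm{D}}\Vert^2$, gives~\eqref{eq:u_diff} after the cancellation $(c_k+1)\widetilde{\boldsymbol{u}}^{[k]}-c_k\widetilde{\boldsymbol{u}}^{[k]}-\boldsymbol{z}=\widetilde{\boldsymbol{u}}^{[k]}-\boldsymbol{z}$ in the numerator.

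Third, substituting $d_k=\beta\overline{d}$ into the update of $\widetilde{e}_i^{[k]}$ cancels the factor $\beta$, leaving the $\theta$-free rational function $(a+\tfrac{\overline{d}}{2}e_i^{[k]})/(a+\tfrac{\overline{d}}{2})$ with $a=(\boldsymbol{\mathrm{D}}_i\boldsymbol{u}^{[k+1]})^2$; a direct quotient-rule differentiation using $\partial_\theta a=2\boldsymbol{\mathrm{D}}_i\boldsymbol{u}^{[k+1]}\boldsymbol{\mathrm{D}}_i\partial_\theta\boldsymbol{u}^{[k+1]}$, after collecting the numerator, delivers~\eqref{eq:e_tilde_diff_distinct}. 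Finally, for $h_i=\vert\cdot\vert$ the proximity operator is the soft-thresholding $\mathrm{prox}_{\phi\vert\cdot\vert}(v)=(v-\phi\,\mathrm{sign}(v))\,\mathcal{I}_{\vert v\vert>\phi}$ with $\phi=\phi_i^{[k+1]}=\tau/(2(\boldsymbol{\mathrm{D}}_i\boldsymbol{u}^{[k+1]})^2+\overline{d})$ and $\tau=\lambda/\beta$; the threshold depends on $\theta$ both through $\boldsymbol{u}^{[k+1]}$ and through $\tau$, so I would expand $\partial_\theta\phi_i^{[k+1]}=\partial_{\boldsymbol{u}}\phi_i^{[k+1]}\partial_\theta\boldsymbol{u}^{[k+1]}+\partial_\tau\phi_i^{[k+1]}\partial_\theta\tau$ with the three elementary derivatives listed in the statement, and differentiate $v-\phi\,\mathrm{sign}(v)$ on each open branch to obtain the three indicator-weighted terms of~\eqref{eq:e_diff_distinct}.

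The main obstacle is precisely this last step: the soft-thresholding fails to be differentiable at the kinks $\vert\widetilde{e}_i^{[k]}\vert=\phi_i^{[k+1]}$, so the asserted identities must be read in the weak, almost-everywhere sense of~\cite{deledalle2014stein}. The care needed is to treat $\mathrm{sign}(\widetilde{e}_i^{[k]})=\widetilde{e}_i^{[k]}/\vert\widetilde{e}_i^{[k]}\vert$ as locally constant, so that its contribution vanishes off a negligible set, while the indicator $\mathcal{I}_{\vert\widetilde{e}_i^{[k]}\vert>\phi_i^{[k+1]}}$ selects the active branch; one must also check that the discarded boundary set is Lebesgue-negligible for almost every $\boldsymbol{z}$, which is what legitimizes the closed forms as weak Jacobians suitable for the FDMC SURE and SUGAR estimates of~\eqref{eq:fdmcsure}~and~\eqref{eq:fdmcsugar}.
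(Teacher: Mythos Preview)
Your proposal is correct and follows essentially the same route as the paper's proof: you differentiate each of the four SL-PAM updates via the chain rule~\eqref{eq:chain_rule}, exploiting that $2\beta/c_k$ is $\theta$-independent for~\eqref{eq:u_tilde_diff}, the affine closed form of $\mathrm{prox}_{\frac{1}{c_k}f}$ for~\eqref{eq:u_diff}, the cancellation of $\beta$ in the rational update for~\eqref{eq:e_tilde_diff_distinct}, and the piecewise-affine soft-thresholding for~\eqref{eq:e_diff_distinct}. Your closing remark on the weak, almost-everywhere interpretation at the kinks of the soft-thresholding is in fact more explicit than the paper's own treatment, which simply writes down the branchwise derivative without comment.
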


\begin{proof}
The proof is given in Appendix~\ref{sec:iterdiff} and the notation $\mathcal{I}$ is defined in Appendix~\ref{sec:notations}.        
\end{proof} 

\subsection{Monte Carlo averaging strategy}
\label{ssec:MC_averaging}

Following~\cite{pascal2020automated}, the risk and gradient of the risk FDMC Stein estimators, introduced in Eq.~\eqref{eq:fdmcsure}~and~\eqref{eq:fdmcsugar2}, are defined from \textit{one} realization of the Monte Carlo vector $\boldsymbol{\delta}$.
Yet, in the context of a parametric estimator $\widehat{\boldsymbol{u}}(\boldsymbol{z};\Theta)$ obtained from the minimization of a nonconvex objective functional, such as \eqref{eq:dms_univariate}, it can be necessary to go further, and to consider \textit{Monte Carlo averaging} strategies to get more robust risk and gradient of the risk estimates.

The \textit{Monte Carlo averaging} strategy consists in averaging the FDMC Stein estimators of Eq.~\eqref{eq:fdmcsure}~and~\eqref{eq:fdmcsugar2} over a certain number $R$ of random Monte Carlo vectors $\boldsymbol{\delta}^{(r)} \in \mathbb{R}^N$, independently sampled from the standard Gaussian distribution as stated properly in Definition~\ref{def:MCaverage}.

\begin{definition}[\textit{Monte Carlo averaged} Stein estimators]
\label{def:MCaverage}
Let $\boldsymbol{z} = \overline{\boldsymbol{u}} + \sigma \boldsymbol{\zeta} \in \mathbb{R}^{\lvert \Omega \rvert}$ with $\boldsymbol{\zeta}\sim \mathcal{N}(\boldsymbol{0}_{\lvert \Omega \rvert}, \textbf{I}_{\lvert \Omega \rvert})$ 
and let $\widehat{\boldsymbol{u}}(\boldsymbol{z};\Theta)$ a parametric estimator of the underlying ground truth $\overline{\boldsymbol{u}}$, depending on some hyperparameters stored in $\Theta \in \mathbb{R}^L$.
For $\epsilon > 0$ a Finite Difference step and $\boldsymbol{\Delta} = [ \boldsymbol{\delta}^{(1)},\ldots,  \boldsymbol{\delta}^{(R)}]$ a concatenation of independent 
Monte Carlo vectors sampled 
from the standard Gaussian distribution.
The \textit{Monte Carlo averaged} SURE is defined as
\begin{align}
\label{eq:mcsure}
\overline{\mathrm{SURE}}_{\epsilon, \boldsymbol{\Delta}}^R(\boldsymbol{z};\Theta) := \frac{1}{R} \sum_{r = 1}^R \mathrm{SURE}_{\epsilon, \boldsymbol{\delta}^{(r)}}(\boldsymbol{z};\Theta),
\end{align}
where $ \mathrm{SURE}_{\epsilon, \boldsymbol{\delta}^{(r)}}(\boldsymbol{z};\Theta)$ is the FDMC SURE~\eqref{eq:fdmcsure}.
Similarly, the \textit{Monte Carlo averaged} SUGAR estimator writes
\begin{align}
\label{mcsugar}
\overline{\mathrm{SUGAR}}_{\epsilon, \boldsymbol{\Delta}}^R(\boldsymbol{z};\Theta) := \frac{1}{R} \sum_{r = 1}^R \mathrm{SUGAR}_{\epsilon, \boldsymbol{\delta}^{(r)}}(\boldsymbol{z};\Theta),
\end{align}
involving $\mathrm{SUGAR}_{\epsilon, \boldsymbol{\delta}^{(r)}}(\boldsymbol{z};\Theta)$, the FDMC SUGAR estimate~\eqref{eq:fdmcsugar2}.
\end{definition}

\begin{proposition}
\label{prop:unbavsure}
Let $\widehat{\boldsymbol{u}}(\boldsymbol{z};\Theta)$ an estimator being uniformly Lipschitz w.r.t the observations $\boldsymbol{z}$ and w.r.t the hyperparameters $\Theta$, with a Lipschitz modulus $L_{\widehat{\boldsymbol{u}}(\boldsymbol{z};\cdot)}$ independent of $\boldsymbol{z}$, and satisfying the univocity condition $\widehat{\boldsymbol{u}}(\boldsymbol{0}_N;\Theta) = \boldsymbol{0}_N$. 
For $\;\epsilon$ a infinitesimal positive Finite Difference step and $\boldsymbol{\Delta} = \left[ \boldsymbol{\delta}^{(1)}, \ldots, \boldsymbol{\delta}^{(R)} \right]$ a collection of independent standard Gaussian Monte Carlo vectors, the \textit{Monte Carlo averaged} estimates $\overline{\mathrm{SURE}}_{\epsilon, \boldsymbol{\Delta}}^R(\boldsymbol{z};\Theta) $ and $\overline{\mathrm{SUGAR}}_{\epsilon, \boldsymbol{\Delta}}^R(\boldsymbol{z};\Theta) $ are asymptotically unbiased estimates of respectively the risk and of the gradient of the risk with respect to hyperparameters
\begin{equation}
\label{eq:unbavsure}
\underset{\epsilon \longrightarrow 0}{\textrm{lim}} \mathbb{E}[\overline{\mathrm{SURE}}^R_{\epsilon, \boldsymbol{\Delta}}(\boldsymbol{z};\Theta \lvert \sigma^2 )] = Q[\widehat{\boldsymbol{u}}](\Theta)
\end{equation} 
and
\begin{align}
\underset{\epsilon \longrightarrow 0}{\textrm{lim}} \mathbb{E}[\overline{\mathrm{SUGAR}}^R_{\epsilon, \boldsymbol{\Delta}}(\boldsymbol{z};\Theta \lvert \sigma^2 )] =  \partial_\Theta Q[\widehat{\boldsymbol{u}}](\Theta).
\end{align}
Moreover,  $\overline{\mathrm{SUGAR}}^R_{\epsilon, \boldsymbol{\Delta}}(\boldsymbol{z};\Theta \lvert \sigma^2 )$ is \textit{exactly} the gradient of $\overline{\mathrm{SURE}}^R_{\epsilon, \boldsymbol{\Delta}}(\boldsymbol{z};\Theta \lvert \sigma^2 )$ with respect to the hyperparameters $\Theta$.
\end{proposition}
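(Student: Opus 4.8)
The plan is to reduce everything to the single-realization statements already established, namely the asymptotic unbiasedness of $\mathrm{SURE}_{\epsilon,\boldsymbol{\delta}}$ in \eqref{eq:sure_thm}, the asymptotic unbiasedness of $\mathrm{SUGAR}_{\epsilon,\boldsymbol{\delta}}$ in \eqref{eq:sugar_thm}, and the exact gradient identity \eqref{eq:prop1}, and then to propagate them through the finite averages \eqref{eq:mcsure} and \eqref{mcsugar} using only linearity of expectation and of differentiation. First I would check that the hypotheses of the proposition --- uniform Lipschitzianity of $\widehat{\boldsymbol{u}}(\boldsymbol{z};\cdot)$ in $\boldsymbol{z}$ and in $\Theta$ with modulus independent of $\boldsymbol{z}$, together with the univocity condition $\widehat{\boldsymbol{u}}(\boldsymbol{0}_N;\Theta)=\boldsymbol{0}_N$ --- are exactly those under which \eqref{eq:sure_thm}, \eqref{eq:sugar_thm} and \eqref{eq:prop1} hold for a single Monte Carlo vector. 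Since these assumptions do not involve $\boldsymbol{\delta}$, they apply verbatim to each individual vector $\boldsymbol{\delta}^{(r)}$ in the collection $\boldsymbol{\Delta}$.

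For the asymptotic unbiasedness of the averaged risk, I would take the expectation of \eqref{eq:mcsure} over both the observation noise $\boldsymbol{\zeta}$ and the whole collection $\boldsymbol{\Delta}$. By linearity of expectation this equals $\frac{1}{R}\sum_{r=1}^R \mathbb{E}[\mathrm{SURE}_{\epsilon,\boldsymbol{\delta}^{(r)}}(\boldsymbol{z};\Theta\lvert\sigma^2)]$. Because the $\boldsymbol{\delta}^{(r)}$ are identically distributed standard Gaussian vectors, each summand has the same expectation as the generic single-realization term, so the average coincides with $\mathbb{E}[\mathrm{SURE}_{\epsilon,\boldsymbol{\delta}}(\boldsymbol{z};\Theta\lvert\sigma^2)]$ for one vector. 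Passing to the limit $\epsilon\to 0$ through the finite sum and invoking \eqref{eq:sure_thm} term by term then yields $Q[\widehat{\boldsymbol{u}}](\Theta)$, which is \eqref{eq:unbavsure}. The identical argument applied to \eqref{mcsugar}, now using \eqref{eq:sugar_thm}, gives the asymptotic unbiasedness of $\overline{\mathrm{SUGAR}}^R_{\epsilon,\boldsymbol{\Delta}}$ for $\partial_\Theta Q[\widehat{\boldsymbol{u}}](\Theta)$.

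For the exact-gradient claim, I would differentiate the definition \eqref{eq:mcsure} with respect to $\Theta$ and exchange $\partial_\Theta$ with the finite sum, which is always legitimate for a finite linear combination. Each differentiated term is, by the per-realization identity \eqref{eq:prop1}, exactly $\mathrm{SUGAR}_{\epsilon,\boldsymbol{\delta}^{(r)}}(\boldsymbol{z};\Theta\lvert\sigma^2)$, so that $\partial_\Theta \overline{\mathrm{SURE}}^R_{\epsilon,\boldsymbol{\Delta}} = \frac{1}{R}\sum_{r=1}^R \mathrm{SUGAR}_{\epsilon,\boldsymbol{\delta}^{(r)}}$, which is precisely $\overline{\mathrm{SUGAR}}^R_{\epsilon,\boldsymbol{\Delta}}$ by definition \eqref{mcsugar}.

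There is essentially no genuine obstacle here: the content of the proposition is the transfer of three known single-vector facts across a finite average, and every interchange of a limit or a derivative with the sum is trivial because $R$ is finite. The only point deserving care --- and the part I would state explicitly --- is the justification that the regularity hypotheses needed for \eqref{eq:sure_thm}, \eqref{eq:sugar_thm} and \eqref{eq:prop1} are met \emph{uniformly}, i.e. that the Lipschitz modulus being independent of $\boldsymbol{z}$ guarantees the weak differentiability in $\Theta$ of $\widehat{\boldsymbol{u}}(\boldsymbol{z}+\epsilon\boldsymbol{\delta}^{(r)};\Theta)$ for each $r$ as well as the validity of the underlying Stein integration-by-parts, so that the cited per-realization results may legitimately be invoked before averaging.
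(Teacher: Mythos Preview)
Your proposal is correct and follows essentially the same approach as the paper: both reduce to the single-realization results \eqref{eq:sure_thm}, \eqref{eq:sugar_thm} and \eqref{eq:prop1}, then propagate them through the finite averages \eqref{eq:mcsure} and \eqref{mcsugar} by linearity of expectation, of the limit in $\epsilon$, and of differentiation with respect to $\Theta$. If anything, your write-up is slightly more explicit than the paper's in spelling out why the hypotheses carry over uniformly to each $\boldsymbol{\delta}^{(r)}$, but the underlying argument is the same.
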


\begin{proof}
The proof is provided in Appendix~\ref{sec:proof2}.
\end{proof}

\subsection{Averaged SUGAR D-MS}
\label{ss:algobfgs}
The framework presented in Section~\ref{sec:hyperparameters}, combined with Proposition~\ref{prop:unbavsure}, enable us to design an automated strategy to select the D-MS regularization parameters described below and assessed in Section~\ref{sec:num_exp_BP}. 
First, $R$ \textit{independent} Monte Carlo vectors $\boldsymbol{\delta}^{(r)}$ are sampled.
The set $\boldsymbol{\Delta} = \lbrace \boldsymbol{\delta}^{(1)}, \hdots, \boldsymbol{\delta}^{(R)}\rbrace$ is kept fixed throughout the procedure. 
Then, Algorithm~\ref{alg:mininize_sure} with the averaged estimates $\widehat{Q}(\boldsymbol{z}; \Theta \lvert \sigma^2) =\overline{ \mathrm{SURE}}_{\epsilon, \boldsymbol{\Delta}}^R(\boldsymbol{z}; \Theta \lvert \sigma^2)$ and $\partial_{\Theta} \widehat{Q}(\boldsymbol{z}, \Theta \lvert \sigma^2) = \overline{\mathrm{SUGAR}}_{\epsilon, \boldsymbol{\Delta}}^R(\boldsymbol{z}; \Theta \lvert \sigma^2)$, defined respectively in Eq.~\eqref{eq:mcsure}~and~\eqref{mcsugar}, whose practical implementation is based on Algorithm~\ref{dslpam}, provides the optimal hyperparameters.
The overall procedure is referred to as \textit{Averaged SUGAR D-MS}.
Note that, for $R = 1$, $\boldsymbol{\Delta} = \lbrace \boldsymbol{\delta}^{(1)}\rbrace$, and one retrieves the standard SURE and SUGAR estimates. 
In the case when $R = 1$, the automated hyperparameter strategy is hence referred to as \textit{Standard SUGAR D-MS}.

\section{Performance assessment}
\label{sec:num_exp_BP}

\subsection{Settings}

\noindent\textbf{Data -- } 
To assess the relevance of SURE~\eqref{eq:fdmcsure} in the context of interface detection, as well as the efficiency of the proposed automated minimization making use of the SUGAR proposed in Section~\ref{sec:hyperparameters}, systematic experiments are performed on the test data displayed in Fig.~\ref{fig:denoisedgeometries} and 
 several noise levels are explored, corresponding to  
$\sigma \in \{ 0.01,0.05,0.1\}$. 
Additional experiments with other geometries and level of noise are provided in Appendix~\ref{sec:add_exp}.

\noindent\textbf{Algorithmic setup -- } See Supplementary materials.
 
\noindent\textbf{Performance criteria --} 
In practice, standard and averaged SURE are compared to the following \textit{quadratic error}: 
    $
    	 \mathcal{Q}( \widehat{\boldsymbol{u}} \lvert  \overline{\boldsymbol{u}} ) := \Vert \widehat{\boldsymbol{u}} -  \overline{\boldsymbol{u}}  \Vert_2^2.
    $ 
    To assess the performance of D-MS denoising with automatically selected hyperparameters, the quality of the reconstruction is quantified by the peak signal-to-noise ratio defined as:
    $
        \mathrm{PSNR}(\widehat{\boldsymbol{u}}\lvert \overline{\boldsymbol{u}})  = 20 \mathrm{log}_{10}\left( \frac{\Vert \overline{\boldsymbol{u}} \Vert}{\Vert \widehat{\boldsymbol{u}} - \overline{\boldsymbol{u}} \Vert} \right).
 $

\subsection{SURE for D-MS}

We first illustrate in Fig. \ref{fig:sure_grids} the asymptotic unbiasedness of the standard and averaged SURE 
on the example $\boldsymbol{z}$ displayed in Fig.~\ref{fig:denoisedgeometries} (top-middle) with noise level $\sigma = 0.05$. 
To better locate and compare the minima, three level sets of SURE are displayed by the \textsc{Matlab} function \emph{contour}.

Even though the overall shape of the standard SURE 
maps are similar to the quadratic error profile, 
Fig.~\ref{fig:sure_grids}(a-c) shows that the location of the minimum varies significantly with the Monte Carlo vector $\boldsymbol{\delta}^{(r)}$. 
Averaged SURE  
also well reproduces the quadratic error map while being more robust to achieve the minimum (cf. Fig.~\ref{fig:sure_grids}(d)). 

This first set of experiments illustrate that the proposed averaged SURE reproduces accurately the quadratic risk as expected from Proposition~\ref{prop:unbavsure}

\begin{figure}[!t]
\centering
\subfloat{\includegraphics[width=0.2\linewidth]{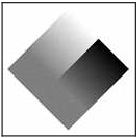}} \hspace{1cm}
\subfloat{\includegraphics[width=0.2\linewidth]{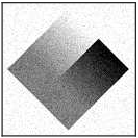}}\hspace{1cm}
\subfloat{\includegraphics[width=0.2\linewidth]{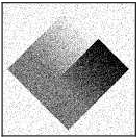}}\\
\subfloat{\includegraphics[width=0.2\linewidth]{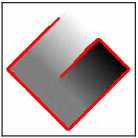}} \hspace{1cm}
\subfloat{\includegraphics[width=0.2\linewidth]{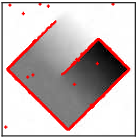}} \hspace{1cm}
\subfloat{\includegraphics[width=0.2\linewidth]{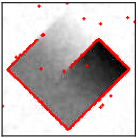}}\\
\caption{Piecewise smooth grey level image corrupted by i.i.d. Gaussian noise with level $\sigma \in \{0.01,0.05,0.1\}$. Associated D-MS estimates $\widehat{\boldsymbol{u}}$ and $\widehat{\boldsymbol{e}}$ (superimposed in red)  
The D-MS hyperparameters are selected with the proposed  \textit{Averaged SUGAR D-MS} (with $R=5$) using the true standard deviation $\sigma$. }
\label{fig:denoisedgeometries}
\end{figure}

\begin{figure*}[!t]
\centerline
{
{\footnotesize
\hspace{-.0cm} $\mathrm{SURE}_{\epsilon, \boldsymbol{\delta}^{(1)}}(\boldsymbol{z}; \Theta \lvert \sigma^2) $
\hspace{.5cm}$\mathrm{SURE}_{\epsilon, \boldsymbol{\delta}^{(2)}}(\boldsymbol{z}; \Theta \lvert \sigma^2)$ 
\hspace{.5cm} $\mathrm{SURE}_{\epsilon, \boldsymbol{\delta}^{(3)}}(\boldsymbol{z}; \Theta \lvert \sigma^2) $
\hspace{.5cm} 
$ \mathrm{SURE}_{\epsilon, \boldsymbol{\Delta}}(\boldsymbol{z}; \Theta \lvert \sigma^2)$ \hspace{.7cm}
$\mathbb{E}[\Vert \widehat{\boldsymbol{u}}(\boldsymbol{z};\Theta) - \overline{\boldsymbol{u}} \Vert_2^2] $
}
}
\centerline{
\subfloat[]{\includegraphics[width=.2\linewidth]{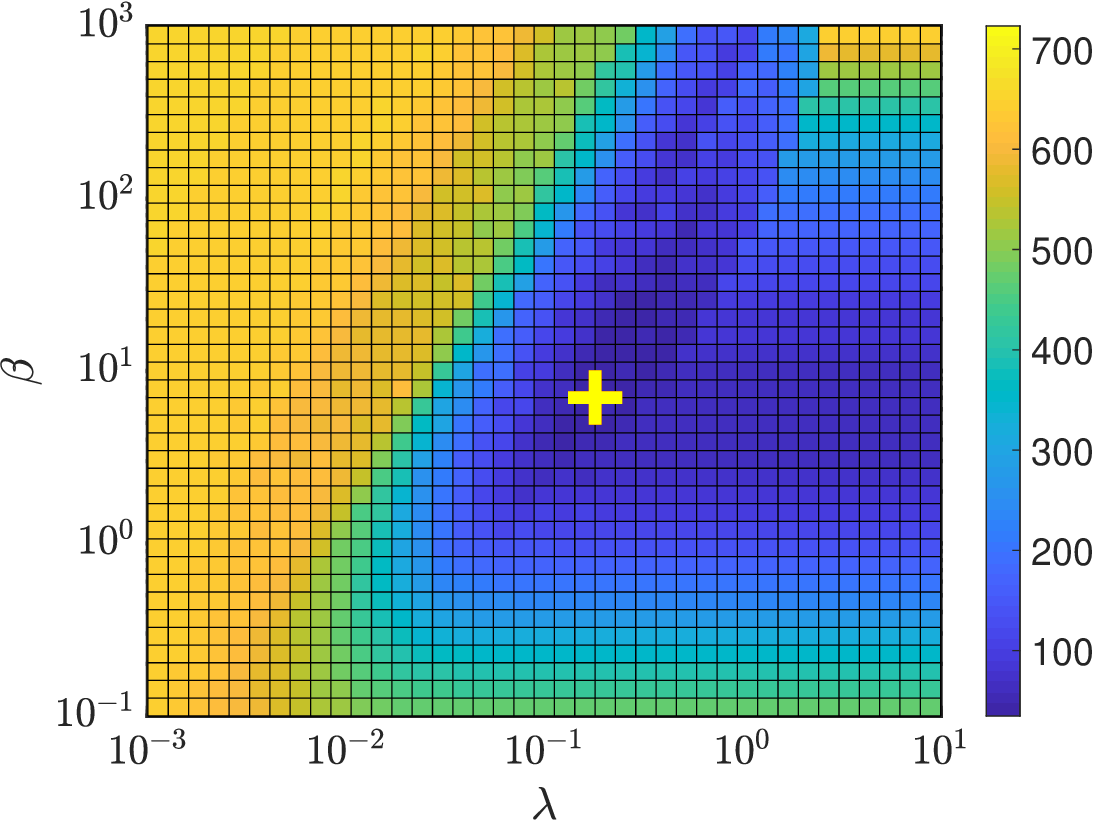}}
\subfloat[]{\includegraphics[width=.2\linewidth]{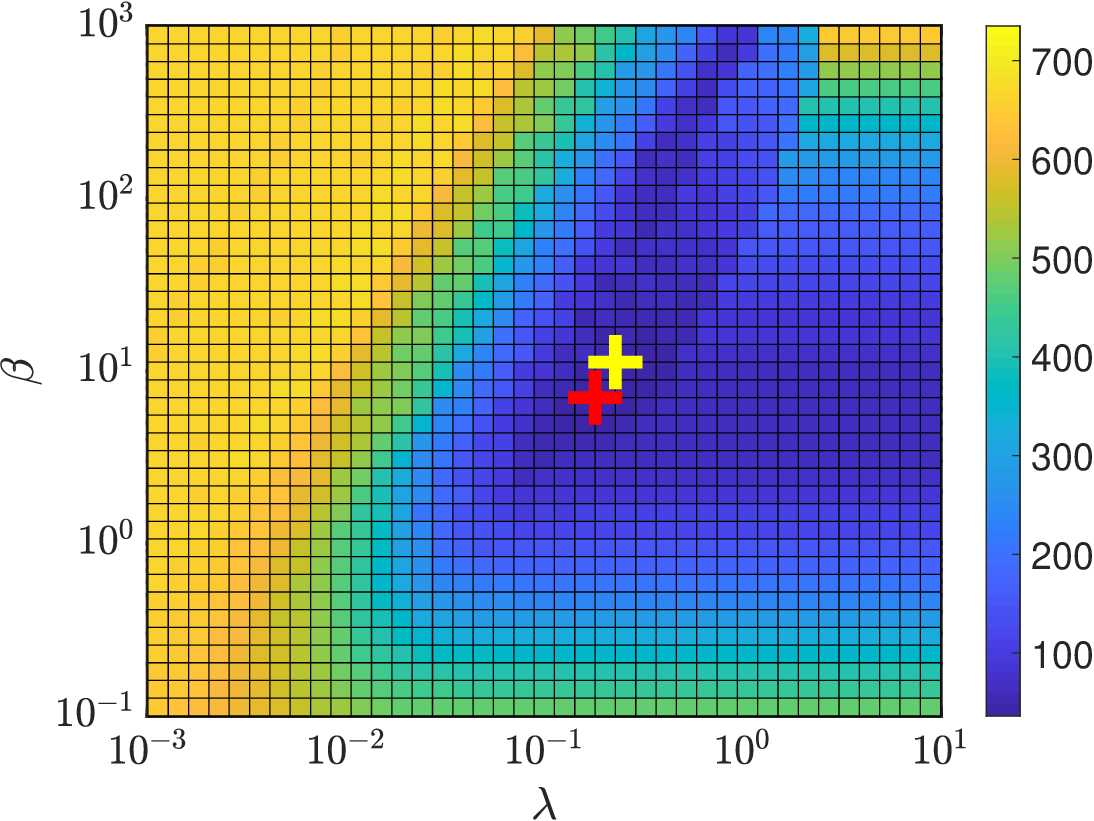}}
\subfloat[]{\includegraphics[width=.2\linewidth]{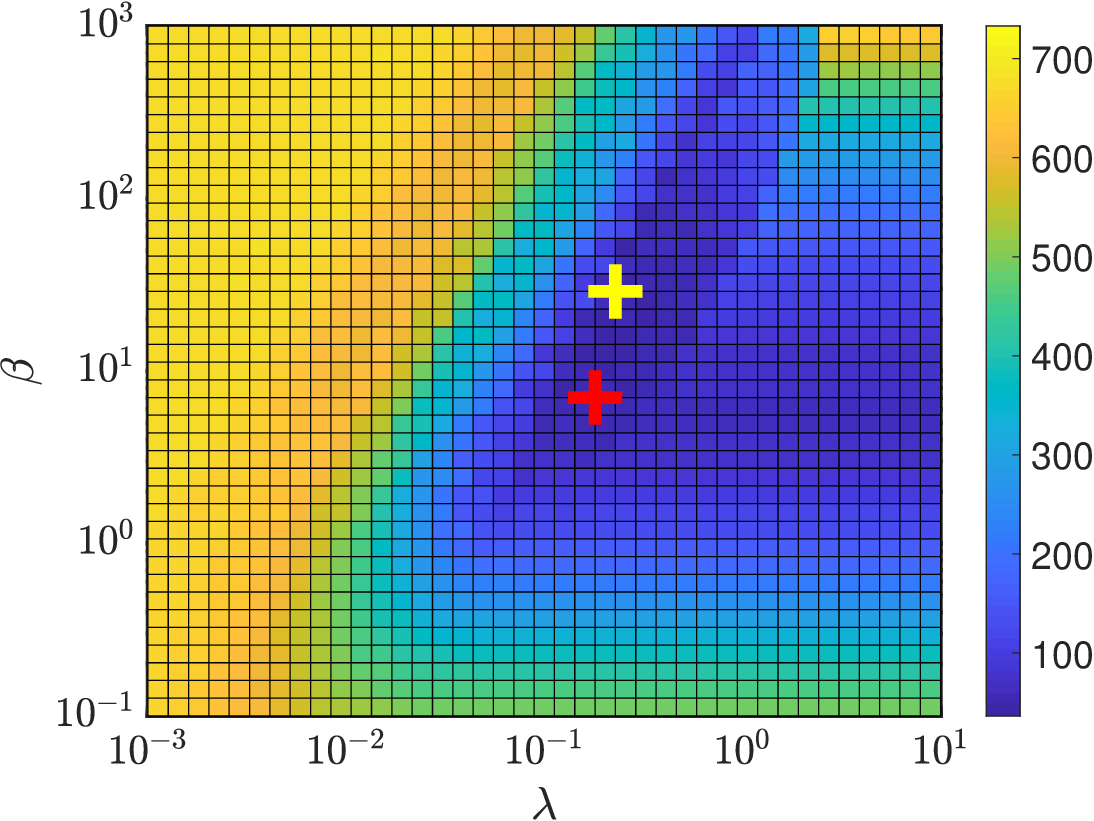}}
\subfloat[]{\includegraphics[width=.2\linewidth]{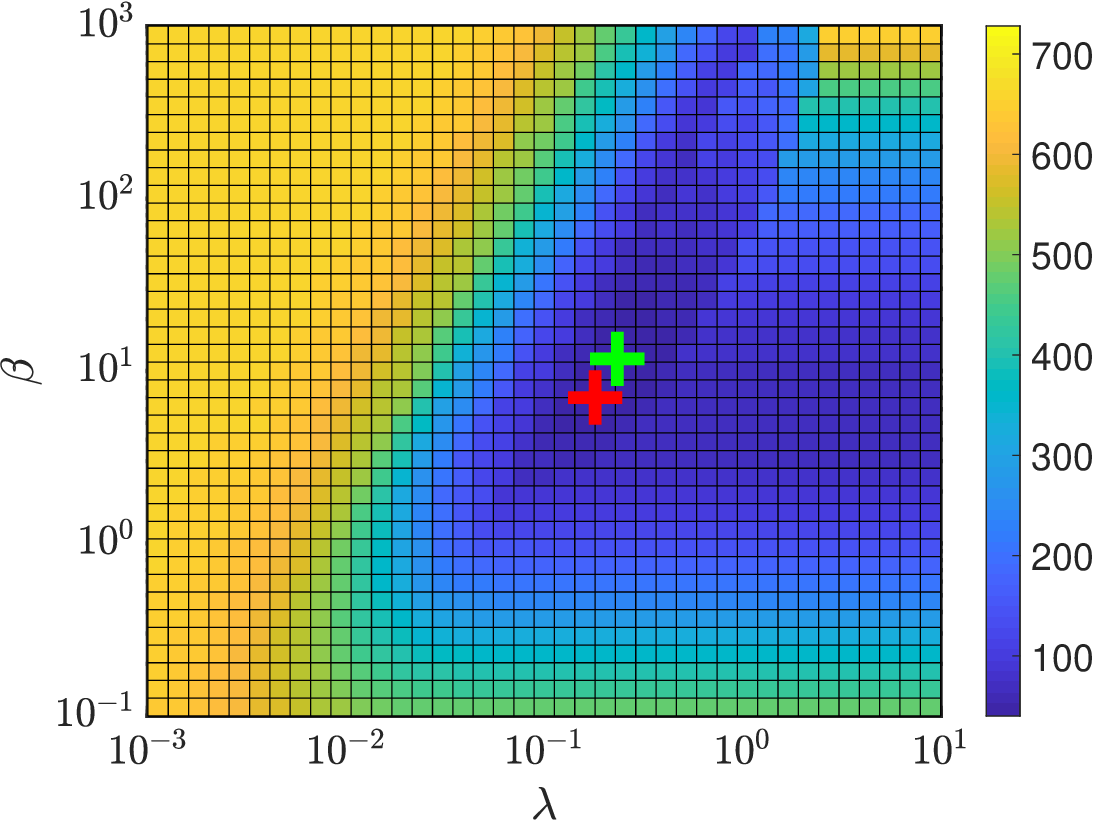}}
\subfloat[]{\includegraphics[width=.2\linewidth]{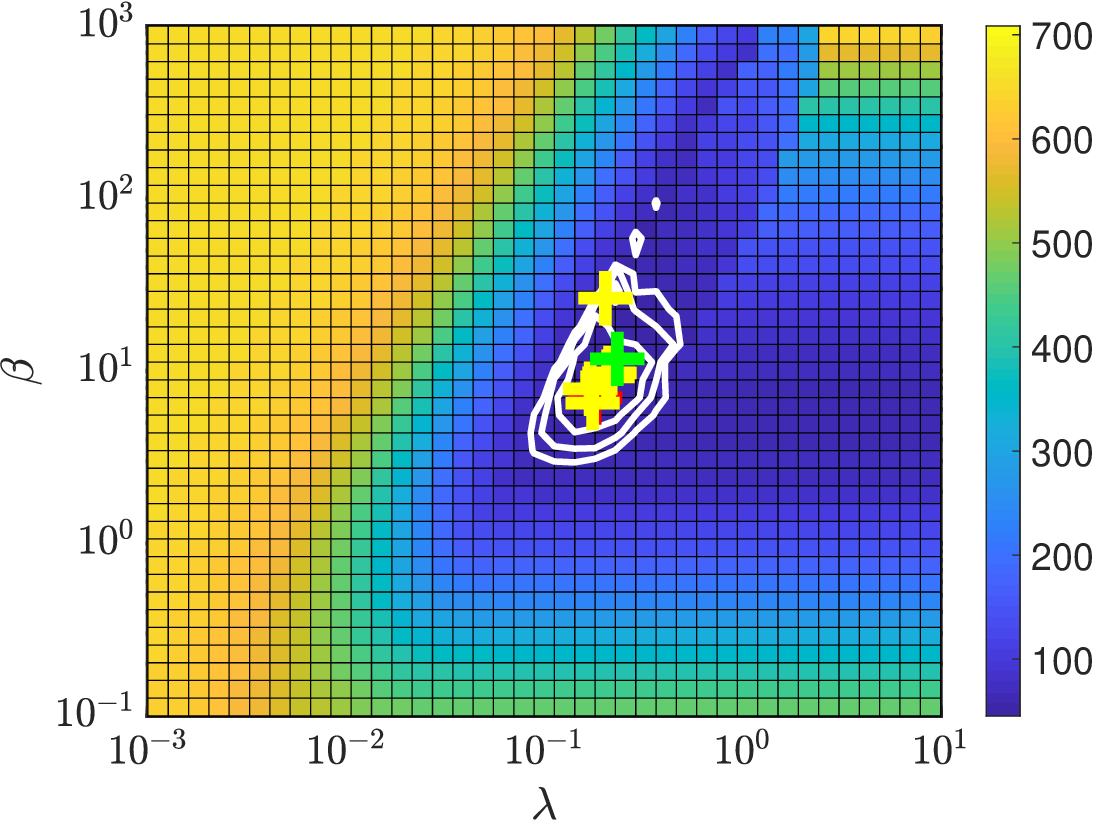}}
}
\vspace{3.5mm}
\centerline{
\subfloat[]{\includegraphics[width=.18\linewidth]{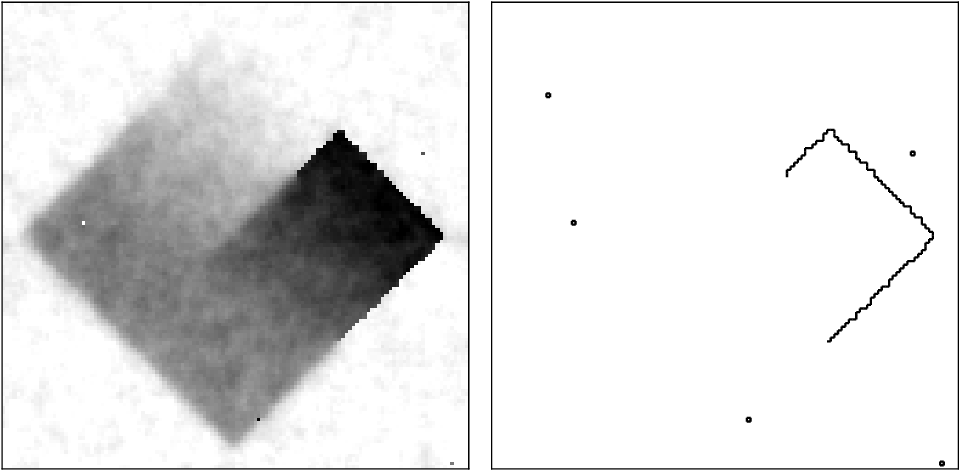}} 
\hspace{0.09in}
\subfloat[]{\includegraphics[width=.18\linewidth]{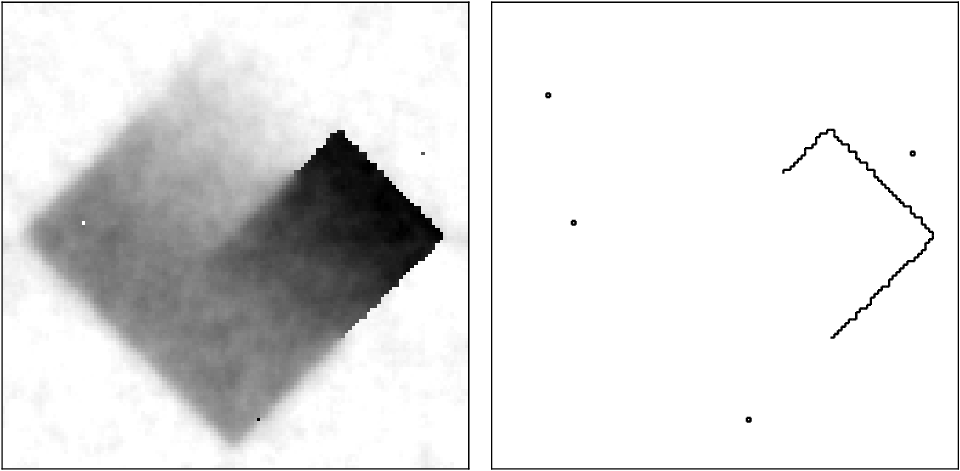}} 
\hspace{0.09in}
\subfloat[]{\includegraphics[width=.18\linewidth]{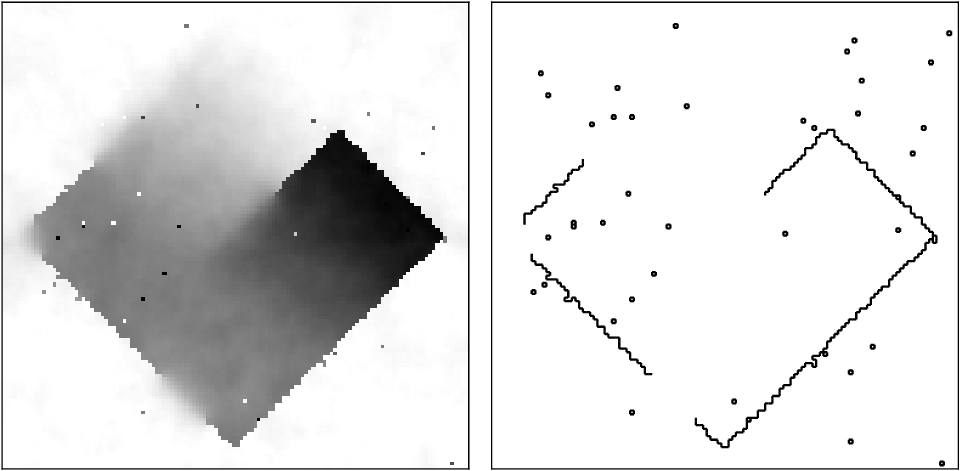}} 
\hspace{0.09in}
\subfloat[]{\includegraphics[width=.18\linewidth]{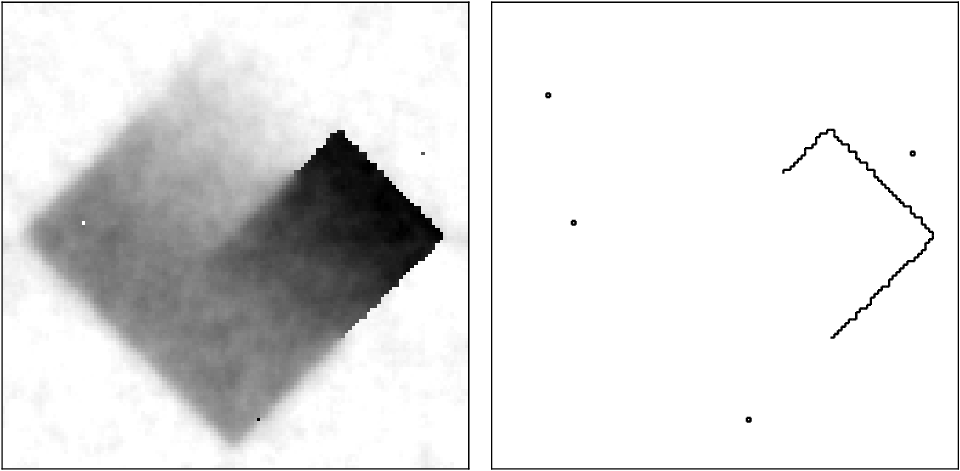}}
\hspace{0.09in}
\subfloat[]{\includegraphics[width=.18\linewidth]{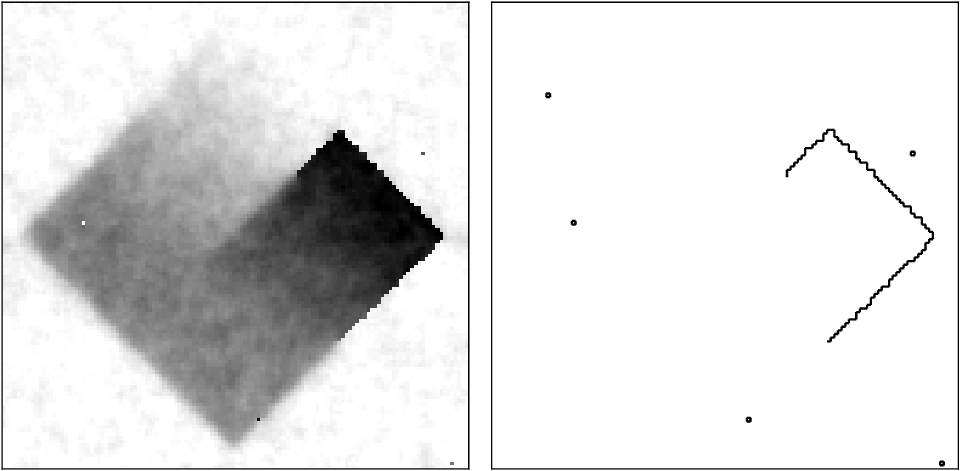}}
}
\caption{\textbf{Comparison between the quadratic error, standard and averaged SURE estimates for D-MS denoising of the image displayed in Fig.~\ref{fig:denoisedgeometries}(middle)}. \\ \textbf{1st row} -- Map on a logarithmic grid of $40 \times 40$ hyperparameters $\Theta = (\beta, \lambda)$: (a-c) $\mathrm{SURE}_{\epsilon, \boldsymbol{\delta}^{(r)}}(\boldsymbol{z}; \Theta \lvert \sigma^2)$ values for some realizations of the Monte Carlo vector, (d) $\mathrm{SURE}_{\epsilon, \boldsymbol{\Delta}}(\boldsymbol{z}; \Theta \lvert \sigma^2)$ values for $R=5$ realizations of the Monte Carlo vector  and  (e) quadratic error $\mathcal{Q}( \widehat{\boldsymbol{u}}(\boldsymbol{z}; \Theta )\lvert \overline{\boldsymbol{u}} )$ values  with level sets (black lines). \textbf{2nd row} -- Optimal solutions $\left(\widehat{\textbf{u}}(\boldsymbol{z};\Theta^{\textrm{Grid}}),\widehat{\textbf{e}}(\boldsymbol{z};\Theta^{\textrm{Grid}})\right)$ obtained from a grid search  over each map. The red  (resp. yellow and green) cross corresponds to the solution displayed in (j) (resp. (f)-(h) and (i)) associated with the minimum of the quadratic error grid (e) (resp. SURE estimate grids (a)-(c) and (d)).
}
\label{fig:sure_grids}
\end{figure*}

\begin{figure*}[!t]
\centering
\subfloat{\includegraphics[width=1.6in]{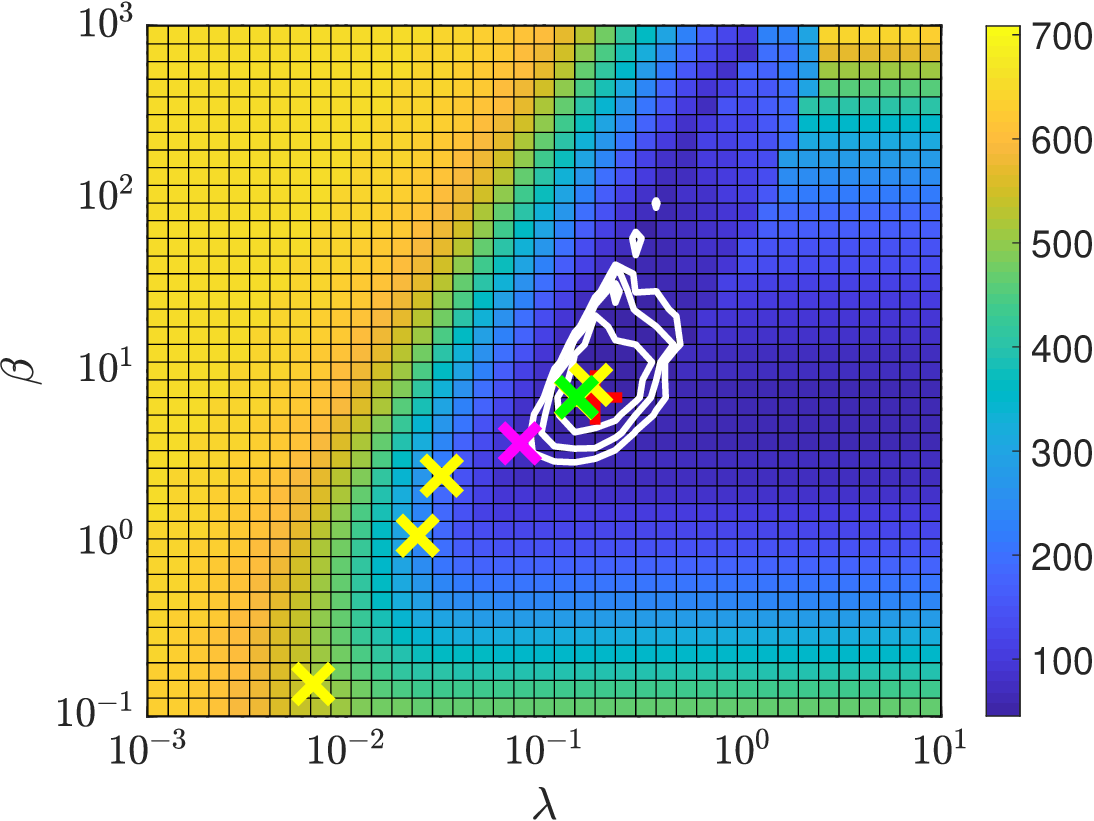}} \hspace{0.1in} 
\subfloat{\includegraphics[width=1.6in]{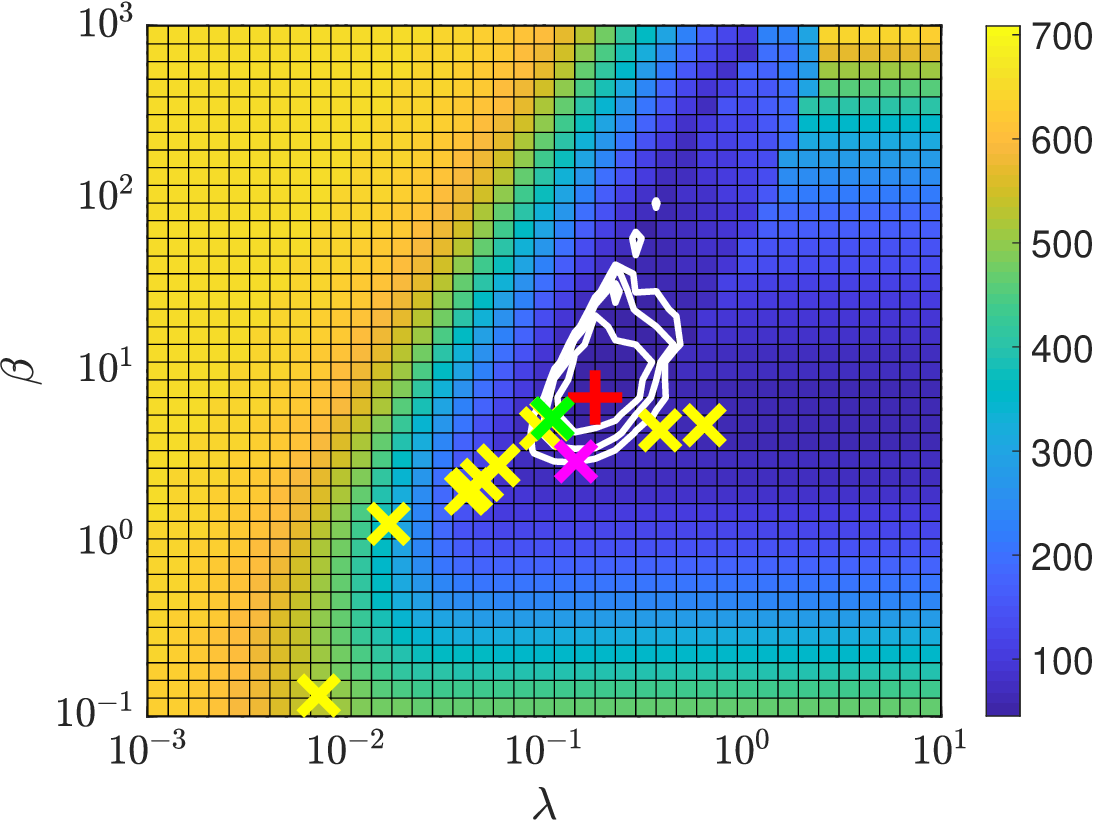}} \hspace{0.1in} 
\subfloat{\includegraphics[width=1.6in]{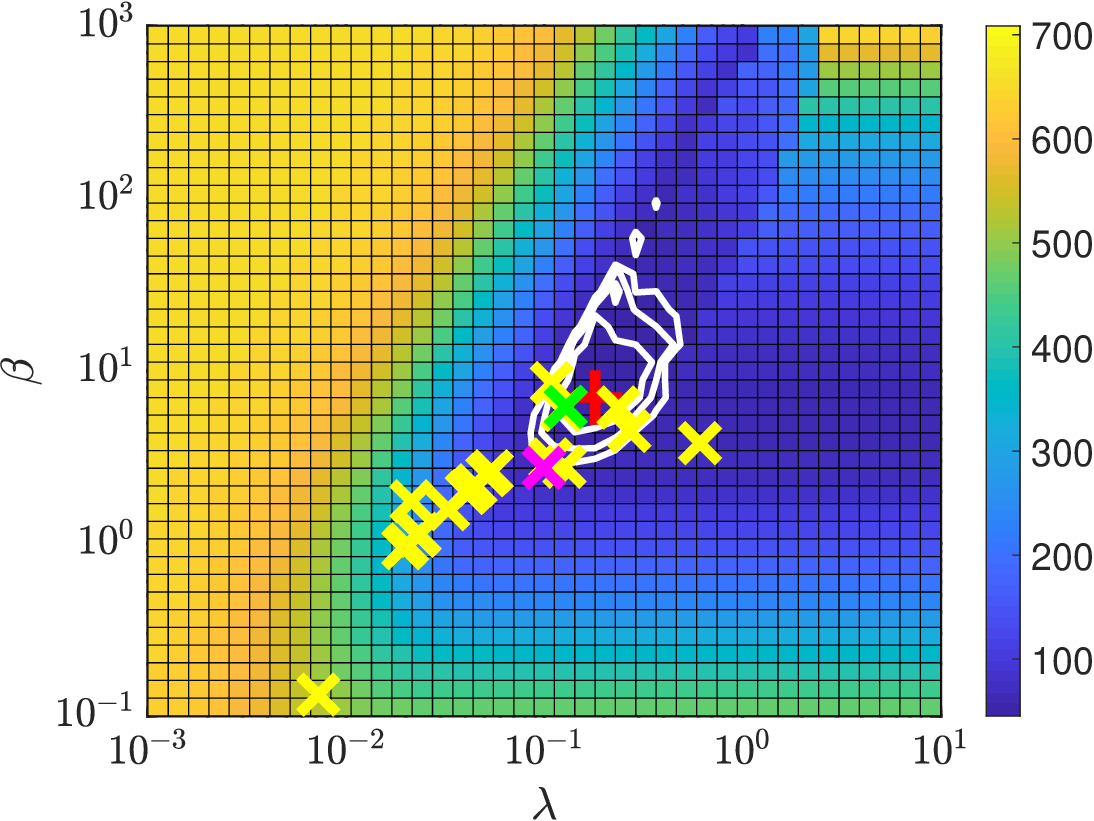}} 
\caption{\textbf{Impact of the number of realizations $R$ of the Monte Carlo vectors when selecting the hyperparameters with the methods described in Section~\ref{ss:algobfgs}}. (left) $R = 5$ , (middle) $R=10$  and (right) $R=20$. (yellow) \textit{Standard SUGAR D-MS} for different $\boldsymbol{\delta}^{(r)}$ leading to $\Theta^{*(r)}$, (pink) Mean over the $R$ realizations of  \textit{Standard SUGAR D-MS} leading to $\overline{\Theta^*}^{R}$, (green) \textit{Averaged SUGAR D-MS}, (red) optimum obtained by performing a grid search minimization of the quadratic error. For the 3 maps, the background displays the logarithmic grid of $40 \times 40$ hyperparameters $\Theta = (\beta,\lambda)$ of quadratic error  $\mathcal{Q}( \widehat{\boldsymbol{u}}(\boldsymbol{z}; \Theta )\lvert \overline{\boldsymbol{u}} )$ values with level sets (black lines).}
\label{fig:bfgs}
\end{figure*}

\begin{figure*}
\hspace{-1cm}{\scriptsize{\begin{tabular}{p{2.7cm}p{2.7cm}p{2.7cm}p{1.5cm}p{1.5cm}p{1.5cm}p{1.5cm}}
Degraded & SUGAR T-ROF & SUGAR D-MS & Original &Degraded & T-ROF &D-MS\\
 & (State-of-the-art) & (Proposed) & zoom &zoom &zoom &zoom \\
  &SSIM: 0.87 & SSIM: 0.86 &&&\\
    &Jacc: 0.31 & Jacc: 0.50 &&&\\
 \includegraphics[height=2cm,width=2.7cm]{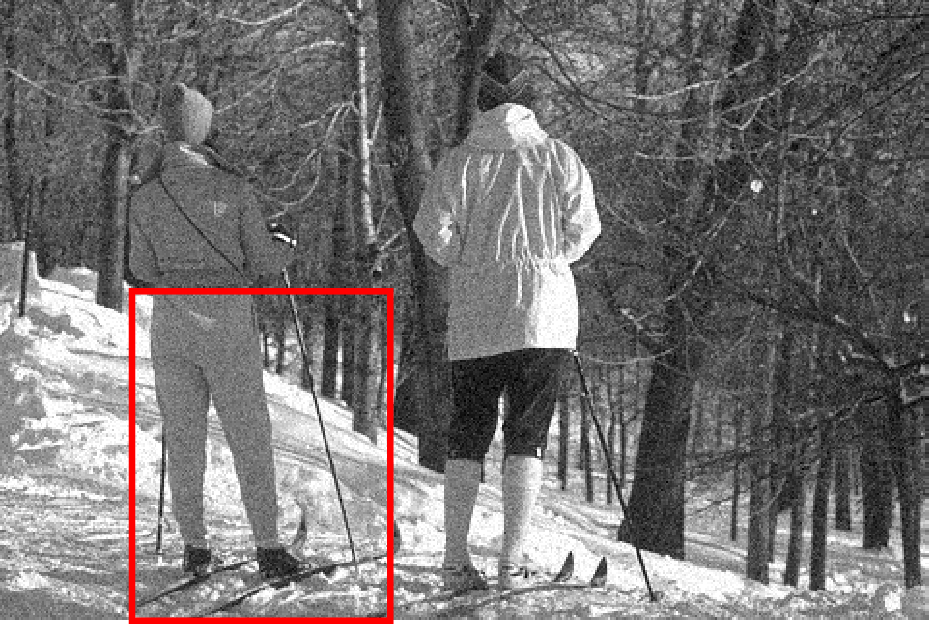}&
\includegraphics[height=2cm,width=2.7cm]{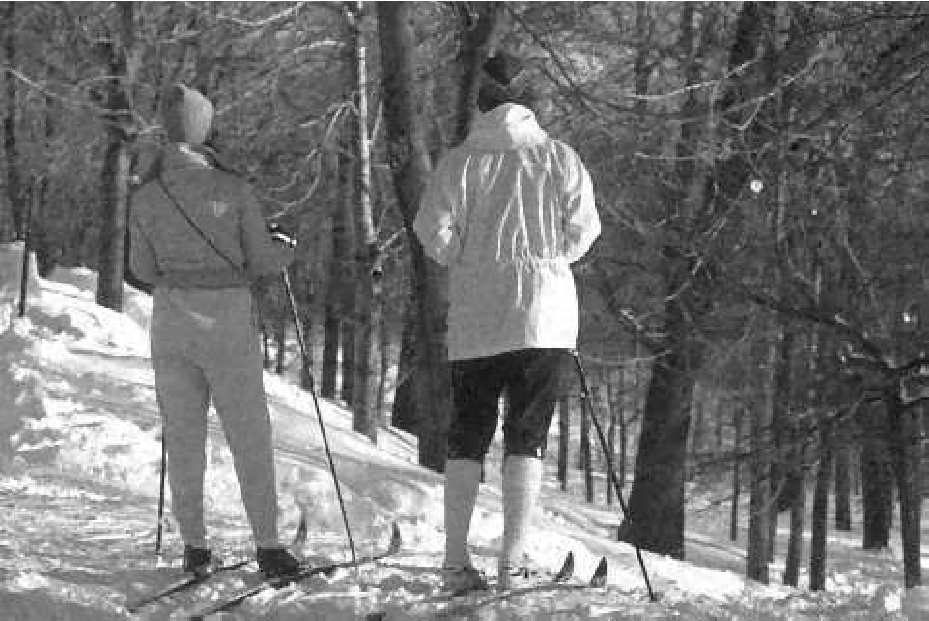}&
\includegraphics[height=2cm,width=2.7cm]{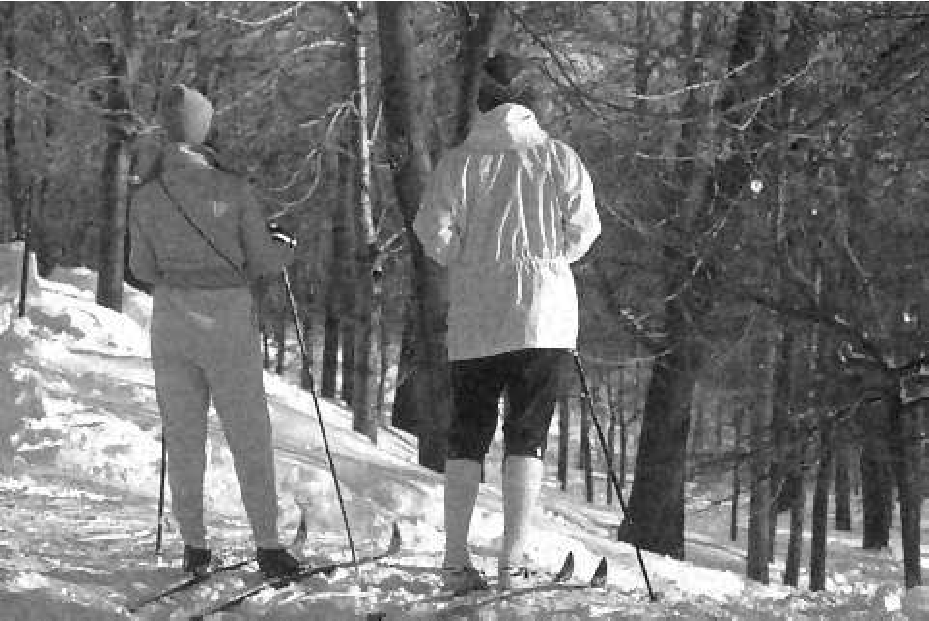}&
\includegraphics[height=2cm,width=1.5cm]{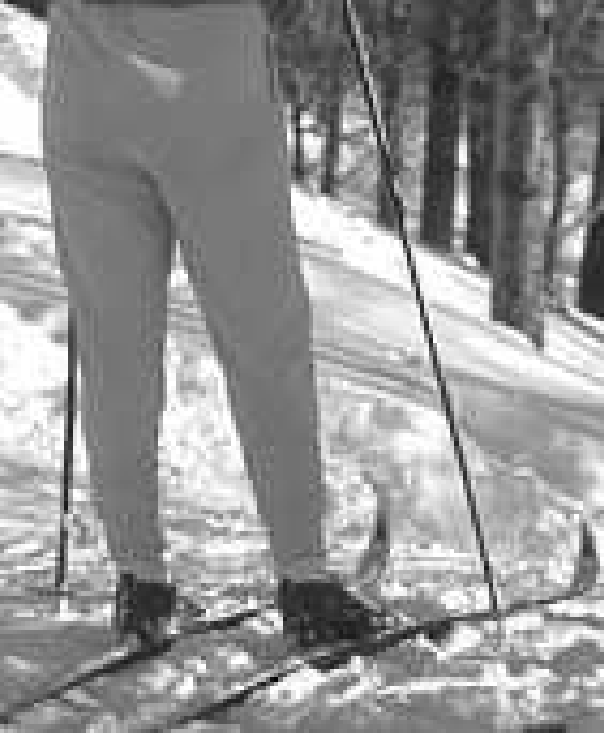}&
\includegraphics[height=2cm,width=1.5cm]{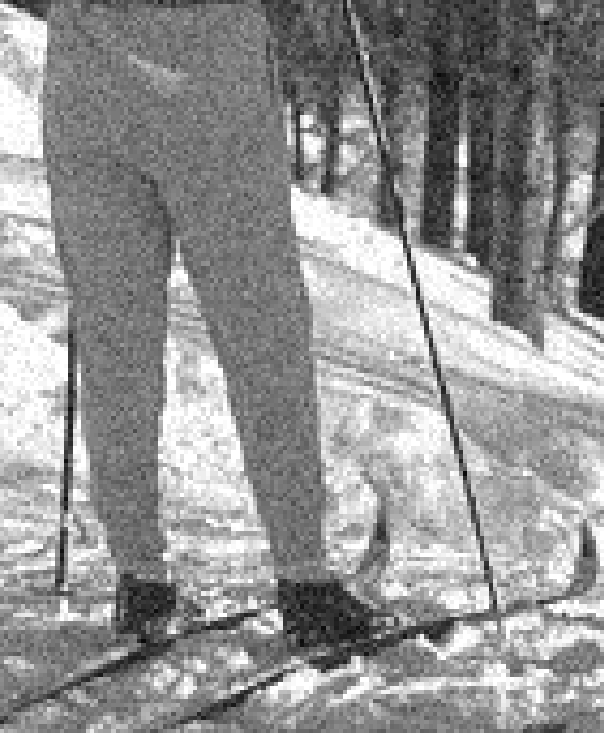}&
\includegraphics[height=2cm,width=1.5cm]{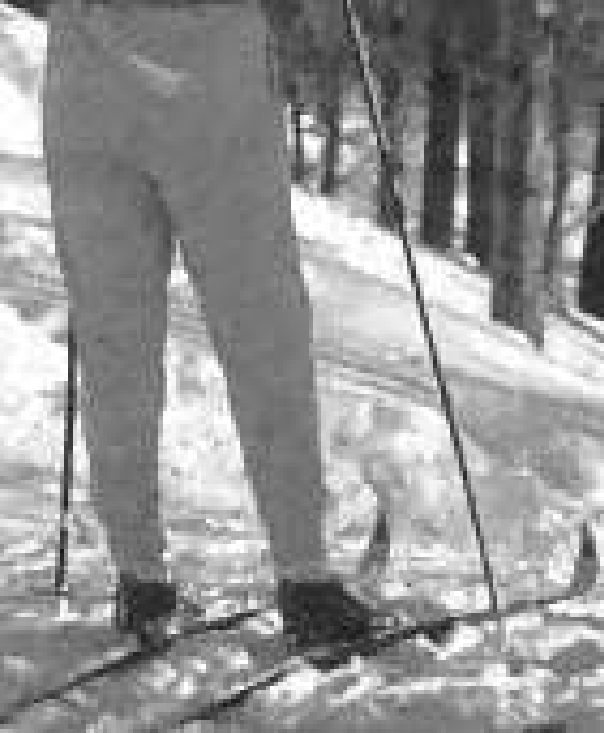}&
 \includegraphics[height=2cm,width=1.5cm]{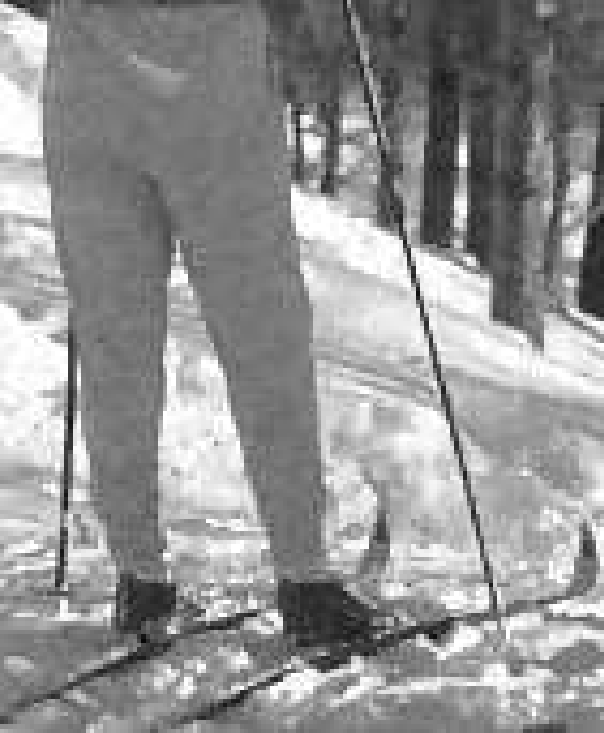}\\
&\includegraphics[height=2cm,width=2.7cm]{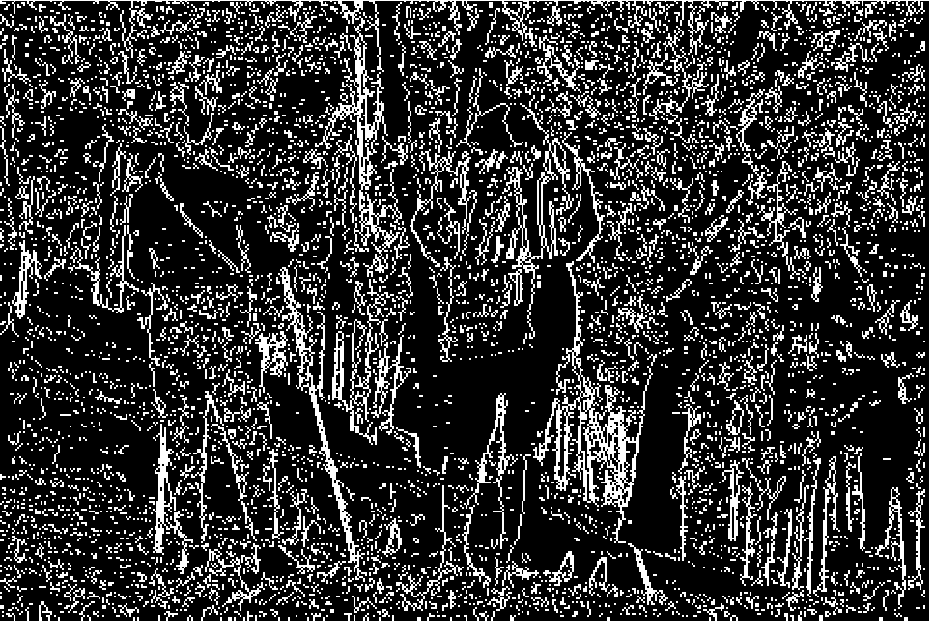}&
\includegraphics[height=2cm,width=2.7cm]{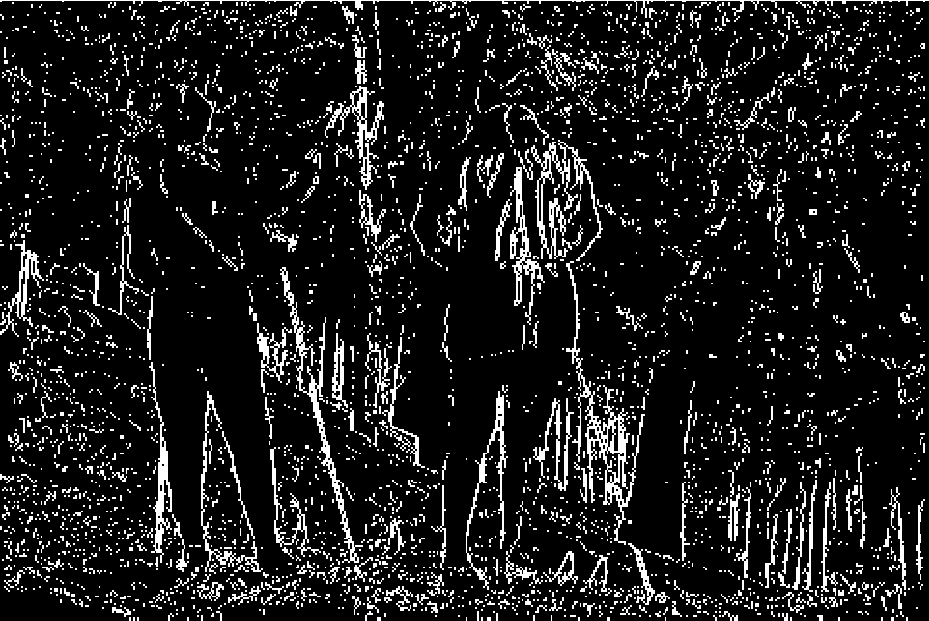} &
&
&\includegraphics[height=2cm,width=1.5cm]{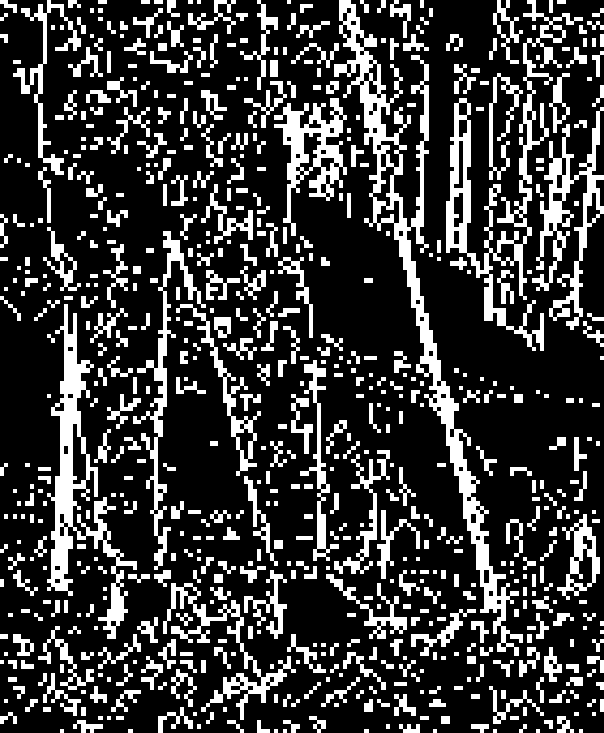}&
 \includegraphics[height=2cm,width=1.5cm]{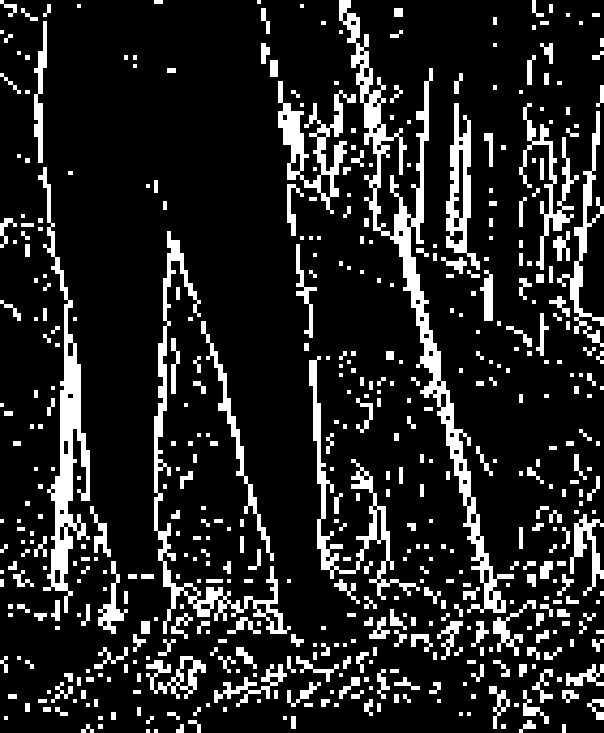}\\
\end{tabular}}}
\caption{Comparisons between SUGAR T-ROF and SUGAR D-MS for noisy images extracted from the BSD69 dataset \cite{Roth2009Fields}.\label{fig:real} }
\end{figure*}

\subsection{Comparison between \textit{Standard} and \textit{Averaged SUGAR D-MS}}
Fig. \ref{fig:bfgs} investigates the ability of the hyperparameter selection strategies proposed in Section~\ref{ss:algobfgs} for different numbers $R \in \{ 5, 10, 20 \}$ of Monte Carlo vector $\boldsymbol{\delta}^{(r)}$ to achieve the optimal hyperparameters minimizing the quadratic error $\mathcal{Q}( \widehat{\boldsymbol{u}}\lvert \overline{\boldsymbol{u}} )$. 

 The optimal hyperparameters $\Theta^{*(r)} = (\beta^{*(r)},\lambda^{*(r)})$ reached by the \textit{Standard SUGAR D-MS} are scattered (Fig.~\ref{fig:bfgs} left), probably due to a lack of accuracy of the estimator $\mathrm{SUGAR}_{\epsilon, \boldsymbol{\delta}} = \overline{\mathrm{SUGAR}}_{\epsilon, \boldsymbol{\Delta}}^{R=1}$. A first approach to alleviate the variability of the result is to carry out an averaging of $R$ hyperparameters $(\beta^{*(r)},\lambda^{*(r)})$ obtained by the \textit{Standard SUGAR D-MS} method: 
\begin{equation}
	\overline{\Theta^*}^{R} = (\overline{\beta^*}^{R}, \overline{\lambda^*}^{R} ) = \frac{1}{R} \sum_{r=1}^R (\beta^{*(r)},\lambda^{*(r)}).
\end{equation}
 As it can be observed in Fig. \ref{fig:bfgs} left, this improvement of the method remains unsatisfactory, compared to \textit{Averaged SUGAR D-MS} 
 which reaches more accurate hyperparameters.

The conclusions reached with this set of experiments  are twofold: first, we highlight that $R=5$ realizations are sufficient to achieve a good estimation of the optimal hyperparameters, second, we note that the proposed automated procedure is $20$ times faster compared to exhaustive search,  a grid search on averaged SURE requiring $60$ minutes of calculation, while \textit{Averaged SUGAR D-MS} requires $3$ minutes, when using \textsc{Matlab} R2018a and an Intel Core i5 processor.

Few estimated images obtained with the fully unsupervised parameter-free \textit{Averaged SUGAR D-MS} are provided in Fig.~\ref{fig:denoisedgeometries} (2nd row).

\subsection{Real-world images}
The proposed automated joint denoising and contour detection procedure \textit{Averaged SUGAR D-MS}  is evaluated on real-world images extracted from BSD69 dataset \cite{Roth2009Fields} degraded with a Gaussian noise with $\sigma=0.05$. In our experiments we set $R=5$ and $\sigma$ has been estimated from noisy data following Eq.~\eqref{eq:sigest} in Appendix~\ref{ssec:sig_est}.
Denoised images and contours provided by the proposed data-driven \textit{Averaged SUGAR D-MS} strategy are compared with those yield by SUGAR T-ROF (a two-step procedure, consisting in, first, a piecewise constant denoising with automated tuning of the regularization parameter~\cite{deledalle2014stein}, followed by an iterative thresholding procedure~\cite{cai2013multiclass}). In Fig.~\ref{fig:real}, we can observe that the denoising performance are very close for both procedures (in terms of SSIM, SUGAR T-ROF is slightly better)  while the contour detection is significantly improved with the SUGAR D-MS procedure (which is confirmed when computing Jaccard index w.r.t contours obtained from the original image). 

\section{Conclusion}

This work devises a procedure to automatically select the hyperparameters of the D-MS functional allowing to perform simultaneously image denoising and contour detection. This approach is fully unsupervised compared to alternative deep learning strategies such as \cite{bertasius2015deepedge} and reference therein. However, in a future work, it would probably benefit to combine D-MS functional with unfolded deep learning strategies in order to design efficient supervised combined denoising and contour detection approaches.

A \textsc{Matlab} toolbox implementing the proposed automated image denoising and contour detection procedure is publicly available\footnote{\url{https://github.com/charlesglucas/sugar_dms}}.

\appendix


\section{Notations} 
\label{sec:notations}
Let $\mathcal{H}$ a real Hilbert space, and $f : \mathcal{H} \rightarrow (+\infty,+\infty]$ a function which is proper, convex, and lower-semicontinuous and $\tau >0$ a real parameter, the proximity operator of $\tau f$ at point $\boldsymbol{v} \in \mathcal{H}$ is uniquely defined by $
\mathrm{prox}_{\tau f} (\boldsymbol{v}) = \underset{\boldsymbol{u}\in \mathcal{H}}{\arg\min} \, \frac{1}{2} \lVert \boldsymbol{u} - \boldsymbol{v} \rVert_2^2 + \tau f(\boldsymbol{v})$. Additionally, let  $\mathcal{G}$ be a real Hilbert space and 
let $\mathcal{A}\colon \mathcal{H} \to \mathcal{G}$ a Lipschitzian map, we denote by $L_{\mathcal{A}}>0$ the Lipschitz modulus of $\mathcal{A}$,  such that, for every $(x,y)\in \mathcal{H}\times \mathcal{H}$, $\Vert \mathcal{A}(x) - \mathcal{A}(y) \Vert \leq L_{\mathcal{A}} \Vert x - y \Vert$.  
Further,  for every $(x,y)\in \mathbb{R}\times \mathbb{R}$, we denote $\mathcal{I}_{x>y}= 1$ if $x>y $ and $0$ otherwise.
Finally, $\textbf{I}_N$ denotes the identity matrix acting on $\mathbb{R}^{N}$, and $\boldsymbol{1}_N$ (resp. $\boldsymbol{0}_N$) is the vector of $\mathbb{R}^N$ containing only ones (resp. zeros).

\section{State-of-the-art for contour detection in image processing}

This work focuses on performing \textit{jointly} piecewise smooth denoising and contour detection on images. 
In many classical approaches, image reconstruction is embedded into a variational formalism~\cite{golub1999tikhonov,tikhonov2013numerical}, which amounts to find a minimizer of a functional consisting of the sum of a data fidelity term and a prior penalization, i.e.,
\begin{align}
\label{eq:BZ}
\underset{\boldsymbol{u}}{\mathrm{minimize}} \, \frac{1}{2}\lVert \boldsymbol{u} - \boldsymbol{z}\rVert^2_2 + \gamma p(\boldsymbol{\mathrm{D}} \boldsymbol{u})
\end{align}
where $\gamma > 0$, $\boldsymbol{z} \in \mathbb{R}^{\lvert \Omega \rvert}$ denotes the observed degraded image, defined on a grid of pixels $\Omega$, and $\boldsymbol{\mathrm{D}}\colon \mathbb{R}^{\lvert \Omega \rvert} \to \mathbb{R}^{\vert \mathcal{E}\vert}$ is a discrete difference operator such that $\boldsymbol{\mathrm{D}}\boldsymbol{u}$ lives on a lattice of contours $\mathcal{E}$.
Appropriate choices of the penalization term $p$, yield e.g. the Potts functional, when $p = \lVert \cdot \rVert_0$, or the Blake and Zisserman functional \cite{blake1987visual,chambolle1995image}, corresponding to $p(\boldsymbol{\mathrm{D}} \boldsymbol{u}) = \sum_b \min\lbrace \lVert \boldsymbol{\mathrm{D}}_b \boldsymbol{u}\rVert_q^q, \chi^q\rbrace$, for some $q \in [1, \infty)$ and $\chi>0$, with $\boldsymbol{\mathrm{D}}_b$ being associated with several rows of $\boldsymbol{\mathrm{D}}$. 
In the same vein, considering a convex relaxation of Potts functional, contour detection can be obtained from the minimization of the Rudin-Osher-Fatemi (ROF) functional~\cite{rudin1992nonlinear}, which favors piecewise constant estimate when considering $p(\boldsymbol{\mathrm{D}}\boldsymbol{u}) =   \sum_b \Vert \boldsymbol{\mathrm{D}}_b \boldsymbol{u} \Vert_2$.  An alternative solution relies on a bi-convex  formulation that can trace back to the Mumford-Shah~\cite{mumford1989optimal} or Geman and Geman functionals~\cite{geman1984stochastic}, which may be written in the discrete variational formulation setting as:
\begin{equation} 
\label{eq:dms_univariate}
\underset{\boldsymbol{u} \in \mathbb{R}^{\lvert \Omega \rvert}, \boldsymbol{e}\in \mathbb{R}^{\vert \mathcal{E}\vert}}{\textrm{minimize}}   \frac{1}{2} \Vert \boldsymbol{u} - \boldsymbol{z} \rVert_2^2 + \beta \lVert (1-\boldsymbol{e}) \odot \boldsymbol{\mathrm{D}} \boldsymbol{u} \rVert_2^2 +  \lambda h(\boldsymbol{e}),
\end{equation}
where $\odot$ denotes the component-wise product, $h$ denotes a convex function enforcing sparsity  and $\beta>0$ and $\lambda>0$ are regularization parameters.
This Discrete Mumford-Shah (D-MS) functional provides a piecewise-smooth reconstructed image $\widehat{\boldsymbol{u}}$ as well as estimated sparse contours $\widehat{\boldsymbol{e}}$.

To achieve segmentation into $K$ regions, Cai and Steidl designed an iterated thresolding strategy~\cite{cai2013multiclass} applied as a post-processing onto the minimizer of ROF functional.
The resulting state-of-the-art two-step procedure,  referred to as Thresholded ROF (T-ROF), was proven to be equivalent to minimizing the $K$-region piecewise constant Mumford-Shah functional.
From this thresholded solution, it is then straightforward to identify the contours of the image.
However, such an \textit{indirect} contour extraction procedure restricts to \textit{closed} contours.
Fig.~\ref{fig:trof} shows a comparison between D-MS and T-ROF methods on a piecewise smooth image.
The Mumford-Shah estimate is piecewise smooth preserving the discontinuities of the image while the ROF estimate is piecewise constant, leading to staircasing effects.
We observe that T-ROF erroneously detects interfaces in areas on which the image is piecewise smooth, as opposed to the D-MS whose estimated contour variable is approximately zero everywhere except at the location of the actual signal discontinuity.

\begin{figure*}[!h]
\centering
\subfloat[Original image]{\includegraphics[width=0.2\linewidth]{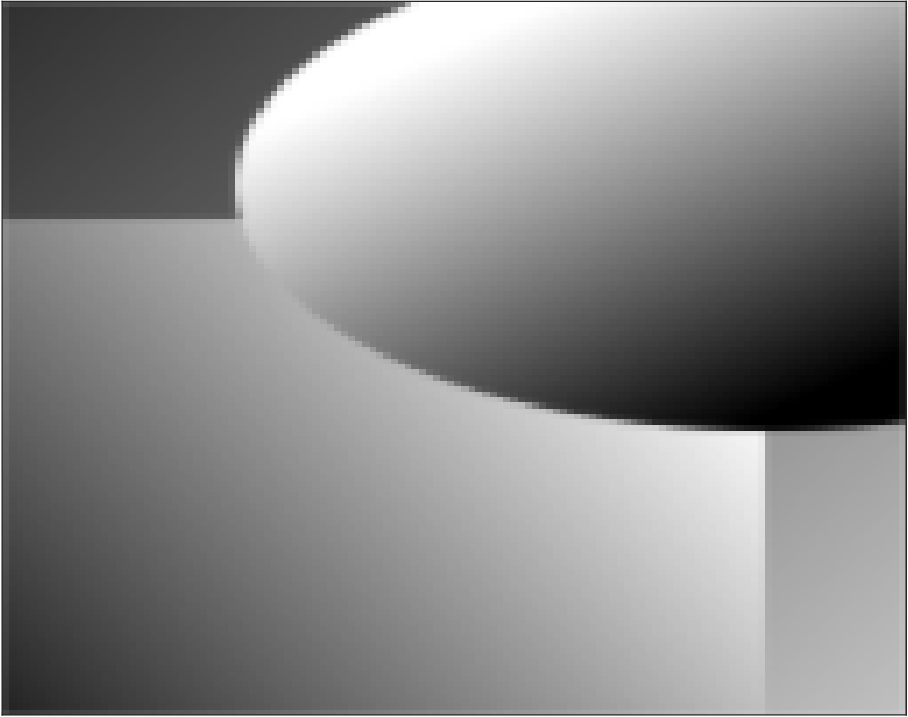}}
\hspace{0.2in}
\subfloat[Noisy Observation]{\includegraphics[width=0.2\linewidth]{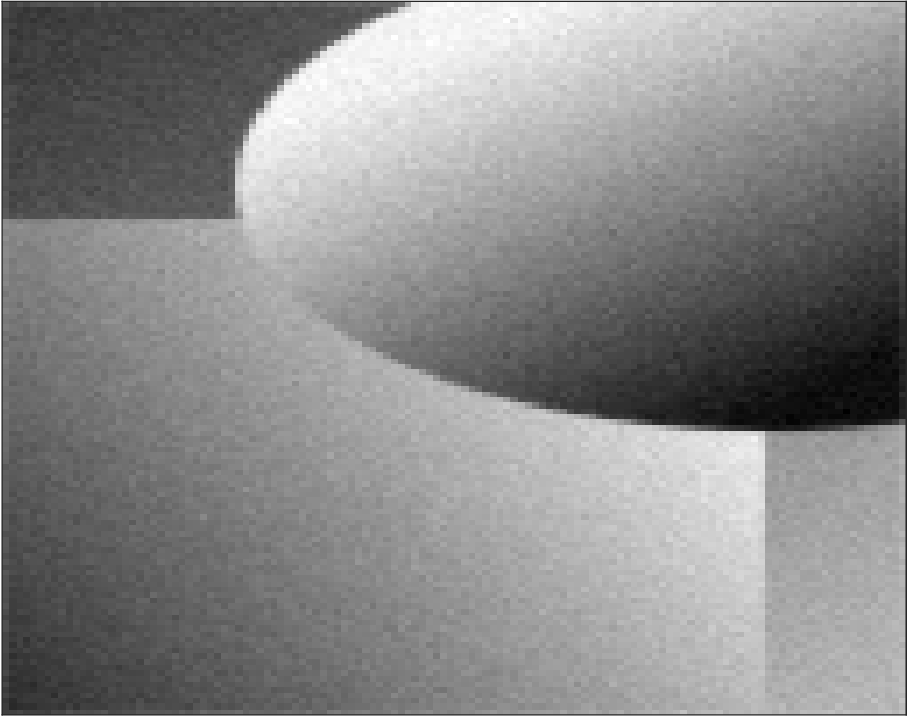}}
\hspace{0.2in}
\subfloat[T-ROF]{\includegraphics[width=0.2\linewidth]{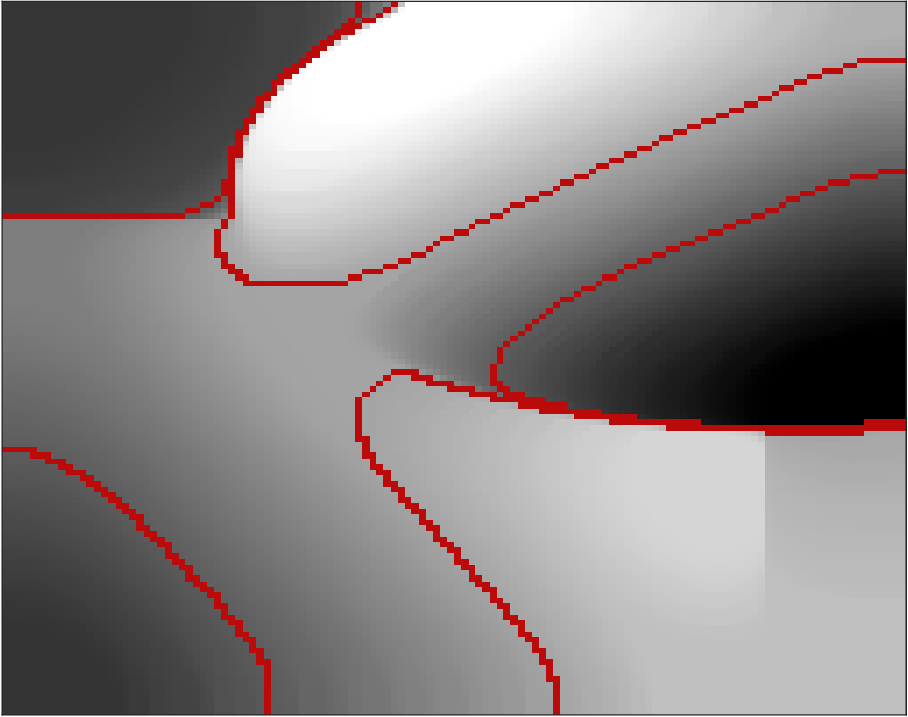}}
\hspace{0.2in}
\subfloat[D-MS]{\includegraphics[width=0.2\linewidth]{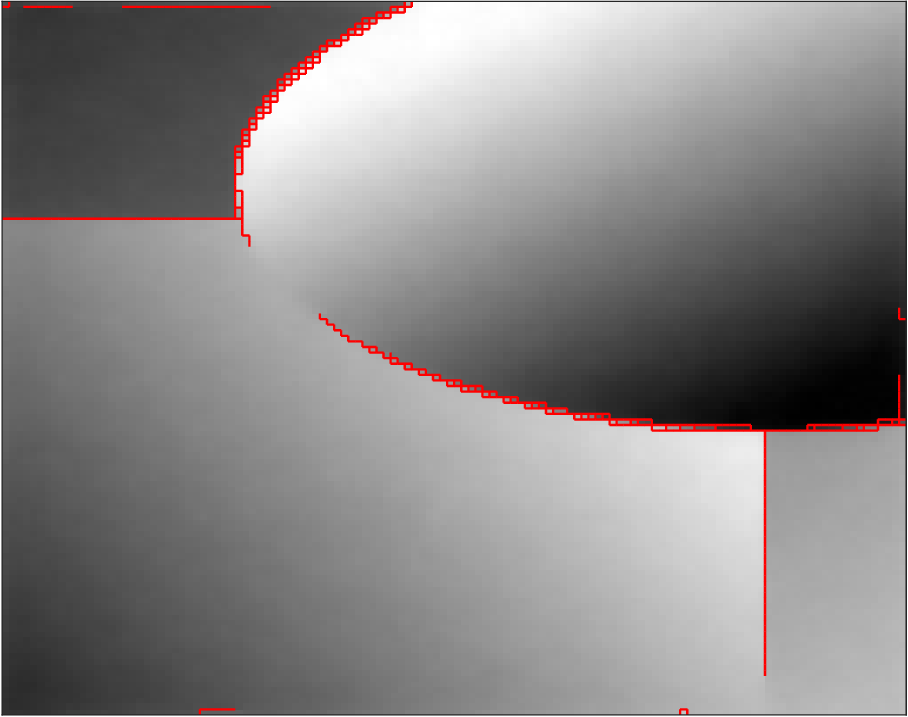}}
\caption{
Comparison of state-of-the-art \textit{convex} variational formulation T-ROF and the studied \textit{non-convex} D-MS performing image denoising and contour extraction. From left to right: (a)~Original noise-free piecewise smooth image, (b)~Observations $ \boldsymbol{z}$ corrupted by an additive Gaussian noise, (c) State-of-the-art ROF piecewise constant estimate and contours derived from thresholding into $K = 3$ regions (displayed in red), and (d)~Studied D-MS piecewise smooth approximation and estimated contours (displayed in red). 
}
\label{fig:trof}
\end{figure*}

\section{State-of-the-art for Hyperparameter selection}
\label{sec:soahyper}
 All aforementioned procedures for image denoising and contour detection involve \textit{hyperparameters}, e.g. $\beta$ and $\lambda$ in~\eqref{eq:dms_univariate}.
To reach satisfactory performance, the fine-tuning of these parameters is crucial.
Although central in signal and image processing, this difficult task is still an ongoing challenge, particularly for {variational} methods.

A first class of methods relying on hierarchical Bayesian approaches and has been widely used, both in signal and image processing~\cite{Molina_2001_image_resto_astronomy,Babacan_S_2009_ieee-tip_Variational_bbdutp, Dobigeon2007b,  vacar2019unsupervised}.
The drawbacks of Bayesian methods are that they rapidly become computationally heavy as the model for observed data gets more complicated, and their computational cost increases with the number of hyperparameters to be tuned. 
For specific 1D denoising problems, efficient hybrid variational/Bayesian strategies can be designed \cite{freconbayesian}.

Several other classes of methods, such as \textit{cross-validation} or Stein Unbiased Risk Estimate (SURE) formulation, can be formulated as a bilevel optimization problem. 
Cross-validation relies on a given labeled data set composed of noisy samples with their associated ground truth~\cite{stone1978cross,golub1979generalized}.
However, in several real-world applications, such as medical imaging~\cite{marin_mammographic_2017} or nonlinear physics problems~\cite{pascal2020parameter}, obtaining a large enough labeled dataset is very challenging, if not impossible. 
Hence, SURE, initially proposed in~\cite{Stein_C_1981_j-annals-statistics_estimation_mmnd}, has long been favored for its combined simplicity and efficiency. 
Stein-based hyperparameter strategies rely on an additive Gaussian noise model to design an estimate of the \textit{inaccessible} true risk, defined as the quadratic error between the estimate and ground truth.
The major advantage of these approaches is that they do not require to access ground truth.
Then, the selection of optimal hyperparameters is done by minimizing SURE and by making use of  Finite Difference strategies~\cite{ye1998measuring,shen2002adaptive} or/and Monte Carlo averaging~\cite{girard1989fast, ramani2008monte, deledalle2014stein}, to yield tractable and fast implementation of Stein-based risk estimates.

However, the strategy to find the optimal hyperparameters for a specific criterion has a huge impact on the solution both in terms of quality assessment and in terms of computational load. 
The most standard approach consists in computing a chosen error criterion over a grid of parameters~\cite{donoho1994ideal,ramani2008monte, eldar2008generalized}, and to select the parameter of the grid for which the error is minimal. 
Such a grid search procedure suffers from a high computation cost, especially when dealing with $L\geq 2$ regularization parameters.
To circumvent this difficulty, efficient automated minimization methods are required.
It was early envisionned by Chaux~\textit{et~al.}~\cite{chaux2008nonlinear}, who proposed and assessed numerically an empirical descent algorithm for automated choice of regularization parameters, but with no convergence guarantee. 
A deeper theoretical analysis was then provided by Deledalle~\textit{et~al.}~\cite{deledalle2014stein}, evidencing sufficient conditions so that Stein Unbiased Risk Estimate is differentiable with respect to hyperparameters, thus enabling to define the Stein Unbiased GrAdient of the Risk (SUGAR) estimator and to provide a practical implementation based on an iterative differentiation strategy.
Combining SUGAR with a quasi-Newton descent procedure, a fast algorithm was designed to achieve optimal  hyperparameters selection for objective functions of the form~\eqref{eq:BZ}.
This strategy,  later extended in~\cite{eldar2008generalized,pascal2020automated} for correlated noise,  proved its efficiency for texture segmentation~\cite{pascal2020automated}, piecewise linear signal denoising~\cite{pascal2020parameter}, and in spatial-spectral deconvolution for large multispectral data~\cite{ammanouil2019parallel}.

\section{Minimization of the discrete Mumford-Shah functional}
        \label{ssec:SL-PAM}
        
The D-MS functional introduced in Eq.~\eqref{eq:dms_univariate} being nonconvex, standard proximal algorithms~\cite{combettes2011proximal,parikh2014proximal,Bauschke_H_2011_book_con_amo} cannot be used directly for its minimization.
However, the fact that the functional is separately convex with respect to each variable advocates the use of alternating schemes.
Among the vast variety of existing alternating algorithms benefiting from convergence guarantees~\cite{attouch2010proximal,bolte2014proximal,foare2019semi}, a numerically efficient procedure for the minimization of D-MS like functionals appears to be the Semi-Linearized Proximal Alternating Minimization (SL-PAM) scheme proposed in~\cite{foare2019semi}, whose iterations in the general setting of Problem~\ref{pb:gen-SL-PAM} are recalled in Algorithm~\ref{alg:SLPAM}.

\begin{problem}[Nonconvex and nonsmooth minimization]
\label{pb:gen-SL-PAM}
Let 
$f : \mathbb{R}^{\lvert \Omega \rvert} \rightarrow (-\infty, +\infty]$, $h :\mathbb{R}^{\vert \mathcal{E}\vert } \rightarrow (-\infty, +\infty]$ two proper lower semi-continuous functions and $g : \mathbb{R}^{\lvert \Omega \rvert} \times \mathbb{R}^{\vert \mathcal{E}\vert } \rightarrow (-\infty +\infty]$ a $C^1$ function. Let $\lambda>0$ and $\beta>0$. We aim to estimate:
\begin{equation} \label{eq:dms}
(\widehat{\boldsymbol{u}},\widehat{\boldsymbol{e}}) \in  \!\!\!\! \underset{\boldsymbol{u} \in  \mathbb{R}^{\lvert \Omega \rvert}, \boldsymbol{e} \in  \mathbb{R}^{\vert \mathcal{E}\vert } }{\textrm{Argmin}}  \!\!\!\!  \Psi(\boldsymbol{u},\boldsymbol{e}) :=  f(\boldsymbol{u}) +  \beta g(\boldsymbol{u},\boldsymbol{e}) +   \lambda h(\boldsymbol{e}).
\end{equation}
\end{problem}

The algorithmic scheme SL-PAM (Algorithm~\ref{alg:SLPAM}) is an hybrid version between PAM~\cite{attouch2010proximal} and  PALM~\cite{bolte2014proximal}. 
The key ingredient for the efficiency of SL-PAM consists in avoiding the linearization 
with respect to the variable $\boldsymbol{e}^{[k]}$, enabling to choose larger descent steps. Under some technical assumptions, such as the existence of a closed-form expressions of the involved proximity operators, the sequence $\left( \boldsymbol{u}^{[k]},\boldsymbol{e}^{[k]}\right)_{k\in \mathbb{N}}$ converges toward a critical point of $\Psi(\boldsymbol{u}, \boldsymbol{e})$.

\begin{algorithm} [h!]
\caption{SL-PAM \label{alg:SLPAM}}
           \begin{algorithmic}
            \State \textbf{Initialization:} \; $\boldsymbol{u}^{[0]}=\boldsymbol{z}$, $\boldsymbol{e}^{[0]}=\boldsymbol{1}_{\lvert \mathcal{E}\vert}$,   $\gamma > 1$ and $\xi>0$.
             \State \textbf{While} $\vert \Psi(\boldsymbol{u}^{[k+1]},\boldsymbol{e}^{[k+1]})-\Psi(\boldsymbol{u}^{[k]},\boldsymbol{e}^{[k]}) \vert> \xi$ \\
             
$\left \lfloor \begin{array}{l}
            \mbox{Set $c_k = \gamma L_{\beta\nabla_u g(\cdot,\boldsymbol{e}^{[k]})}$ and $d_k > 0$ }\\
           \widetilde{\boldsymbol{u}}^{[k]} = \boldsymbol{u}^{[k]}-\frac{\beta}{c_k}\nabla_{\boldsymbol{u}}{g}(\boldsymbol{u}^{[k]},\boldsymbol{e}^{[k]})\\
          \boldsymbol{u}^{[k+1]} = \mathrm{prox}_{\frac{1}{c_k} f} \left(\widetilde{\boldsymbol{u}}^{[k]}\right)\\           
           \boldsymbol{e}^{[k+1]} = \mathrm{prox}_{\frac{1}{d_k} \left( \lambda h +  \beta{g}(\boldsymbol{u}^{[k+1]}, \boldsymbol{\cdot})\right)} \left(\boldsymbol{e}^{[k]}\right)
          \end{array}\right.$
          
             \end{algorithmic}
\end{algorithm}

The piecewise smooth image denoising and contour detection strategy defined by \eqref{eq:dms_univariate} and on which this paper focuses corresponds to a particularization of Problem~1. 
The three terms of the objective function $\Psi$ of Eq.~\eqref{eq:dms} are particularized to
    \begin{equation}
    \label{eq:DMSfunctions}
        \begin{cases}
            f(\boldsymbol{u}) = \displaystyle   \frac{1}{2} \| \boldsymbol{u} - \boldsymbol{z} \|_2^2,\\
           g(\boldsymbol{u},\boldsymbol{e}) = \displaystyle  \sum_{i=1}^{\vert\mathcal{E}\vert} (1 - e_i)^2 (\boldsymbol{\mathrm{D}}_i \boldsymbol{u})^2, \\
           h(\boldsymbol{e}) = \displaystyle   \sum_{i=1}^{\vert \mathcal{E}\vert} h_i( e_{i}) \\
        \end{cases}
    \end{equation}
where, for all $i \in \{ 1, \ldots, \vert \mathcal{E} \vert \}$,  $\boldsymbol{\mathrm{D}}_i$ denotes the $i^{th}$-row of the discrete gradient operator $\boldsymbol{\mathrm{D}}$, and $h_i: \mathbb{R} \mapsto (-\infty,+\infty]$ is a separable proper, lower semi-continuous, and  convex function having a proximal operator with known closed-form expression. The iterations of Algorithm~\ref{alg:SLPAM} specified to the minimization of D-MS lead to 
Algorithm~\ref{alg:dms0}~\cite{foare2019semi}.

For a detailed discussion of the convergence behavior depending on the choice of the descent steps $\gamma$ and $d_k$, the reader is referred to~\cite{foare2019semi}.
The most efficient setting appears to choose both of them the smallest possible.

\begin{algorithm}[htbp]
\caption{DMS-SLPAM to solve \eqref{eq:dms_univariate} \label{alg:dms0} }
       \begin{algorithmic}
           \State \textbf{Input:} \;Data $\boldsymbol{z}$. Set $\beta>0$, $\lambda>0$.
           \State \textbf{Initialization}: $\boldsymbol{u}^{[0]}=\boldsymbol{z}$, $\boldsymbol{e}^{[0]}=\boldsymbol{1}_{\lvert \mathcal{E}\vert}\in \mathbb{R}^{\vert \mathcal{E}\vert }$.
           \State Set $\gamma > 1$ and $\xi > 0$.
          \State \textbf{While} $\vert \Psi( \boldsymbol{u}^{[k+1]}, \boldsymbol{e}^{[k+1]})-\Psi(\boldsymbol{u}^{[k]}, \boldsymbol{e}^{[k]}) \vert > \xi $ \\
          
          $\left \lfloor \begin{array}{l}
          \mbox{Set $ c_k = \gamma \beta \Vert  \boldsymbol{\mathrm{D}} \Vert^2$ and $d_k >0 $. }\\
          \widetilde{\boldsymbol{u}}^{[k]}  = \boldsymbol{u}^{[k]}-\frac{\beta}{c_k}\nabla_{\boldsymbol{u}}g(\boldsymbol{u}^{[k]},\boldsymbol{e}^{[k]}) \\
            \boldsymbol{u}^{[k+1]} = \textrm{prox}_{\frac{1}{c_k} f} (\widetilde{\boldsymbol{u}}^{[k]}) \\
           \mbox{For all $  i \in \{1,\ldots ,\vert \mathcal{E} \vert\} $} \\
          \left \lfloor \begin{array}{l} \;
           \widetilde{e}_i^{[k]} =\frac{\beta \displaystyle (\boldsymbol{\mathrm{D}}_i \boldsymbol{u}^{[k+1]})^2 + \frac{d_k e_i^{[k]} }{2}}{\beta \displaystyle (\boldsymbol{\mathrm{D}}_i \boldsymbol{u}^{[k+1]})^2 + \frac{d_k}{2}} \\
             e_i^{[k+1]} = \mathrm{prox}_{ \frac{\lambda}{2 \beta (\boldsymbol{\mathrm{D}}_i \boldsymbol{u}^{[k+1]})^2 + d_k} {h_i}} (\widetilde{e}_i^{[k]})
              \end{array} \right. \\
            \end{array} \right.$
\end{algorithmic}
\end{algorithm}

\section{Risk estimation}
\label{ssec:SURE}

As previously discussed in introduction, many variational approaches for image restoration and contour detection consists in designing a parametric estimator $\widehat{\boldsymbol{u}}(\boldsymbol{z}; \Theta)$, e.g., defined as a minimizer of \eqref{eq:BZ} or \eqref{eq:dms_univariate},  which aims at providing the best possible estimate of a quantity of interest $\overline{\boldsymbol{u}}$ from noisy observations $\boldsymbol{z}$.
By construction, the quality of this estimate crucially relies  on the precise selection of the hyperparameters $\Theta$, which can be for instance the \textit{regularization parameters} $\beta$ and $\lambda$ in D-MS functional~\eqref{eq:dms_univariate}.\\

\noindent \textbf{Quadratic risk based parameter selection}\\
\label{sssec:param_selec}
\noindent The hyperparameters tuning task is commonly formulated as the minimization of the following \textit{quadratic risk}:
\begin{align}
\label{eq:def_quad_risk}
Q[\widehat{\boldsymbol{u}}](\Theta) =  \mathbb{E}[\Vert \widehat{\boldsymbol{u}}(\boldsymbol{z};\Theta) - \overline{\boldsymbol{u}} \Vert_2^2],
\end{align}
measuring the expected \textit{reconstruction} error made when estimating ground truth $\overline{\boldsymbol{u}}$ by $\widehat{\boldsymbol{u}}(\boldsymbol{z}; \Theta)$.
The expectation in Eq.~\eqref{eq:def_quad_risk} runs over the realizations of the noise corrupting $\boldsymbol{z}$. 

In practice, $\overline{\boldsymbol{u}}$ being unknown and the number of observed samples $\boldsymbol{z}$ being limited, if not reduced to one, the exact quadratic risk $Q[\widehat{\boldsymbol{u}}](\Theta)$ of Eq.~\eqref{eq:def_quad_risk} is not accessible.
Thus, the minimization of the quadratic risk $Q[\widehat{\boldsymbol{u}}](\Theta)$ is replaced by the minimization of some \textit{estimate} $\widehat{Q}(\boldsymbol{z};\Theta\lvert {\sigma^2})$ computed from a single noisy sample $\boldsymbol{z}$, not requiring the knowledge of ground truth but only some prior knowledge about the noise, e.g., its standard deviation $\sigma$ :
\begin{equation}
\label{eq:riskmin}
\widehat{\Theta} \in \underset{\Theta}{\mathrm{Argmin}} \;\widehat{Q}(\boldsymbol{z};\Theta\lvert {\sigma^2}).
\end{equation}

\noindent Then, the design of a fast \textit{gradient}-based hyperparameter selection strategy providing optimal hyperparameters from the minimization of~\eqref{eq:riskmin}  requires an unbiased estimate $\partial_{\Theta} \widehat{Q}(\boldsymbol{z}; \Theta \lvert \sigma^2)$ of the \textit{gradient} of the quadratic risk with respect to hyperparameters $\Theta$.
Such a general procedure is sketched in Algorithm~\ref{alg:mininize_sure}.

{\color{black}
      \begin{algorithm}[htbp]
            \caption{Automated selection of hyperparameters.\label{alg:mininize_sure}}
            \begin{algorithmic}
            \State \textbf{Input:} Data $\boldsymbol{z}$ and true or estimated $\sigma^2$.
            \State \textbf{Initialization:} Set $\Theta^{[0]} \in \mathbb{R}^L $. 
            \State \textbf{For } $t= 0$ \textbf{to} $T_{\max}-1$ \textbf{do}
            
            $\left \lfloor \begin{array}{l} 
             \mbox{Compute $ \widehat{Q}(\boldsymbol{z}; \Theta^{[t]}\ \lvert \sigma^2)$} \\
             \mbox{Compute $\partial_{\Theta} \widehat{Q}(\boldsymbol{z}; \Theta^{[t]}\ \lvert \sigma^2)$} \\
             \mbox{Update $\Theta^{[t]} $ to $\Theta^{[t+1]}$ \textit{via} a gradient} \\
             \mbox{descent step}
              \end{array} \right.$ \\
                   \State \textbf{Output:}  $\Theta^* =  \Theta^{[T_{\rm{max}}]}$
        	\end{algorithmic}
        \end{algorithm}
}

\noindent \textbf{Stein Unbiased Risk Estimate -- }
To address the fact that the ground truth $\overline{\boldsymbol{u}}$ is unknown, the pioneer work of Stein~\cite{Stein_C_1981_j-annals-statistics_estimation_mmnd} proposed an unbiased estimate of the quadratic risk, based on an i.i.d. Gaussian noise additive model in which the observations are supposed to write
\begin{align}
\label{eq:obs_model}
\boldsymbol{z} = \overline{\boldsymbol{u}} + \sigma \boldsymbol{\zeta}, \quad \boldsymbol{\zeta} \sim \mathcal{N}(\boldsymbol{0}_N, \textbf{I}_N)
\end{align}
with $N = \lvert \Omega \rvert$ is the number of pixels and $\sigma^2$ the \textit{known} variance of the noise. 
Then, under integrability and regularity assumptions,  together with the observation model~\eqref{eq:obs_model},  the so-called Stein Unbiased Risk Estimator (SURE) was derived in~\cite{Stein_C_1981_j-annals-statistics_estimation_mmnd}, and has then been intensively used in signal and image processing~\cite{chaux2008nonlinear,pesquet2009sure,ammanouil2018ada, pascal2020parameter,Pascal_B_2021_j-acha}.  
In most applications, the original Stein estimator is not usable directly and further strategies are necessary to yield a practical estimator.
The present work focuses on a strategy combining Finite Difference approximated differentiation and Monte Carlo averaging, which was first described by~\cite{ramani2008monte}. 
Making use of a Finite Difference step  $\epsilon > 0$ and a Monte Carlo vector $ \boldsymbol{\delta} \in \mathbb{R}^N$  drawn from $ \mathcal{N}(\boldsymbol{0}_N,\textbf{I}_N)$,  Finite Difference Monte Carlo (FDMC) SURE  is defined as:
    \begin{align}
    \label{eq:fdmcsure}
        \mathrm{SURE}_{\epsilon, \boldsymbol{\boldsymbol{\delta}}}(\boldsymbol{z};\Theta \lvert \sigma^2) := \| ( \widehat{\boldsymbol{u}}(\boldsymbol{z};\Theta) - \boldsymbol{z}) \|_2^2 +  \frac{2}{\epsilon} \langle \widehat{\boldsymbol{u}}(\boldsymbol{z} + \epsilon \boldsymbol{\delta};\Theta) - \widehat{\boldsymbol{u}}(\boldsymbol{z};\Theta), \sigma^2 \boldsymbol{\delta} \rangle - \sigma^2 N,
    \end{align}
Under the Lipschitzianity with respect to $\boldsymbol{z}$ of  $\widehat{\boldsymbol{u}}(\boldsymbol{z};\Theta)$ and the  natural unambiguity property $\widehat{\boldsymbol{u}}( \boldsymbol{0}_N,\Theta) = \boldsymbol{0}_N$, the \textit{true} inaccessible quadratic risk estimator~\eqref{eq:def_quad_risk} satisfies the following asymptotic unbiasedness property:
    \begin{align}
    \label{eq:sure_thm}
        \underset{\epsilon \longrightarrow 0}{\lim} \mathbb{E} [\mathrm{SURE}_{\epsilon, \boldsymbol{\delta}}(\boldsymbol{z};\Theta \lvert \sigma^2)] = Q[\widehat{\boldsymbol{u}}](\Theta),
    \end{align}
where 
 the expectation is to be understood on both the realizations of the observation noise $\boldsymbol{\zeta}$ appearing in Eq.~\eqref{eq:obs_model}, and the realizations of the Monte Carlo vector $\boldsymbol{\delta}$. 
 Eq.~\eqref{eq:sure_thm} ensures that, for small enough Finite Difference step $\epsilon$, and 
 provided that $N$ is large enough so that the Monte Carlo strategy is relevant, a minimizer of $\mathrm{SURE}_{\epsilon, \boldsymbol{\delta}}(\boldsymbol{z};\Theta \lvert \sigma^2)$ is an approximately optimal set of hyperparameters in terms of quadratic risk.\\

\noindent \textbf{Risk estimate minimization -- }    
The gradient-based strategy sketched at Algorithm~\ref{alg:mininize_sure} when \begin{equation}
\label{eq:sure}
\widehat{Q}(\boldsymbol{z};\Theta\lvert {\sigma^2}) = \mathrm{SURE}_{\epsilon, \boldsymbol{\delta}}(\boldsymbol{z};\Theta \lvert \sigma^2)
\end{equation} relies on the FDMC Stein Unbiased GrAdient Risk (SUGAR) estimate  defined as:
    \begin{align}
    \label{eq:fdmcsugar}
           \!\!\!\! \!\!\mathrm{SUGAR}_{\epsilon, \boldsymbol{\delta}}(\boldsymbol{z};\Theta \lvert \sigma^2 ) = 2\partial_{\Theta}\widehat{\boldsymbol{u}}(\boldsymbol{z};\Theta)^*(\widehat{\boldsymbol{u}}(\boldsymbol{z};\Theta)-\boldsymbol{z}) +  \frac{2}{\epsilon} \left(\partial_{\Theta}\widehat{\boldsymbol{u}}(\boldsymbol{z} + \epsilon \boldsymbol{\delta};\Theta) -\partial_{\Theta}\widehat{\boldsymbol{u}}(\boldsymbol{z};\Theta) \right)^* \sigma^2 \boldsymbol{\delta},
        \end{align}
where $\partial_{\Theta}\widehat{\boldsymbol{u}}(\boldsymbol{z};\Theta)$ denotes the Jacobian of the parametric estimator $\widehat{\boldsymbol{u}}(\boldsymbol{z};\Theta)$ with respect to the hyperparameters $\Theta$. The first proposal of such Stein Unbiased GrAdient Risk (SUGAR) estimate  was formulated by~\cite{deledalle2014stein} for i.i.d. Gaussian noise, and then extended in~\cite{pascal2020automated} for correlated noise. The main difficulty when it comes to practical implementation  is to evaluate the Jacobian matrices. 
In \cite{deledalle2014stein,pascal2020automated}, the authors proposed an efficient implementation when $\widehat{\boldsymbol{u}}(\boldsymbol{z};\Theta)$ is estimated from the resolution of a convex minimization problem of the form~\eqref{eq:BZ} while in this contribution we extend it in the context of interface detection involving a minimization problem such as \eqref{eq:dms_univariate} solved with SL-PAM described in Section~\ref{ssec:SL-PAM}.

Under technical assumptions such as Lipschitzianity of $\widehat{\boldsymbol{u}}(\boldsymbol{z};\Theta)$ with respect to $\Theta$ and $\boldsymbol{z}$, 
it has been proved in \cite{deledalle2014stein} that the quadratic risk estimator~\eqref{eq:fdmcsure} is weakly differentiable with respect to $\Theta$ and its gradient is exactly the gradient estimator recalled  in~\eqref{eq:fdmcsugar}, i.e.,
\begin{align}
\label{eq:prop1}
\partial_{\Theta} \mathrm{SURE}_{\epsilon, \boldsymbol{\delta}}(\boldsymbol{z};\Theta \lvert \sigma^2 ) = \mathrm{SUGAR}_{\epsilon, \boldsymbol{\delta}}(\boldsymbol{z};\Theta \lvert \sigma^2 ).
\end{align}
Eq.~\eqref{eq:prop1} ensures that the gradient estimate $\mathrm{SUGAR}_{\epsilon, \boldsymbol{\delta}}(\boldsymbol{z};\Theta \lvert \sigma^2 )$ is indeed the gradient of the quadratic risk estimate $\mathrm{SURE}_{\epsilon, \boldsymbol{\delta}}(\boldsymbol{z};\Theta \lvert \sigma^2 )$ with respect to hyperparameters $\Theta$, justifying the use of the gradient descent approach of Algorithm~\ref{alg:mininize_sure} to solve a particular instance of Problem~\eqref{eq:riskmin} when $\widehat{Q}(\boldsymbol{z};\Theta\lvert {\sigma^2})$ is defined by \eqref{eq:sure}. 
Additionally, FDMC SUGAR estimator introduced in~\eqref{eq:fdmcsugar} is an asymptotically unbiased estimator of the gradient of the \textit{true} quadratic risk, i.e.
\begin{equation}
\label{eq:sugar_thm}
\underset{\epsilon \longrightarrow 0}{\textrm{lim}} \mathbb{E}[\mathrm{SUGAR}_{\epsilon, \boldsymbol{\delta}}(\boldsymbol{z};\Theta \lvert \sigma^2 )] = \partial_\Theta Q[\widehat{\boldsymbol{u}}](\Theta),
\end{equation}
where $\partial_{\Theta} Q[\widehat{\boldsymbol{u}}](\Theta)$ is the \textit{true} inaccessible gradient of quadratic risk with respect to hyperparameters $\Theta$, and the expectation is to be understood on both the realizations of the observation noise $\boldsymbol{\zeta}$ appearing in Eq.~\eqref{eq:obs_model} and the realizations of the Monte Carlo vector $\boldsymbol{\delta}$.
The asymptotic unbiasedness of the gradient estimate ensures that the risk profile around its minimum is well enough reproduced by Stein-like estimates so that Algorithm~\ref{alg:mininize_sure} can be reasonably supposed to output a good approximation of the \textit{true} optimal hyperparameters.

\section{Iterative differentiation of SL-PAM for D-MS }\label{sec:diffSLPAM}

\label{sec:iterdiff}

        \subsection{Update of $\partial_\theta \tilde{\boldsymbol{u}}^{[k]}$}
        
               For the function $g$ given in Eq. \eqref{eq:DMSfunctions}, the update rule of $\widetilde{\boldsymbol{u}}^{[k]}$ reads:
        \begin{equation}
            \begin{aligned}
            \widetilde{\boldsymbol{u}}^{[k]} & = \boldsymbol{u}^{[k]}-\frac{\beta}{c_k}\nabla_{u} g(\boldsymbol{u}^{[k]},\boldsymbol{e}^{[k]}) \\
            & = \boldsymbol{u}^{[k]}-\frac{2\beta}{c_k} \displaystyle \sum_{i=1}^{\vert \mathcal{E} \vert}  (1-e^{[k]}_i)^2 \ \boldsymbol{\mathrm{D}}_i^*\boldsymbol{\mathrm{D}}_i \boldsymbol{u}^{[k]}.
            \end{aligned}
        \end{equation}
        
        The update of $\tilde{\boldsymbol{u}}^{[k]}$ can be written:
        \begin{equation}
            \tilde{\boldsymbol{u}}^{[k]} = \Gamma ( \boldsymbol{u}^{[k]},\boldsymbol{e}^{[k]},\tau^{[k]} ),
        \end{equation}
        where
         \begin{equation}
        \begin{cases}
         \Gamma(\mathbf{u},\mathbf{e},\tau) = \mathbf{u}-\tau \displaystyle \sum_{i=1}^{\vert \mathcal{E} \vert}  (1-\mathrm{e}_i)^2 \ \boldsymbol{\mathrm{D}}_i^*\boldsymbol{\mathrm{D}}_i \mathbf{u}, \\
           \tau^{[k]} = \frac{2\beta}{c_k}.
         \end{cases}
          \end{equation}

\noindent The derivative $\partial_{\theta} \mathbf{v}$ of $\mathbf{v} = \Gamma(\mathbf{u},\mathbf{e},\tau)$, for $\theta \in \{\beta,\lambda\}$, is:
        \begin{align}
            &\begin{aligned}
            \partial_{\theta}\mathbf{v}= 
            & \partial_{\theta}\mathbf{u} - \tau  \displaystyle \sum_{i=1}^{\vert \mathcal{E} \vert}  (1-\mathrm{e}_i)^2 \ \boldsymbol{\mathrm{D}}_i^*\boldsymbol{\mathrm{D}}_i \partial_{\theta} \mathbf{u} \\
            & + 2\tau \displaystyle \sum_{i=1}^{\vert \mathcal{E} \vert}  (1-\mathrm{e}_i) \partial_{\theta} \mathrm{e}_i \ \boldsymbol{\mathrm{D}}_i^*\boldsymbol{\mathrm{D}}_i \mathbf{u} \\
           & - \partial_\theta \tau \displaystyle \sum_{i=1}^{\vert \mathcal{E} \vert} (1-\mathrm{e}_i)^2 \mathbf{D}_i^*\mathbf{D}_i \mathbf{u},
            \end{aligned}
        \end{align}
        and, since $c_k =   \beta \gamma\Vert D \Vert^2$, for $\theta \in \{\beta,\lambda\}$,
        \begin{equation}
           \partial_\theta \tau^{[k]} =  \partial_\theta \left(\frac{2\beta}{c_k} \right) = \partial_\theta \left(\frac{2}{\gamma\Vert D \Vert^2 } \right) =0.
        \end{equation}
        
        \subsection{Update of $\partial_\theta \boldsymbol{u}^{[k+1]}$ }
        
                The function $f$ given in Eq. \eqref{eq:DMSfunctions} has a proximal operator with a closed form expression. Thus the update rule of $\boldsymbol{u}^{[k]}$ can be explicitely expressed as follows:
        \begin{equation}
            \boldsymbol{u}^{[k+1]} = \textrm{prox}_{\frac{1}{c_k} f} (\widetilde{\boldsymbol{u}}^{[k]}) = \frac{c_k \widetilde{\boldsymbol{u}}^{[k]}+\boldsymbol{z}}{c_k+1}.
        \end{equation}

\noindent For the update of $\boldsymbol{u}^{[k+1]}$, we thus have:   
        \begin{equation}
            \boldsymbol{u}^{[k+1]} = \Gamma(\tilde{\boldsymbol{u}}^{[k]},\boldsymbol{0}_{\vert \mathcal{E} \vert},\tau^{[k]}),
        \end{equation}
        where
        \begin{equation}
        \begin{cases}
         \Gamma(\mathbf{u},\mathbf{e},\tau) = 
          \frac{\tau \mathbf{u}+\boldsymbol{z}}{\tau+1}, \\
         \tau^{[k]} = c_k.
         \end{cases}
         \end{equation}
        
\noindent The derivative $\partial_{\theta} \mathbf{v}$ of $\mathbf{v} = \Gamma(\mathbf{v},\mathbf{e},\tau)$, for $\theta \in \{\beta,\lambda\}$, is:
            \begin{equation}
            \partial_{\theta} \mathbf{v} 
             = \frac{\tau}{\tau+1}\partial_{\theta} \mathbf{u} + \frac{\mathbf{u}-\boldsymbol{z}}{(\tau+1)^2} \partial_{\theta} \tau,
            \end{equation}
       and $\partial_{\beta} \tau^{[k]} = \gamma \| \boldsymbol{\mathrm{D}} \|^2$ and $\partial_{\lambda} \tau^{[k]} = 0$.

        \subsection{Update of $\partial_\theta \tilde{\boldsymbol{e}}^{[k]}$}

	 In Algorithm~\ref{alg:dms0}, the parameter for the update of $\boldsymbol{e}^{[k+1]}$ is set to $d_k = \beta \overline{d}  $ where $\overline{d} = \eta \Vert \boldsymbol{\mathrm{D}} \Vert_2^2$. This choice is discussed in \cite{foare2019semi} and gives good numerical results for some values of $\eta$. This setting simplifies the computation of the derivatives due to the linear dependance of $d_k$ with $\beta$. Indeed, the update rule of  $\widetilde{e}_{i}^{[k]}$, for every $i\in\{1,\ldots, \vert \mathcal{E} \vert\}$, can be rewritten as follows:
        \begin{align}
            \widetilde{e}_{i}^{[k]} = \frac{ (\boldsymbol{\mathrm{D}}_i \boldsymbol{u}^{[k+1]} )^2 + \frac{\overline{d} e_{i}^{[k]} }{2}}{(\boldsymbol{\mathrm{D}}_i \boldsymbol{u}^{[k+1]} )^2 + \frac{\overline{d}}{2}}.
        \end{align}
        
        Thus, for every $i\in\{1,\ldots, \vert \mathcal{E} \vert\}$,
        \begin{equation}
            \tilde{e}_{i}^{[k]}  = \Gamma_i(\boldsymbol{u}^{[k+1]},\boldsymbol{e}^{[k]},\tau^{[k]}),
        \end{equation}
        where 
        \vspace{-1cm}
        
        \begin{equation}
        \begin{cases}
        \Gamma_i(\mathbf{u},\mathbf{e},\tau) = \frac{(\boldsymbol{\mathrm{D}}_i \mathbf{u})^2 + \frac{\tau}{2} \mathrm{e}_i }{(\boldsymbol{\mathrm{D}}_i \mathbf{u})^2 + \frac{\tau}{2}}, \\
        \tau^{[k]}= \overline{d}.
        \end{cases}
        \end{equation}

        The derivative $ \partial_{\theta} \mathbf{v} $ of $\mathbf{v} = \Gamma(\mathbf{u},\mathbf{e},\tau)$, for $\theta \in \{\beta,\lambda\}$ and for $i\in\{1,\ldots, \vert \mathcal{E} \vert\}$, is:
       \begin{equation}
            \begin{aligned}
            \partial_{\theta} \mathrm{v}_i 
             = &  \frac{2 \boldsymbol{\mathrm{D}}_i \mathbf{u} \boldsymbol{\mathrm{D}}_i \partial_{\theta} \mathbf{u} }{ \left(\boldsymbol{\mathrm{D}}_i \mathbf{u} \right)^2 + \frac{\tau}{2} }  -  \frac{ 2  \boldsymbol{\mathrm{D}}_i \mathbf{u} \boldsymbol{\mathrm{D}}_i \partial_{\theta} \mathbf{u} \left[ \left(\boldsymbol{\mathrm{D}}_i \mathbf{u} \right)^2 + \frac{\tau}{2} \mathrm{e}_i  \right]}{ \left[  \left(\boldsymbol{\mathrm{D}}_i \mathbf{u} \right)^2 + \frac{\tau}{2} \right]^2} \\
              & +\frac{ \frac{\tau}{2} \partial_{\theta} \mathrm{e}_i}{  \left(\boldsymbol{\mathrm{D}}_i \mathbf{u} \right)^2 + \frac{\tau}{2}}  \\
              & +  \frac{ \frac{\partial_{\theta} \tau}{2} \mathrm{e}_i}{ \left(\boldsymbol{\mathrm{D}}_i \mathbf{u} \right)^2 + \frac{\tau}{2} }  -  \frac{ [(\boldsymbol{\mathrm{D}}_i \mathbf{u})^2 + \frac{\tau}{2} \mathrm{e}_i]\frac{\partial_{\theta} \tau}{2} }{ \left[  \left(\boldsymbol{\mathrm{D}}_i \mathbf{u} \right)^2 + \frac{\tau}{2} \right]^2}, \\
            \end{aligned}
        \end{equation}
        and $\partial_\theta \tau^{[k]} = 0$, which yields to the result in \cite[Proposition~2]{Lucas_CG_2022_sivp}.
        \subsection{Update of $\partial_\theta \boldsymbol{e}^{[k+1]}$ }
        
        For the update of $e_i^{[k+1]}$, for every $i\in\{1,\ldots, \vert \mathcal{E} \vert\}$, the fonction $h$ in Eq. \eqref{eq:DMSfunctions} is chosen with $h_i = \vert . \vert$. This latter function corresponds to the common $\ell_1$-norm penalization of the contour. The setting $d_k =  \beta \overline{d} $ simplifies this update which now reads:
        \begin{equation}
            e_i^{[k+1]} = \textrm{prox}_{\phi_i^{[k+1]} \vert . \vert}( \widetilde{e}_i^{[k]} ), \quad \phi_i^{[k+1]} = \phi_i(\boldsymbol{u}^{[k+1]}; \tau),
        \end{equation}
where
        \begin{equation}
        \begin{cases}
            \phi_i(\mathbf{u}, \tau) = \frac{\tau}{\left[ 2  \left(\boldsymbol{\mathrm{D}}_i \mathbf{u} \right)^2 + \overline{d} \right]}, \quad \tau = \frac{\lambda}{\beta},\\
            \textrm{prox}_{\phi_i(\mathbf{u},\tau) \vert . \vert}(\mathrm{e}_i) = \textrm{max}(0,1-\frac{\phi_i(\mathbf{u}; \tau)}{\vert \mathrm{e}_i \vert}) \mathrm{e}_i.
        \end{cases}
        \end{equation}       
        
 \noindent       For every $i\in\{1,\ldots, \vert \mathcal{E} \vert\}$,
         \begin{equation}
            e_i^{[k+1]} = \Gamma_i(\boldsymbol{u}^{[k+1]},\boldsymbol{\tilde{e}}^{[k]},\tau^{[k]}),
        \end{equation}
          where 
          \begin{equation}
          \begin{cases}
          \Gamma_i(\mathbf{u},\mathbf{e},\tau) = \textrm{prox}_{\phi_i(\mathbf{u};\tau) \vert . \vert}( \mathrm{e}_i),\\
           \tau^{[k]} = \frac{\lambda}{\beta}.
          \end{cases} 
          \end{equation}

\noindent        The derivative $ \partial_{\theta} \mathbf{v} $ of $\mathbf{v} = \Gamma(\mathbf{u},\mathbf{e},\tau)$, for $\theta \in  \{\beta,\lambda\}$ and for $i\in\{1,\ldots, \vert \mathcal{E} \vert\}$, is:
        \begin{equation}
      \begin{aligned}
        \partial_{\theta}\mathrm{v}_i  
        = & - \partial_{\mathbf{u}} \phi_i \partial_{\theta} \mathbf{u}  \frac{\mathrm{e}_i }{\vert \mathrm{e}_i \vert} \mathcal{I}_{\vert \mathrm{e}_i \vert > \phi_i(\mathbf{u},\tau) } +\partial_{\theta} \mathrm{e}_i \mathcal{I}_{\vert \mathrm{e}_i \vert > \phi_i(\mathbf{u},\tau) } \\
        & - \frac{\partial_\theta \tau}{ \left[ 2 \left(\boldsymbol{\mathrm{D}}_i \mathbf{u} \right)^2 + \overline{d} \right]}  \frac{\mathrm{e}_i}{\vert \mathrm{e}_i \vert} \mathcal{I}_{\vert \mathrm{e}_i \vert > \phi_i(\mathbf{u},\tau)},
        \end{aligned}
        \end{equation}
                with Jacobian matrices product
        \begin{align}
        \partial_{\mathbf{u}} \phi_i  \partial_{\theta} \mathbf{u}
        &= -\frac{\tau }{ \left[ 2  \left( \boldsymbol{\mathrm{D}}_i \mathbf{u} \right)^2 + \overline{d} \right]^2}  \left(  4    \boldsymbol{\mathrm{D}}_i \mathbf{u}  \boldsymbol{\mathrm{D}}_i  \partial_{\theta} \mathbf{u}   \right), 
        \end{align}
        and $\partial_\beta \tau^{[k]} = \frac{1}{\beta}$ and $\partial_\lambda \tau^{[k]} = -\frac{\lambda}{\beta^2} $.

\section{Proof of Proposition~2}
\label{sec:proof2}

For each Monte Carlo vector $\boldsymbol{\delta}^{(r)}$, the FDMC $\mathrm{SURE}_{\epsilon, \boldsymbol{\delta}^{(r)}}$ and $\mathrm{SUGAR}_{\epsilon, \boldsymbol{\delta}^{(r)}}$ estimates, defined at Eq.~\eqref{eq:fdmcsure}~and~\eqref{eq:fdmcsugar} are asymptotically unbiased and $\mathrm{SUGAR}_{\epsilon, \boldsymbol{\delta}^{(r)}}$, is the derivative of $\mathrm{SURE}_{\epsilon, \boldsymbol{\delta}^{(r)}}$, w.r.t. $\Theta$. 
Then, by linearity of both the limit $\lim_{\epsilon \rightarrow 0}$ and the summation over the $R$ Monte Carlo vectors, 
 the \textit{Monte Carlo averaged} estimates $\overline{\mathrm{SURE}}^R_{\epsilon, \boldsymbol{\Delta}}(\boldsymbol{z};\Theta \lvert \sigma^2 )$ and $\overline{\mathrm{SUGAR}}^R_{\epsilon, \boldsymbol{\Delta}}(\boldsymbol{z};\Theta \lvert \sigma^2 )$ are also unbiased and $\overline{\mathrm{SUGAR}}^R_{\epsilon, \boldsymbol{\Delta}}(\boldsymbol{z};\Theta \lvert \sigma^2 )$ is the derivative of $\overline{\mathrm{SURE}}^R_{\epsilon, \boldsymbol{\Delta}}(\boldsymbol{z};\Theta \lvert \sigma^2 )$ w.r.t. $\Theta$.

\section{Additional experiments}
\label{sec:add_exp}
\subsection{Algorithmic setup}

\noindent \textbf{SL-PAM} -- The minimization of the D-MS functional~\eqref{eq:dms_univariate} providing estimates of both the piecewise smooth image and its salient contours, is performed running Algorithm~\ref{alg:dms0}.
The stopping criterion, based on the objective function increments, is set to $\xi = 10^{-4}$, while the descent steps are tuned manually so as to obtain the fastest convergence, leading to $\gamma = 1.01$ and $d_k = \eta \beta  \Vert \boldsymbol{\mathrm{D}} \Vert_2^2$
with $\eta = 1.01 \times 10^{-3}$ 
following~\cite{foare2019semi}.

\noindent \textbf{Stein estimators} -- FDMC SURE~\eqref{eq:fdmcsure} is computed with a Finite Difference step 
\begin{align}
\label{eq:FDeps}
\epsilon = 2 \frac{\sigma}{N^{\alpha}}, \quad 0 < \alpha < 1
\end{align}
where $\sigma$ is the standard deviation of the noise on the observed image $\boldsymbol{z} \in \mathbb{R}^N$. 
Formula~\eqref{eq:FDeps} derives from a heuristic reasoning developed in \cite{deledalle2014stein}, in the context of $\ell_1$-norm penalization.
The dependency of the Finite Difference step on the size of the data  is controlled via the exponent $\alpha$, which is fixed at $\alpha = 0.3$ for all the numerical simulations.
In the systematic numerical experiments, four values of the number $R$ of realizations of the Monte Carlo vector $\boldsymbol{\boldsymbol{\delta}}^{(r)}$ are envisioned and systematically compared: $R \in \{1, 5,10,20\}$.

\noindent \textbf{BFGS algorithm} -- To perform the risk minimization described in Algorithm~\ref{alg:mininize_sure} for different choices of $ \widehat{Q}(\boldsymbol{z}; \Theta\ \lvert \sigma^2)$ and $\partial_{\Theta} \widehat{Q}(\boldsymbol{z}; \Theta\ \lvert \sigma^2)$ 
we used the GRadient-based Algorithm for Non-Smooth Optimization, implemented in GRANSO toolbox\footnote{\texttt{http://www.timmitchell.com/software/GRANSO/}}, consisting of the low memory BFGS quasi-Newton algorithm proposed in~\cite{curtis2017bfgs}
with box constraints, enabling to enforce positivity of $\beta$ and $\lambda$. 
The maximal number of iterations of BFGS Algorithm~\ref{alg:mininize_sure} is set to $T_{\max} = 20$, while the stopping criterion on the gradient norm is set to $10^{-8}$.
Further, it is well-documented that the initialization of quasi-Newton algorithms might drastically impact their convergence.
Hence, we propose a model-based strategy for initializing Algorithm~\ref{alg:mininize_sure}. 
Inspired from the initialization strategies proposed in~\cite{deledalle2014stein,pascal2020automated},  hyperparameters $\Theta = (\beta, \lambda)$ 
are initialized as 
\begin{align}
\beta^{(0)}  = \frac{N \sigma^2}{4\|\boldsymbol{\mathrm{D}}\boldsymbol{z}\|_2^2} \quad \text{and} \quad  \lambda^{(0)} = \frac{\beta^{(0)} \|\boldsymbol{\mathrm{D}}\boldsymbol{z}\|_2^2}{2 N},
\end{align}
while, for $\kappa = 0.9$,  the initial approximated inverse Hessian involved in the BFGS strategy is set to
{\footnotesize{\begin{equation}
{\footnotesize{\mathrm{H}^{(0)} = \mathrm{diag} \left( \left\lvert \frac{ \kappa\beta^{(0)} }{\partial_{\beta}\widehat{Q}(\boldsymbol{z}; \Theta^{(0)} \lvert \sigma^2)} \right\rvert,\left\lvert \frac{\kappa\lambda^{(0)}}{\partial_{\lambda}\widehat{Q}(\boldsymbol{z}; \Theta^{(0)}\ \lvert \sigma^2)} \right\rvert  \right).}}
\end{equation}}}

\begin{figure*}[!t]
\centering
\subfloat{\includegraphics[width=0.125\linewidth]{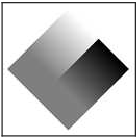}}
\hspace{0.1in}
\subfloat{\includegraphics[width=0.125\linewidth]{losange_1e-2.eps}}
\hspace{0.1in}
\subfloat{\includegraphics[width=0.125\linewidth]{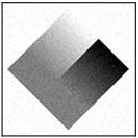}}
\hspace{0.1in}
\subfloat{\includegraphics[width=0.125\linewidth]{losange_5e-2.eps}}
\hspace{0.1in}
\subfloat{\includegraphics[width=0.125\linewidth]{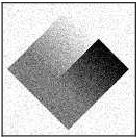}}
\hspace{0.1in}
\subfloat{\includegraphics[width=0.125\linewidth]{losange_10e-2.eps}}\\
\addtocounter{subfigure}{-6}
\subfloat[$\sigma = 0$]{\includegraphics[width=0.125\linewidth]{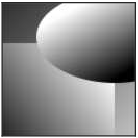}}
\hspace{0.1in}
\subfloat[$\sigma = 0.01$]{\includegraphics[width=0.125\linewidth]{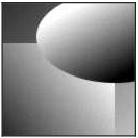}}
\hspace{0.1in}
\subfloat[$\sigma = 0.03$]{\includegraphics[width=0.125\linewidth]{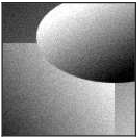}}
\hspace{0.1in}
\subfloat[$\sigma = 0.05$]{\includegraphics[width=0.125\linewidth]{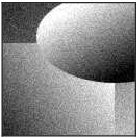}}
\hspace{0.1in}
\subfloat[$\sigma = 0.07$]{\includegraphics[width=0.125\linewidth]{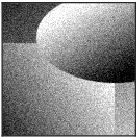}}
\hspace{0.1in}
\subfloat[$\sigma = 0.1$]{\includegraphics[width=0.125\linewidth]{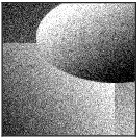}}
\caption{Piecewise smooth grey level images ($\sigma = 0$) corrupted by i.i.d. Gaussian noise with level $\sigma \in \{0.01,0.03,0.05,0.07,0.1\}$.}
\label{fig:geometries}
\end{figure*}

\begin{figure*}[t!]
\centering
\subfloat{\includegraphics[width=0.175\linewidth]{losange_denoised_1e-2.eps}}
\subfloat{\includegraphics[width=0.175\linewidth]{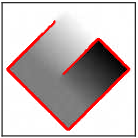}}
\subfloat{\includegraphics[width=0.175\linewidth]{losange_denoised_5e-2.eps}}
\subfloat{\includegraphics[width=0.175\linewidth]{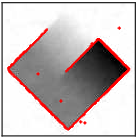}}
\subfloat{\includegraphics[width=0.175\linewidth]{losange_denoised_10e-2.eps}}\\
\subfloat{\includegraphics[width=0.175\linewidth]{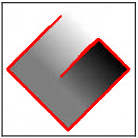}}
\subfloat{\includegraphics[width=0.175\linewidth]{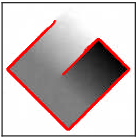}}
\subfloat{\includegraphics[width=0.175\linewidth]{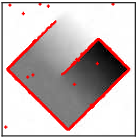}}
\subfloat{\includegraphics[width=0.175\linewidth]{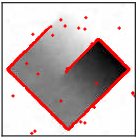}}
\subfloat{\includegraphics[width=0.175\linewidth]{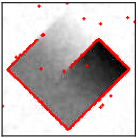}} \\
\subfloat{\includegraphics[width=0.175\linewidth]{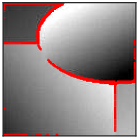}}
\subfloat{\includegraphics[width=0.175\linewidth]{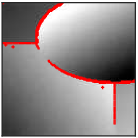}}
\subfloat{\includegraphics[width=0.175\linewidth]{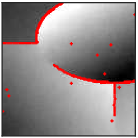}}
\subfloat{\includegraphics[width=0.175\linewidth]{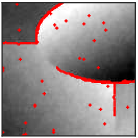}}
\subfloat{\includegraphics[width=0.175\linewidth]{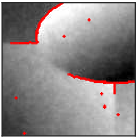}}\\
\addtocounter{subfigure}{-15}
\subfloat[$\sigma = 0.01$]{\includegraphics[width=0.175\linewidth]{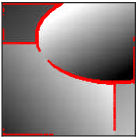}}
\subfloat[$\sigma = 0.03$]{\includegraphics[width=0.175\linewidth]{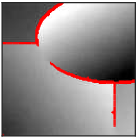}}
\subfloat[$\sigma = 0.05$]{\includegraphics[width=0.175\linewidth]{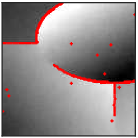}}
\subfloat[$\sigma = 0.07$]{\includegraphics[width=0.175\linewidth]{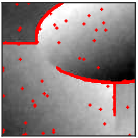}}
\subfloat[$\sigma = 0.1$]{\includegraphics[width=0.175\linewidth]{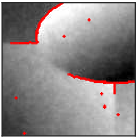}}
\caption{D-MS estimates $\widehat{\boldsymbol{u}}$ and $\widehat{\boldsymbol{e}}$ (superimposed in red) of piecewise smooth grey level images corrupted by i.i.d. Gaussian noise with noise level $\sigma$ displayed in Fig.~\ref{fig:geometries}. The D-MS hyperparameters are selected with the proposed  \textit{Averaged SUGAR D-MS} using either the true standard deviation $\sigma$ (first and third rows)  or the estimated standard deviation $\widehat{\sigma}$ (second and fourth rows). }
\label{fig:denoisedgeometries}
\end{figure*}

\subsection{Performance w.r.t noise level}
We now focus on the \textit{Averaged SUGAR D-MS}  for $R=5$ and assess its performance for the different geometries and noise levels displayed in Fig.~\ref{fig:geometries}.
Averaged PSNR for 10 realizations of the noise are reported in Table~\ref{table_SNR_noises}, 
denoised images and detected contours are displayed in Fig.~\ref{fig:denoisedgeometries} (1st and 3rd rows) for one realization of the noise.

As expected, the PSNR decreases as the noise level increases. Further, large error bars are observed mainly due to 
the variability of the gradient descent scheme in Algorithm~\ref{alg:mininize_sure}, the realizations of the Monte Carlo vector, or the SL-PAM non-convex minimization procedure.

\begin{table}[t!]
\caption{PSNR values with $95\%$ confidence interval for true noise level $\sigma$ and for estimated noise level $\hat{\sigma}$.}
\label{table_SNR_noises}
\centering
\setlength{\tabcolsep}{1mm}
\begin{tabular}{|c||c|c||c|c|}
\hline
\multirow{2}{*}{$\sigma$ } & \multicolumn{2}{c||}{Losange} &\multicolumn{2}{c|}{Ellipse}  \\
\hhline{|~|----|}  
& True $\sigma$ & Estimated $\hat{\sigma}$ &True $\sigma$ & Estimated  $\hat{\sigma}$\\
\hline
$0.01$ &  $43.85 \pm 0.06$ & $43.64 \pm 0.12$ & $41.77 \pm 0.06$ & $41.08 \pm 0.25$  \\
\hline
$0.03$ & $38.94 \pm 0.12$ & $38.89 \pm 0.13$ & $31.80 \pm 2.73$ & $33.64 \pm 2.23$ \\
\hline
$0.05$ &  $34.33 \pm 0.70$ & $34.71 \pm 0.12$ & $26.05 \pm 2.92$ & $26.37 \pm 2.80$ \\
\hline
$0.07$ & $31.60 \pm 0.52$ & $30.55 \pm 1.78$  & $26.73 \pm 2.46$ & $22.64 \pm 2.68$ \\
\hline
$0.1$ & $28.86 \pm 0.47$  & $27.24 \pm 1.84$ & $21.25 \pm 3.03$ & $22.61 \pm 3.33$  \\
\hline
\end{tabular}
\end{table}

\begin{figure*}
\hspace{-1cm}{\scriptsize{\begin{tabular}{p{2.7cm}p{2.7cm}p{2.7cm}p{1.5cm}p{1.5cm}p{1.5cm}p{1.5cm}}
Degraded & SUGAR T-ROF & SUGAR D-MS & Original &Degraded & T-ROF &D-MS\\
 & (State-of-the-art) & (Proposed) & zoom &zoom &zoom &zoom \\
  &SSIM: 0.85  & SSIM: 0.83 &&&\\
    &Jacc: 0.38 & Jacc: 0.43 &&&\\
 \includegraphics[height=2cm,width=2.7cm]{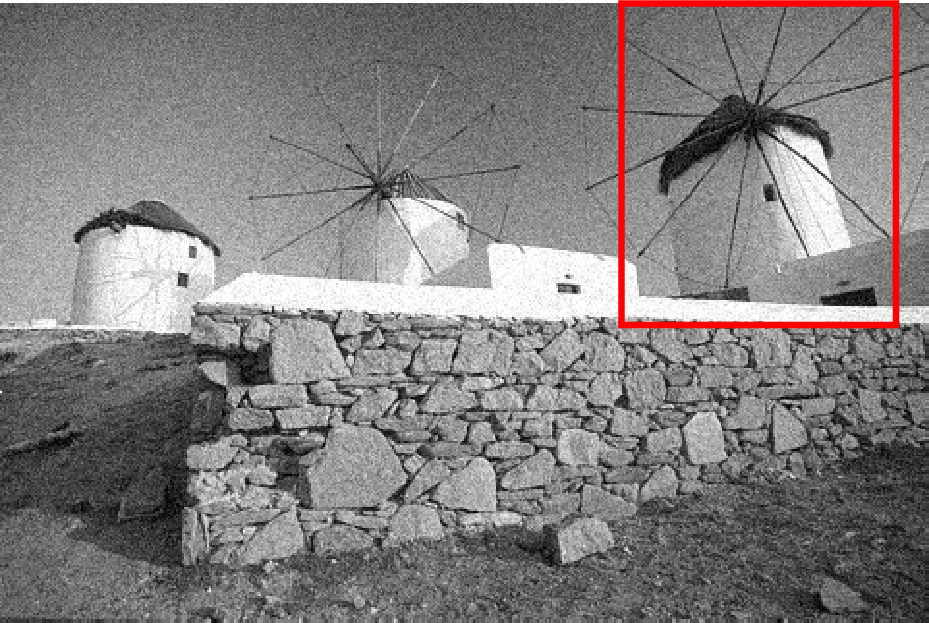}&
\includegraphics[height=2cm,width=2.7cm]{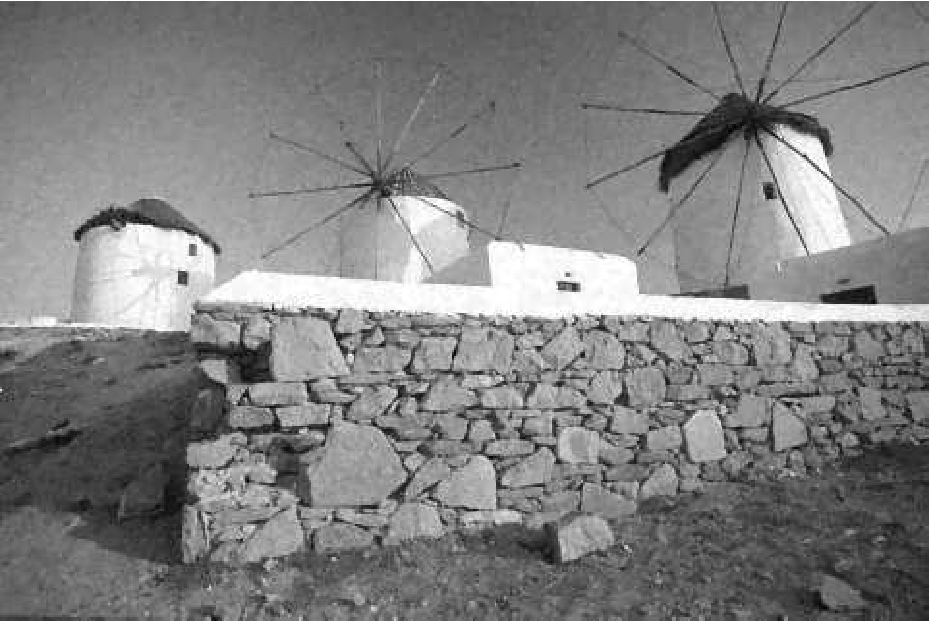}&
\includegraphics[height=2cm,width=2.7cm]{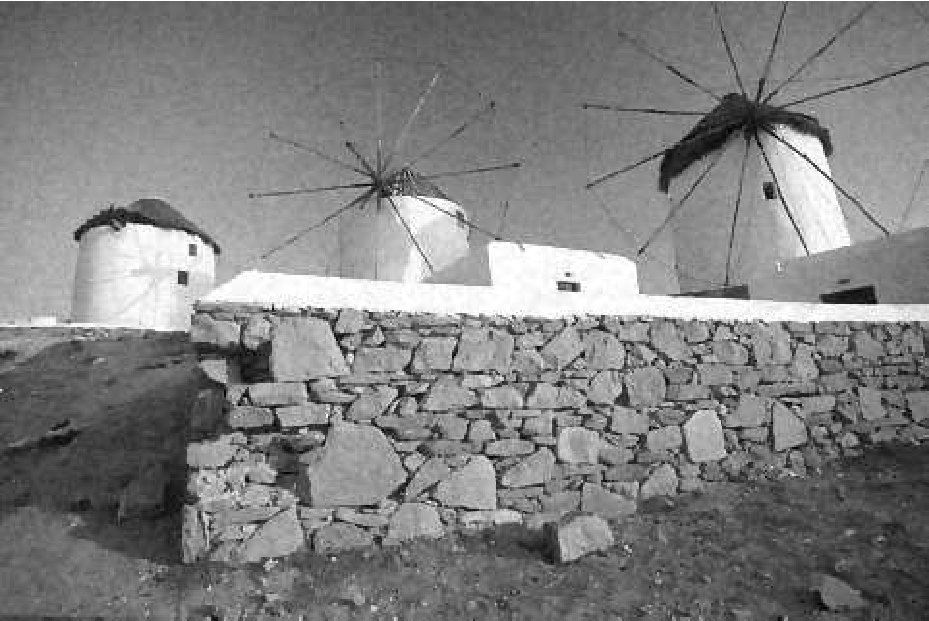}&
\includegraphics[height=2cm,width=1.5cm]{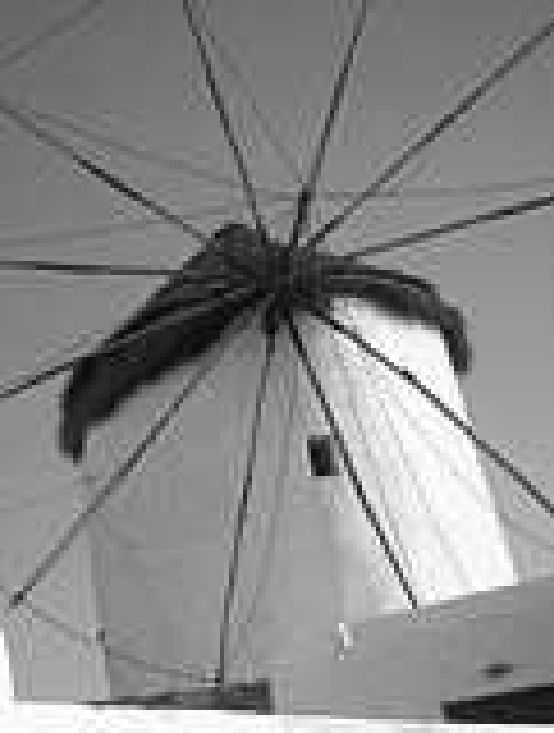}&
\includegraphics[height=2cm,width=1.5cm]{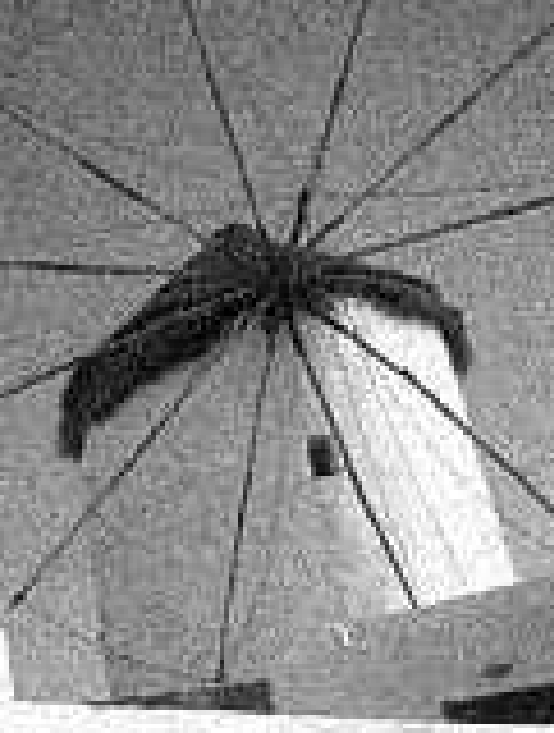}&
\includegraphics[height=2cm,width=1.5cm]{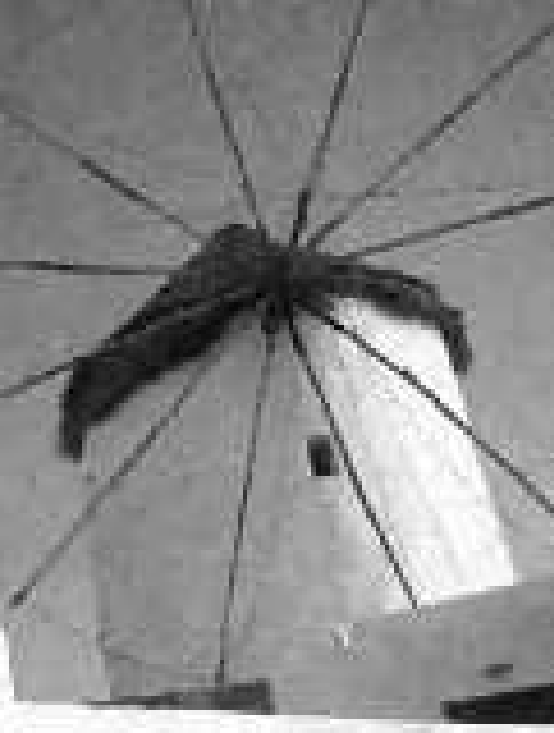}&
 \includegraphics[height=2cm,width=1.5cm]{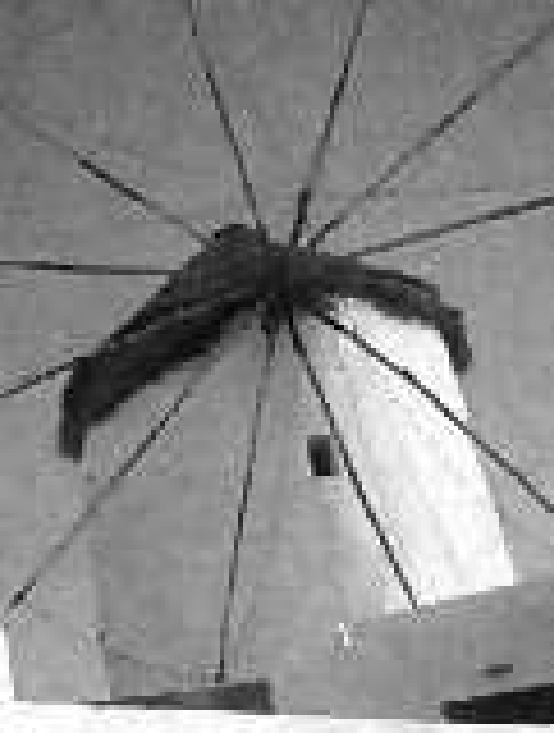}\\
&\includegraphics[height=2cm,width=2.7cm]{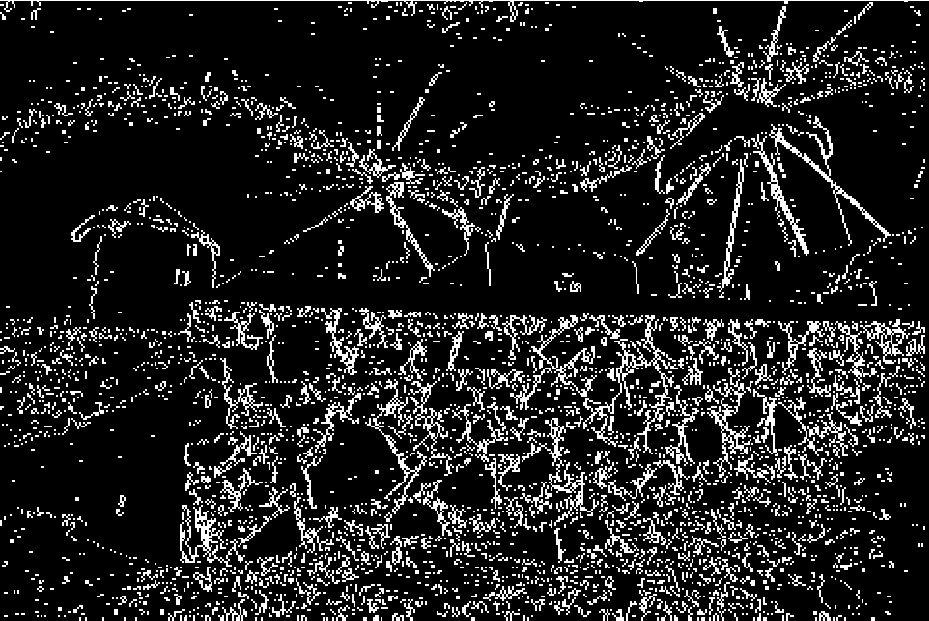}&
\includegraphics[height=2cm,width=2.7cm]{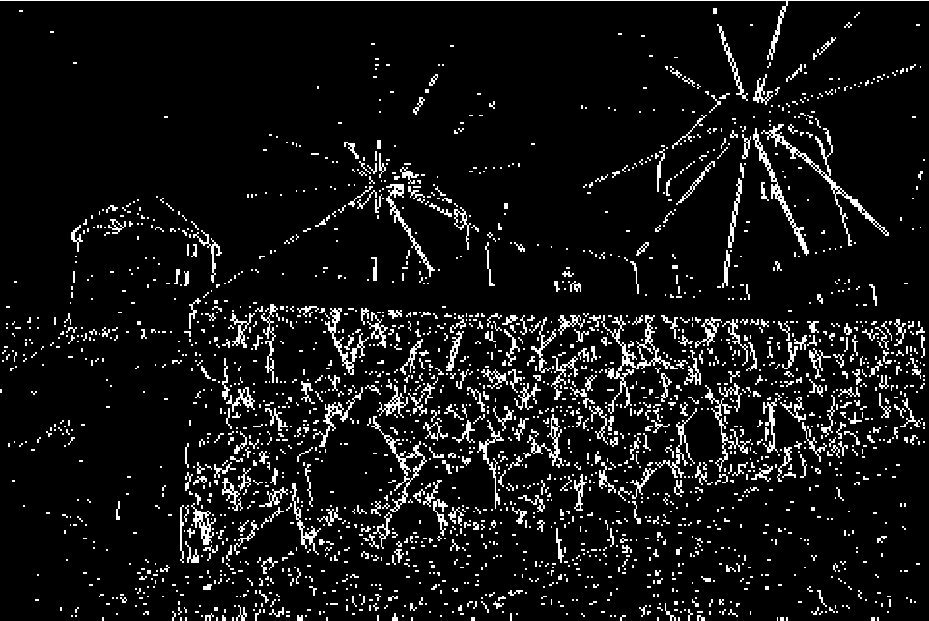} &
&
&\includegraphics[height=2cm,width=1.5cm]{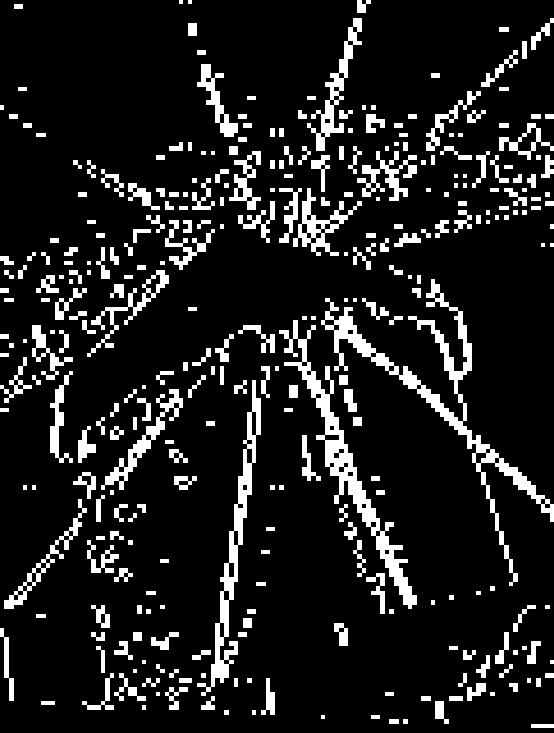}&
 \includegraphics[height=2cm,width=1.5cm]{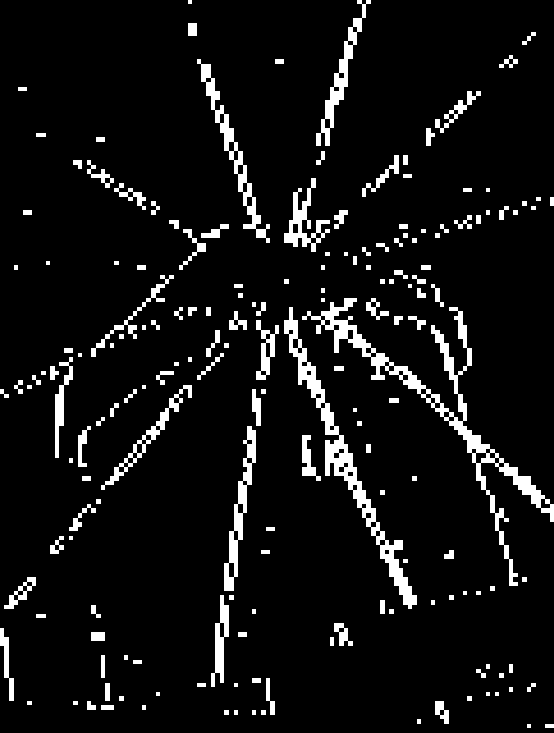}\\ 
&SSIM: 0.86  & SSIM: 0.84  &&&\\
&Jacc: 0.34 &Jacc: 0.61 &&&\\
 \includegraphics[height=2cm,width=2.7cm]{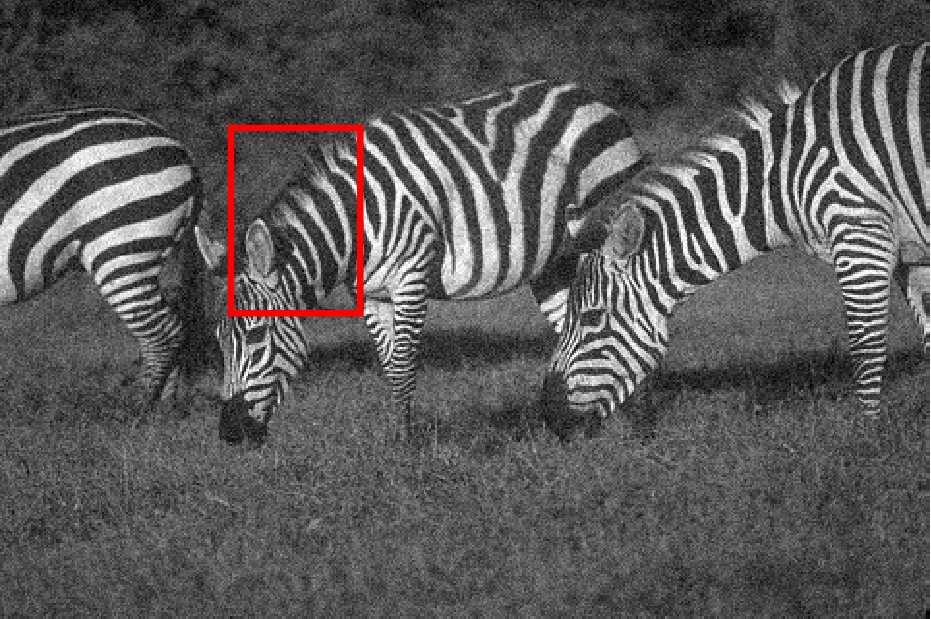}&
\includegraphics[height=2cm,width=2.7cm]{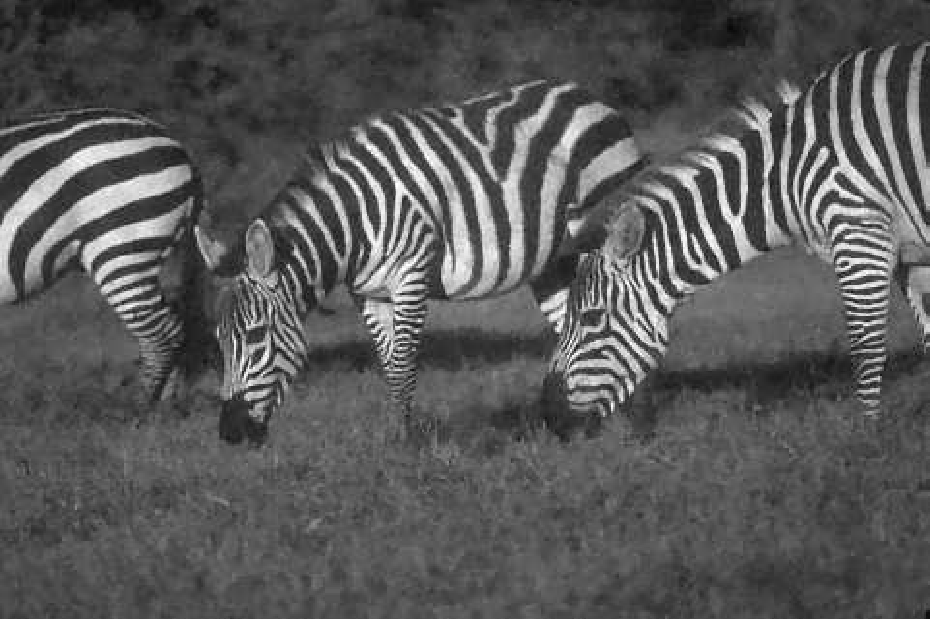}&
\includegraphics[height=2cm,width=2.7cm]{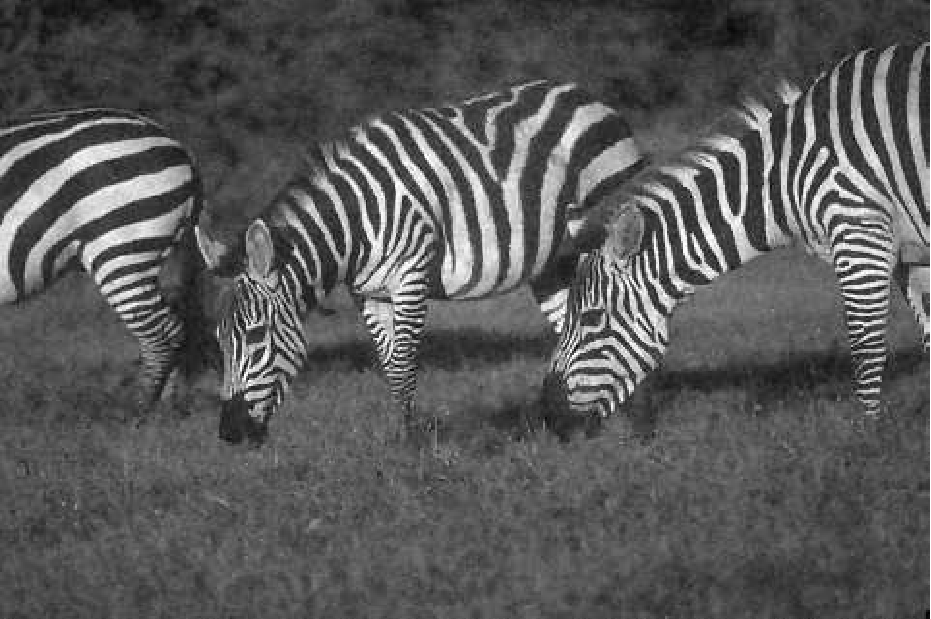}&
\includegraphics[height=2cm,width=1.5cm]{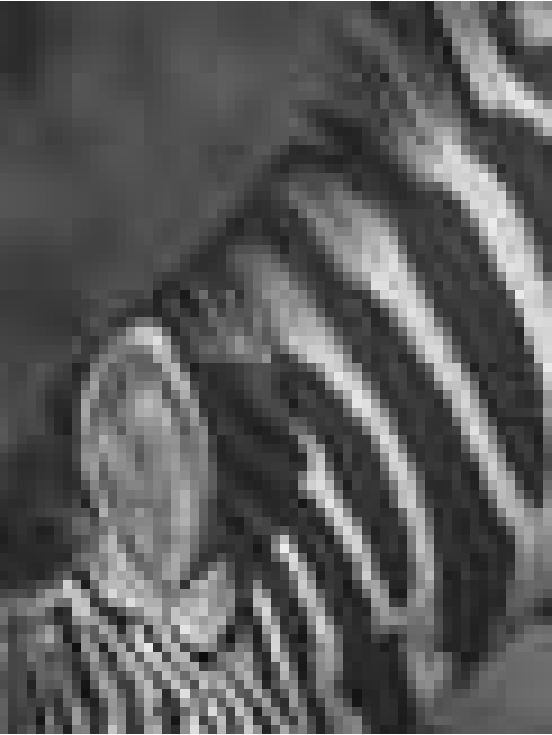}&
\includegraphics[height=2cm,width=1.5cm]{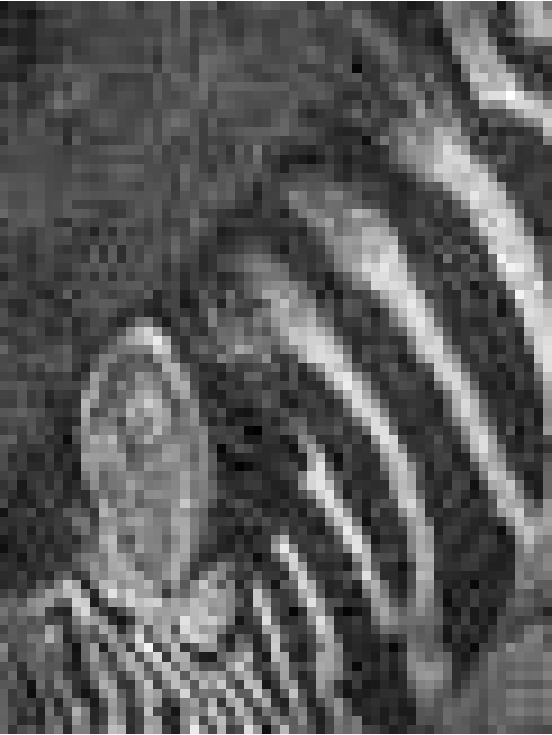}&
\includegraphics[height=2cm,width=1.5cm]{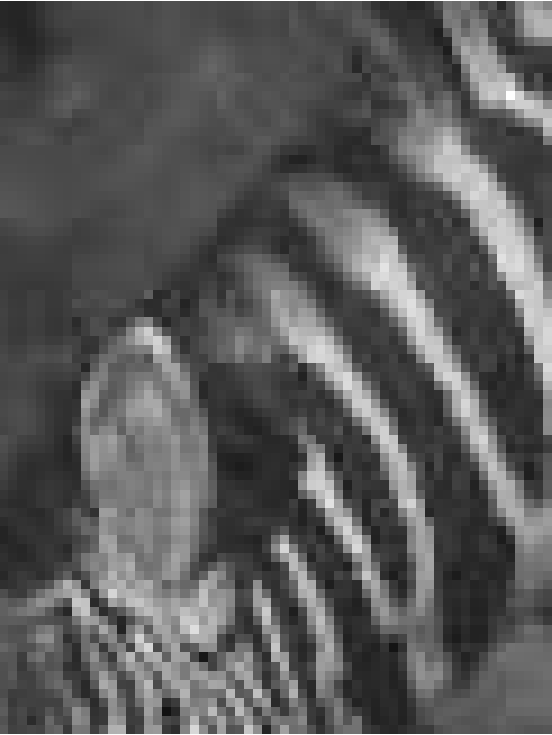}&
 \includegraphics[height=2cm,width=1.5cm]{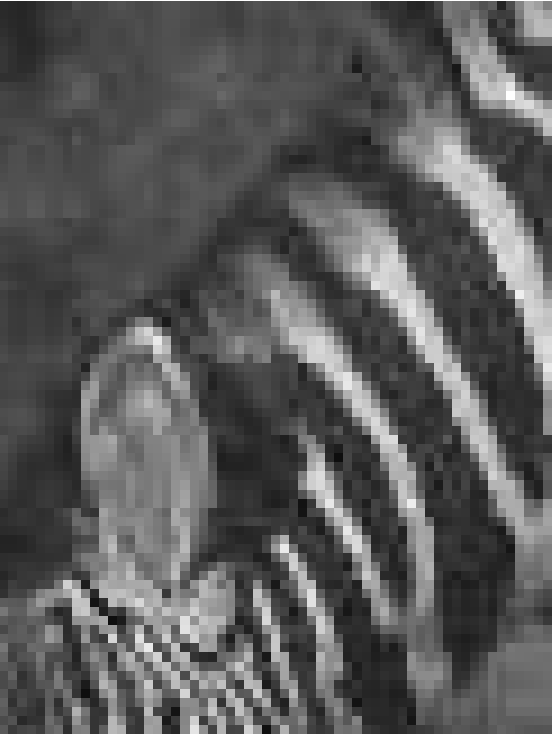}\\
&\includegraphics[height=2cm,width=2.7cm]{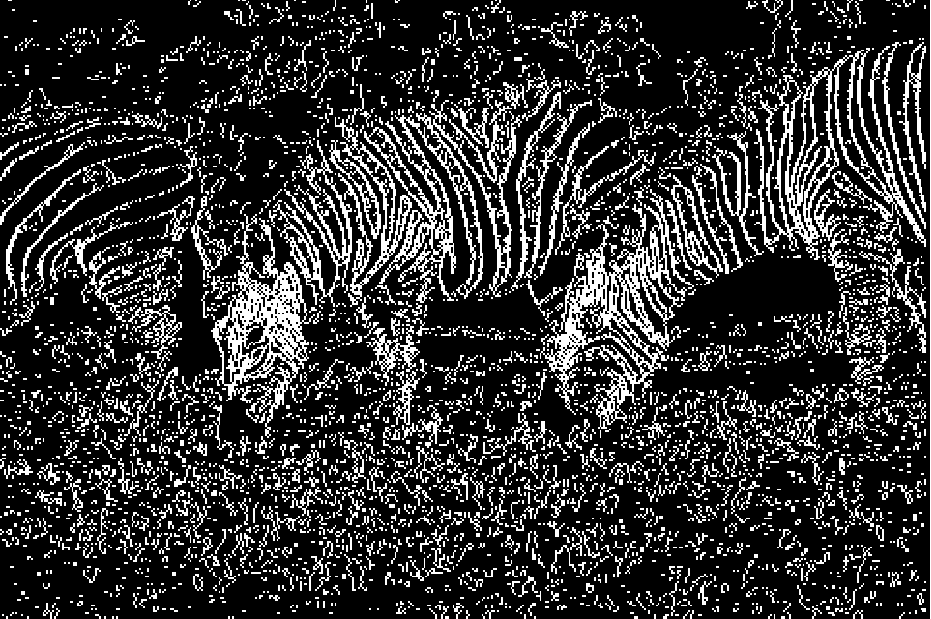}&
\includegraphics[height=2cm,width=2.7cm]{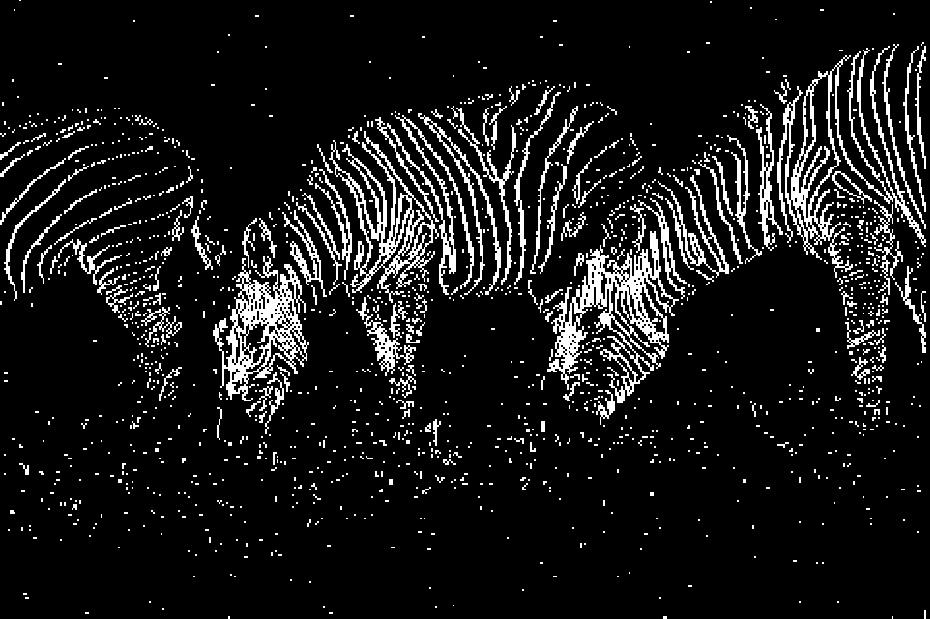} &
&
&\includegraphics[height=2cm,width=1.5cm]{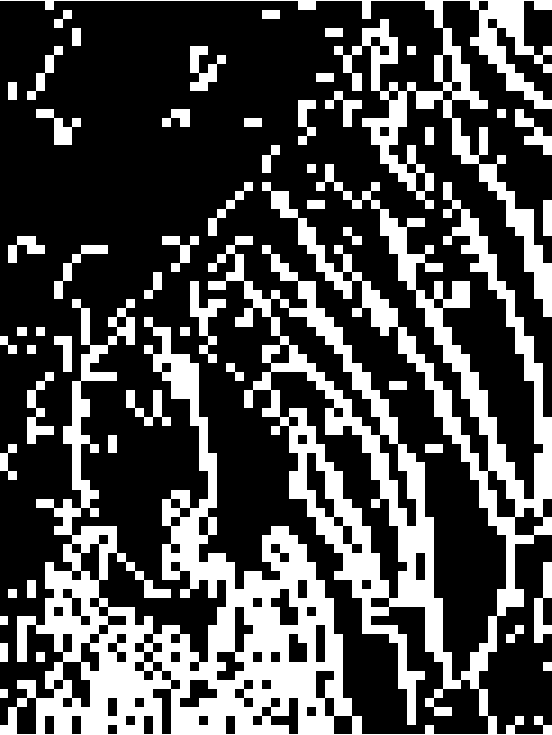}&
 \includegraphics[height=2cm,width=1.5cm]{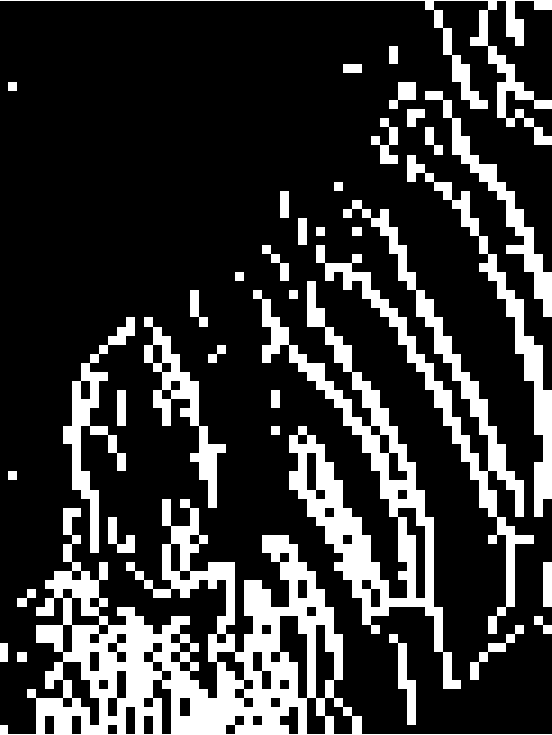}\\ 
   &SSIM: 0.87 & SSIM: 0.85 &&&\\
      & Jacc: 0.35 &Jacc: 0.49 &&&\\
 \includegraphics[height=2cm,width=2.7cm]{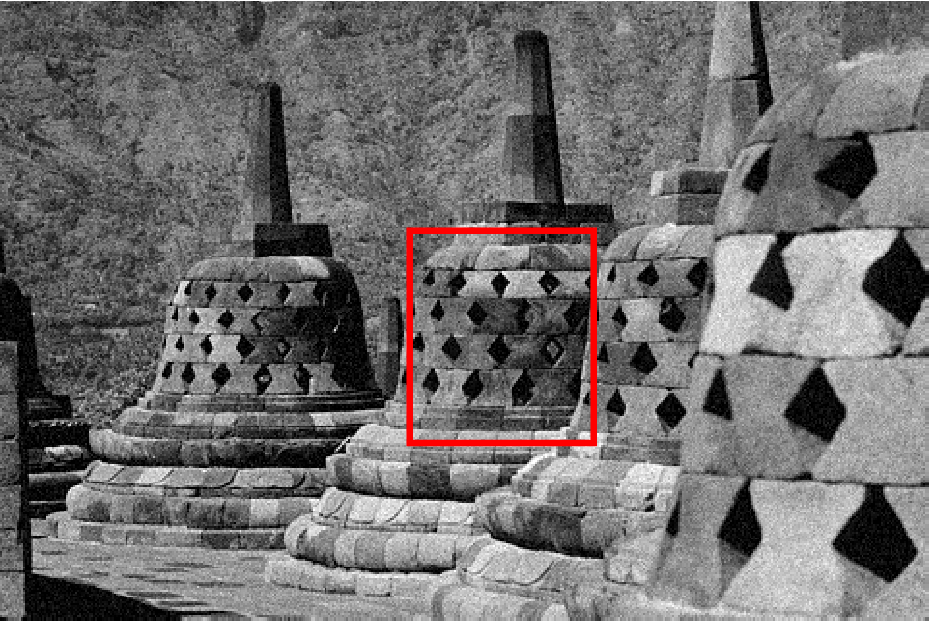}&
\includegraphics[height=2cm,width=2.7cm]{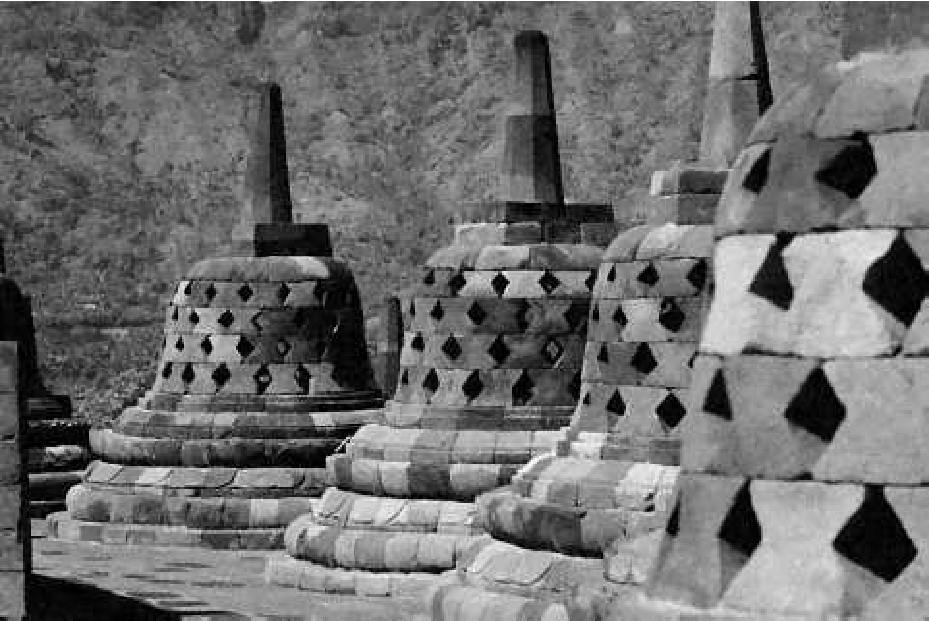}&
\includegraphics[height=2cm,width=2.7cm]{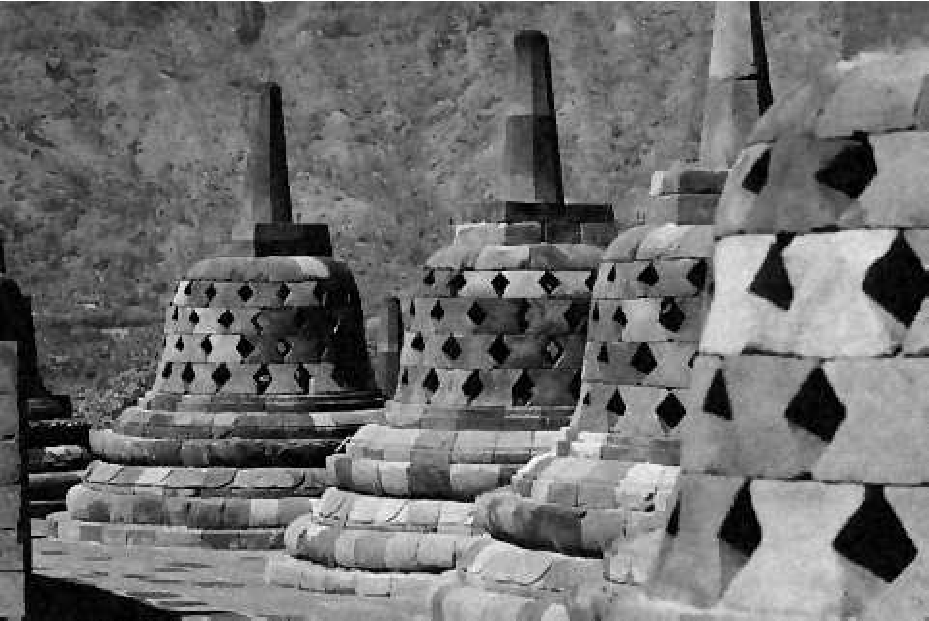}&
\includegraphics[height=2cm,width=1.5cm]{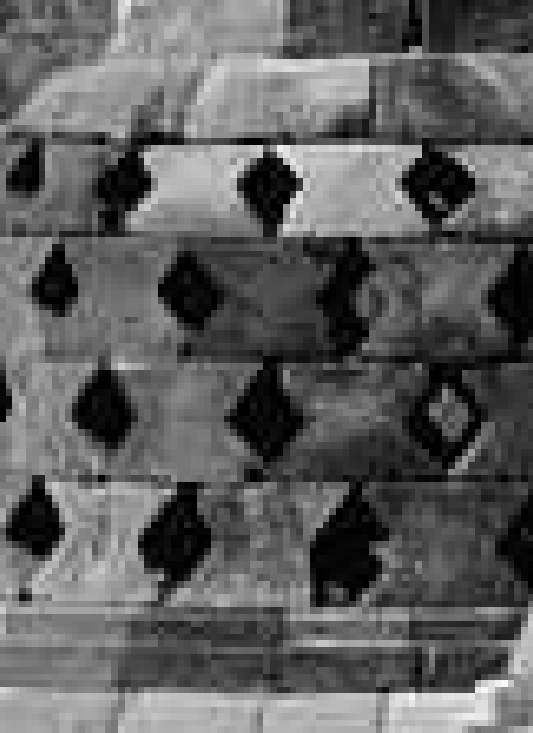}&
\includegraphics[height=2cm,width=1.5cm]{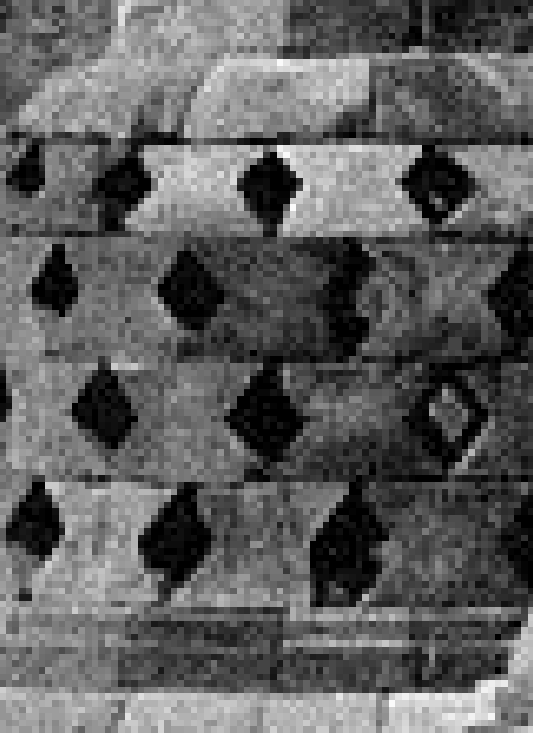}&
\includegraphics[height=2cm,width=1.5cm]{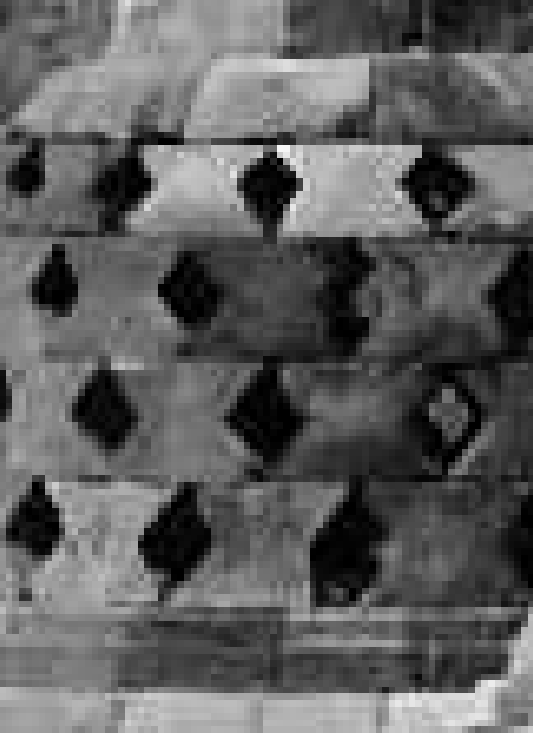}&
 \includegraphics[height=2cm,width=1.5cm]{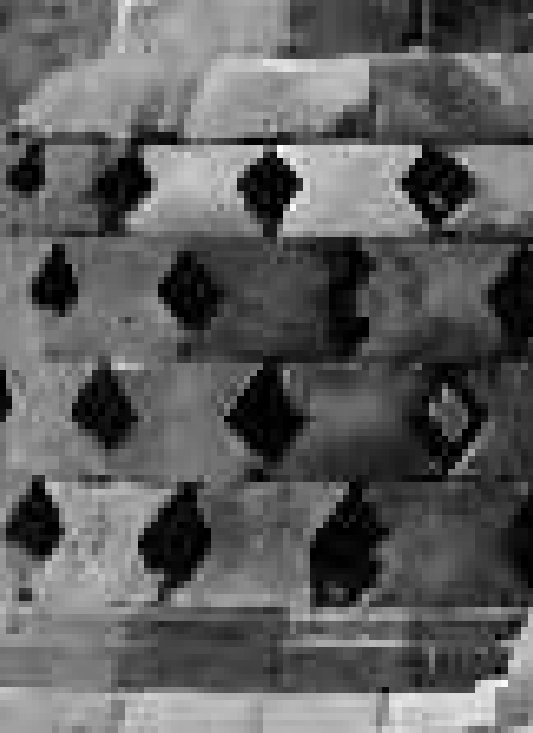}\\
&\includegraphics[height=2cm,width=2.7cm]{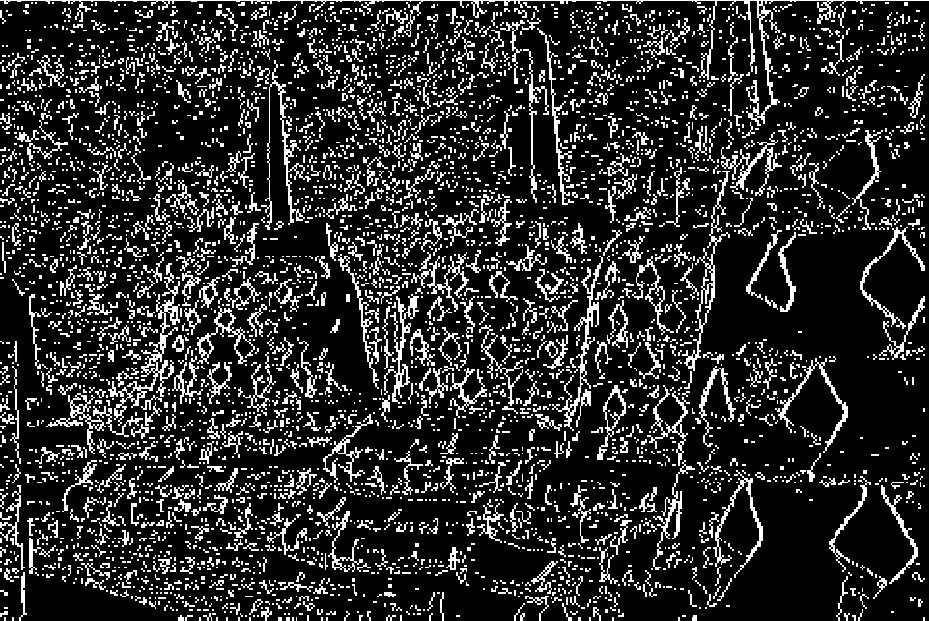}&
\includegraphics[height=2cm,width=2.7cm]{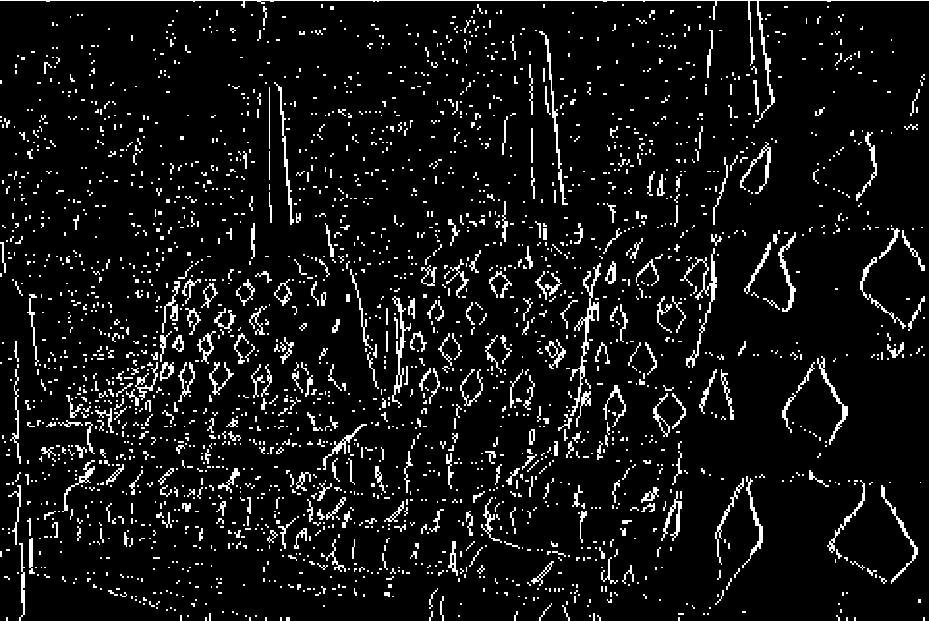} &
&
&\includegraphics[height=2cm,width=1.5cm]{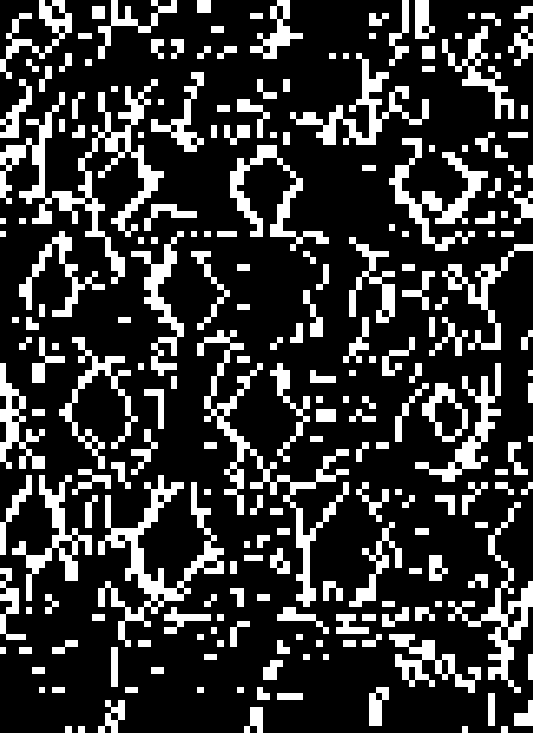}&
 \includegraphics[height=2cm,width=1.5cm]{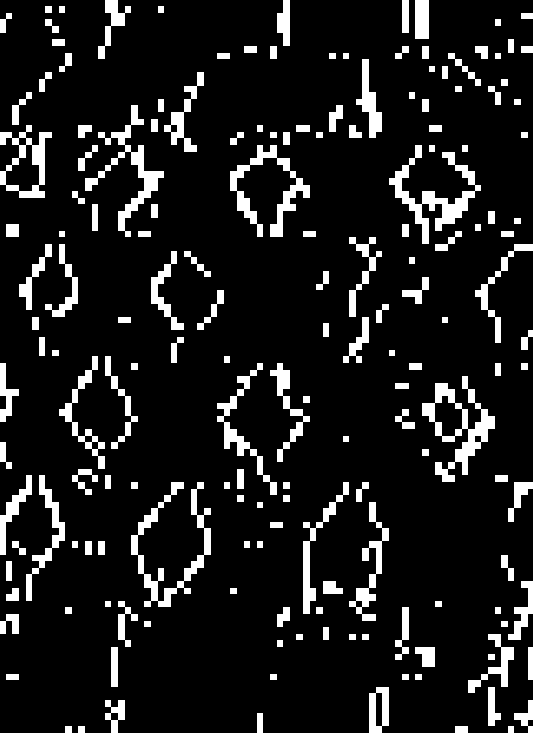}
\end{tabular}}}
\caption{Comparisons between SUGAR T-ROF and SUGAR D-MS for noisy images extracted from the BSD69 dataset \cite{Roth2009Fields}.\label{fig:real} }
\end{figure*}

\subsection{Impact of the estimation of $\sigma$}
\label{ssec:sig_est}
On real data, the noise level $\sigma$ needed to implement \textit{Averaged SUGAR D-MS} is unknown.  
The most usual method to estimate the noise standard deviation is the \textit{median absolute deviation} (MAD) of 2D discrete wavelet coefficients \cite{Donoho_D_1995_tit_den_st}:
\begin{equation}
	\hat{\sigma} = \frac{\textrm{MAD}\left(\left\lbrace\lvert \psi_{H,k} \rvert,\lvert \psi_{V,k} \rvert,\lvert\psi_{D,k}\rvert \vert k \in \lbrace1, \ldots, \frac{N}{4} \rbrace \right\rbrace \right)}{0.6745},
	\label{eq:sigest}
\end{equation}
where $\psi_{H,k},\psi_{V,k},\psi_{D,k}$ are the three wavelets coefficents (horizontal, vertical and diagonal) at the finest scale. 
Table~\ref{table_SNR_noises} 
presents the PSNR values with the estimated noise level $\hat\sigma$. In addition, some estimates are depicted in Fig.~\ref{fig:denoisedgeometries} (2nd and 4th rows).

The results with either estimated or true noise level 
are similar, 
attesting the robustness of \textit{Averaged SUGAR D-MS} to the estimation of the noise level.

\subsection{Real-world images}
The proposed automated joint denoising and contour detection procedure \textit{Averaged SUGAR D-MS}  is evaluated on real-world images extracted from BSD69 dataset \cite{Roth2009Fields} degraded with a Gaussian noise with $\sigma=0.05$. In our experiments we set $R=5$ and $\sigma$ has been estimated from noisy data following \eqref{eq:sigest}.
Denoised images and contours provided by the proposed data-driven \textit{Averaged SUGAR D-MS} strategy are compared with those yield by SUGAR T-ROF (a two-step procedure, consisting in, first, a piecewise constant denoising with automated tuning of the regularization parameter~\cite{deledalle2014stein}, followed by an iterative thresholding procedure~\cite{cai2013multiclass}). In Fig.~\ref{fig:real}, we can observe that the denoising performance are very close for both procedures (in terms of SSIM, SUGAR T-ROF is slightly better)  while the contour detection is significantly improved with the SUGAR D-MS procedure (which is confirmed when computing Jaccard index w.r.t contours obtained from the original image).

\end{document}